\documentclass[a4paper, UKenglish, cleveref, autoref, thm-restate]{lipics-v2021}
\nolinenumbers
\usepackage{verbatim}
\usepackage{dsfont}
\usepackage{tikz}
\usetikzlibrary{trees, arrows, patterns, shapes, snakes, fit, shadows, calc, automata, decorations.markings, calc, positioning}
\usepackage{todonotes}
\usepackage{mathtools}
\usepackage{amsmath}
\usepackage{subcaption}
\usepackage{graphicx}
\usepackage{float}

\newcommand\pr{{Proof}}
\newcommand\alt{{Alt}}
\newcommand\pred{{Pred}}
\newcommand\struc{\mathbb{A}}
\newcommand\nat{\mathbb{N}}
\newcommand\stack{\mathds{P}}
\newcommand\op{\mathcal{O}p}
\newcommand\Pred{\mathcal{P}red}
\newcommand\aut{\mathcal{A}}
\newcommand\gram{\mathcal{G}}
\newcommand\lm{LmD}

\title{On the equivalence between generating functions computed by memory transducers and enumerating functions produced by indexed grammars} 

\titlerunning{} 

\author{Vincent Ghigo}{Université de Bordeaux, CNRS, Bordeaux INP, LaBRI, UMR 5800, Talence, 33400, France}{vincent.ghigo@labri.fr}{https://orcid.org/0000-0002-1825-0097}{}

\authorrunning{V. Ghigo } 

\Copyright{Vincent GHIGO} 

\ccsdesc[500]{Theory of computation~Formal languages and automata theory}

\keywords{Indexed Grammar, Pushdown Transducer, Enumeration problem} 

\category{} 

\relatedversion{} 

\acknowledgements{I want to thank my PhD advisors V.Penelle and G.Sénizergues}

\EventEditors{John Q. Open and Joan R. Access}
\EventNoEds{2}
\EventLongTitle{42nd Conference on Very Important Topics (CVIT 2016)}
\EventShortTitle{CVIT 2016}
\EventAcronym{CVIT}
\EventYear{2016}
\EventDate{December 24--27, 2016}
\EventLocation{Little Whinging, United Kingdom}
\EventLogo{}
\SeriesVolume{42}
\ArticleNo{23}

\begin{document}

\maketitle

\begin{abstract}
We consider the sequences of natural integers that can be \emph{computed} by a deterministic transducer, with input in a structure $\struc$, output in $\nat$
and with memory the set of stacks of stacks of $\struc$. We show that these sequences are, exactly, the \emph{counting sequences} of formal languages \emph{generated} by unambiguous context-free indexed grammars (equivalently, the counting sequences of derivation trees of arbitrary context-free indexed grammars), 
with indexes in $\struc$. This general theorem applies, notably, to the set of natural integers endowed with the operation $n \mapsto n\stackrel{.}{-}1$ and the non-zero predicate, showing that the polynomial recurrences count exactly the \emph{index}-languages (where the parameter used for counting is the index itself).
\end{abstract}
\newpage
\section{Introduction}\label{sec:intro}
\paragraph*{The Context}
Relations between automata and grammars is a well-studied topic, in particular for comparing the class of languages they define \cite{Chomsky59}.\\
This work establishes a similar relation in order to contribute to answering two kinds of questions:\begin{description}
\item[Counting:]\label{test} Given a set $E(n)$ of combinatorial objects (words, trees, planar maps, etc ...) depending on the integer $n \in \nat$, what is the \emph{number} $u(n)$ of these objects? How can it be computed? by which kind of recurrence? or by which kind of transducer?
\item[Representation:] Given a sequence $u(n)$ of natural integers (defined by a recurrence or a transducer), can we represent $u(n)$ as the \emph{number} of objects of some set $E(n)$ of combinatorial objects?
\end{description}
Counting problems are central in combinatorics; therefore, they  have been widely studied  for many kinds of objects. 
Representation problems have well-known solutions for $\nat$-linear sequences and $\nat$-algebraic series; recently, they have been tackled in \cite{DBLP:journals/fuin/KotekM12,DBLP:journals/ita/AdamsFM13} stimulated by a question about D-finite series raised in (\cite{DBLP:conf/stacs/Bousquet-Melou05}, section 6).

\paragraph*{The result}
The main results of this article are \cref{thm:gram_to_trans} and \cref{thm:trans_to_gram}, which answer these two questions in the following context:
\begin{itemize}
    \item The counting problem is asked for the set $E(\sigma)$ of all trees generated by an indexed grammar, where indexes $\sigma$ belong to a structure $\struc$.
    \item The representation problem is asked for the sequences $u(\sigma)$ computed by a deterministic transducer whose input belongs to a structure $\struc$, 
    whose output lies in $\nat$, and whose memory consists of the set of stacks of stacks over $\struc$.
\end{itemize}
\paragraph*{The framework}
We place ourselves in a context in which we study computation models over an arbitrary structure $\struc$, following \cite{DBLP:journals/iandc/Engelfriet91}. 
This allows us to both give general theorems and use techniques that shed light on computation models themselves. 

We define  a transducer with memory over a structure $\struc$, as an automaton with memory in $\struc$ whose initial memory is given as input.

An indexed grammar on $\struc$, as introduced in \cite{DBLP:journals/jacm/Aho68}, is a context-free grammar in which each non-terminal is indexed by an element of $\struc$, and rules may depend on this element via guards; these rules can also apply operations of $\struc$ to the index.

As a consequence of the generalization to an arbitrary structure, we also give an answer to the representation problem for transducers with the set of $k$-stacks of $\struc$ as memory for all $k\geq2$.

\section{Preliminaries}\label{sec:preliminaries}
    \begin{description}
        \item[Structure]\label{desc:struct}
            A \emph{structure} is a tuple $<\mathcal{S}, \Pred, \op>$ where:
            \begin{itemize}
                \item $\mathcal{S}$ is a \emph{set of elements}. We denote by $\sigma$ an element of $\mathcal{S}$.
                \item $\Pred$ is a \emph{set of predicates} $\mathcal{S} \to \{\top, \bot\}$.
                \item $\op$ is a \emph{set of operations} on $\mathcal{S} \to \mathcal{S}$. We denote by $op$ an element of $\op$.
            \end{itemize}
        \item[Stack]\label{desc:def_stack}
            Let $\struc = \, <\mathcal{S}_{\struc}, \Pred_{\struc}, \op_{\struc}>$ be a structure and $\Gamma$ a finite set of symbols.\\
            A \emph{stack structure} $\stack(\Gamma, \struc)$ is a structure where :
            \begin{itemize}
                \item ${(\Gamma \times \mathcal{S}_{\struc})}^*$ is the \emph{set of elements}. We note $X[\sigma]$ an atom $(X,\sigma)\in(\Gamma \times \mathcal{S}_{\struc})$.
                \item $\Pred_{\struc} \cup \{\mathcal{P}_X \mid X \in \Gamma \}$ is the \emph{set of predicates}, with $\mathcal{P}_X(stk)$ being true if the topmost symbol of $stk$ is $X$.
                \item $\op_{\struc} \cup \{push(\gamma) \mid \gamma \in \Gamma^*\}$ is a \emph{set of operations}.
            \end{itemize}
            We denote by $X, Y, Z$ symbols of $\Gamma$, $stk$ an element of ${(\Gamma \times \mathcal{S}_\struc)}^*$, $X[\sigma]$ an atom in ${(\Gamma \times \mathcal{S}_\struc)}$ and $\sigma$ an index in $\mathcal{S}_\struc$. 
            Operations and predicates from $\struc$ extend to the  stack structure by acting/evaluating on the topmost atom of the stack. 
            More formally, we define the application $op$ to a stack as $op(X[\sigma]stk):= X[op(\sigma)]stk$ and the evaluation of $p\in\Pred_\struc$ on a stack as $p(A[\sigma]stk):= p(\sigma)$.\\
            Let $\gamma = \prescript{1}{}{X} \prescript{2}{}{X} \ldots \prescript{k}{}{X} \in \Gamma^*$. 
            The operation $push(\gamma)$ on $X[\sigma]stk$ erases the topmost atom $X[\sigma]stk$ and copy the index $\sigma$ in each symbol of $\gamma$.
            More formally, $push(\gamma)$ is defined as $push(\gamma)(X[\sigma]stk):= \prescript{1}{}{X}[\sigma]\prescript{2}{}{X}[\sigma] \ldots \prescript{k}{}{X}[\sigma]stk$.
            For the sake of clarity, we will note $push(\varepsilon)$ as $pop$.\\
            When considering nested stacks, for the sake of clarity, we index stack symbols according to their \emph{order}, with the outermost stack designated as order $1$.
            We will also label the stack operations by the order. 
            We write $\stack^k(\struc)$ for the order $k$ stacks on $\struc$ when the alphabets are not relevant.\\
            In this article, we consider stacks with \emph{bottom-of-stack symbol} $\#$ and the predicate $\mathcal{P}_\#$.
            \begin{figure}[h]
                \centering
                \begin{subfigure}{0.39\textwidth}
                \begin{tikzpicture}
                    \draw (0,0)node (q0) [anchor=west]{$X[\prescript{1}{}{\sigma}]Y[\prescript{2}{}{\sigma}]$} ;
                    \draw (5,0)node (q1) [anchor=west]{$Y[\prescript{1}{}{\sigma}]Z[\prescript{1}{}{\sigma}]Y[\prescript{2}{}{\sigma}]$} ;
                    \draw (5,1)node (q2) [anchor=west]{$X[op(\prescript{1}{}{\sigma})]Y[\prescript{2}{}{\sigma}]$} ;
                    \draw (5,-1)node (q3) [anchor=west]{$Y[\prescript{2}{}{\sigma}]$} ;
                    \draw [->](q0) to  node [align=center] (note1) [midway, above] {$push(YZ)$}(q1);
                    \draw [->](q0) to [bend left,in=160] node [align=center] (note1) [midway, above] {$op$}(q2);
                    \draw [->](q0) to [bend right,in=193] node [align=center] (note1) [midway, above] {$pop$}(q3);
                \end{tikzpicture}\caption{Visualization of operation on stack}\label{subfig:1}
                \end{subfigure}
                \hfill
                \begin{subfigure}{0.35\textwidth}
                \begin{tikzpicture}
                    \draw [-](0,2.5)--(0,0)--(4.5,0)--(4.5,2.5);
                    \draw [dashed](0,2.5)--(0,3);
                    \draw [dashed](4.5,2.5)--(4.5,3);
                    \draw (0.5,0.5)node (q2) [anchor=west]{$X_2[\prescript{1}{}{\sigma}]Y_2[\prescript{2}{}{\sigma}]Z_2[\prescript{3}{}{\sigma}]$} ;
                    \draw [-](0.6,0.2)--(4.35,0.2)--(4.35,0.8)--(0.6,0.8)--(0.6,0.2);
                    \draw (0,0.5)node (q2) [anchor=west]{$X_1$} ;
                    \draw (0.5,1.2)node (q2) [anchor=west]{$Z_2[\prescript{3}{}{\sigma}]X_2[\prescript{4}{}{\sigma}]$} ;
                    \draw [-](0.6,0.9)--(4.35,0.9)--(4.35,1.5)--(0.6,1.5)--(0.6,0.9);
                    \draw (0,1.2)node (q2) [anchor=west]{$Y_1$} ;
                    \draw (0.5,1.9)node (q2) [anchor=west]{$Y_2[\prescript{2}{}{\sigma}]X_2[\prescript{5}{}{\sigma}]Z_2[\prescript{6}{}{\sigma}]$} ;
                    \draw [-](0.6,1.6)--(4.35,1.6)--(4.35,2.2)--(0.6,2.2)--(0.6,1.6);
                    \draw (0,1.9)node (q2) [anchor=west]{$X_1$} ;
                \end{tikzpicture}
                \caption{Visualization of stack of stack }\label{subfig:2}
            \end{subfigure}
            \end{figure}
        \item[Memory transducer]\label{desc:def_trans}
            An \emph{$\struc$-transducer} is a tuple $\aut:= (\mathcal{Q}, T, \Delta, q_0, \struc, Start, End)$, such that:
            \begin{itemize}
                \item $\mathcal{Q}$ is the finite \emph{set of states} and $q_0 \in \mathcal{Q}$ is the initial state. We denote by $q, p, r, s$  elements of $\mathcal{Q}$.
                \item $T$ is the \emph{set of terminal symbols}. We denote by $a$ an element of $T$ and $\bar{a}$ an element of $T \cup \{\varepsilon\}$. 
                \item $Start \subseteq \mathcal{S}_\struc$ is the \emph{set of initial elements}.  
                \item $\Delta \subseteq \bigl((\mathcal{Q} \times \mathcal{F}) \times (\mathcal{Q} \times T \cup \{\varepsilon\} \times \op_\struc)\bigr)$ is the \emph{set of rules}, where $\mathcal{F}$ is the set of boolean formulae over  $\pred_\struc$.
                We note $(q,g)\to(p,\bar{a},op)$ a rule $\bigl((q,g),(p,\bar{a},op)\bigr)\in \Delta$. 
                \item $End \in \mathcal{Q} \times \mathcal{S}_\struc$ is the \emph{accepting configuration}.
            \end{itemize}
            \begin{figure}[h]
            \begin{center}
            \begin{tikzpicture}[shorten >= 1pt, node distance = 3.5cm, on grid, auto]
                \node [state,shape=ellipse, align=center, double] (q0)  {$q_0$}; 
                \node [] (q-1)[below = 1cm of q0]  {}; 
                \node [state,shape=ellipse, align=center] (q1) [right=of q0] {$q_1$}; 
                \node [state,shape=ellipse, align=center] (q4) [left=of q0] {$q_3$}; 
                \node [state,shape=ellipse, align=center] (q3) [below=of q0] {$q_2$}; 
                \draw [->] (q0) to [bend left] node [align=center] (note1) [midway, above] {$(\mathcal{P}_{A_2}\wedge \neq 0,\varepsilon,-1)$}(q1);
                \draw [->] (q1) to [bend left] node [align=center] (note1) [midway, right] {$(\top,\varepsilon,push_2(B_2A_2))$}(q3);
                \draw [->] (q0) to [bend left] node [align=center] (note1) [midway, left] {$(\mathcal{P}_{B_2}\wedge \neq 0,\varepsilon,-1)$}(q3);
                \draw [->] (q3) to [bend left] node [align=center] (note1) [midway, left] {$(\top,\varepsilon,push_1(A_1A_1))$}(q4);
                \draw [->] (q4) to [bend left] node [align=center] (note1) [midway, above] {$(\top,\varepsilon,pop_2)$}(q0);
                \draw [->] (q0) to [out=105, in=75, looseness=10] node [align=center] (note1) [midway,above ] {$(\mathcal{P}_{A_2}\wedge = 0,a,pop_1)$\\$(\mathcal{P}_{B_2}\wedge = 0,\varepsilon,pop_2)$\\}(q0);
                \draw [->] (q-1) to [] node [align=center]  [midway, above] {}(q0);
            \end{tikzpicture}  
                \caption{Example of $\stack^2(\nat)$-transducer computing $(i+1)!$}\label{fig:trans}
            \end{center}
        \end{figure}
            On $\aut$, we use the following semantics: 
            \begin{itemize}
                \item the one-step relation $\xrightarrow{\delta}$ with $\delta = \bigl((q, g_\delta) \to (p, \bar{a}_\delta, op_\delta)\bigr) \in \Delta$ is $\Bigl\{\bigl((q, \sigma), (p, \sigma')\bigr) \mid \sigma' = op_\delta(\sigma) \wedge \sigma \vDash g_\delta\Bigr\}$. 
                \item $\xrightarrow{\varepsilon}$ is the identity relation. 
                \item $\xrightarrow{\delta.c}:= \xrightarrow{\delta}.\xrightarrow{c}$ with $c \in \Delta^*$.
            \end{itemize}
            We call a \emph{run} $(q, \sigma) \xrightarrow{c} (p, \sigma')$:
            \begin{itemize}
                \item \emph{accepting} if $(p, \sigma') = End$. 
                \item If $\struc = \stack^m(\mathbb{B})$, for some $\mathbb{B}$ with $m \geq 1$:
                \begin{itemize}
                    \item \emph{well-nested} if operations of order $1$ in $c$ form a Dyck word, i.e. there is an equal number of $push_1$ and $pop_1$ in $c$, and for every prefix of $c$ there are more $push_1$ than $pop_1$. 
                    \item \emph{well-formed} if $c = c'.\delta$ with $op_\delta = pop_1$ and $c'$ is \emph{well-nested}. 
                \end{itemize}
            \end{itemize}
            We assume that $End\xrightarrow{c}(p,\sigma)$ if and only if $c=\varepsilon$.\\
            We call a $\struc$-transducer \emph{computing} if it is:
            \begin{itemize}
                \item \emph{Deterministic}: $\forall (q, \sigma) \in \mathcal{Q} \times \mathcal{S}_\struc, \exists^{\leq 1} \delta \in \Delta, \exists (p, \sigma') \in \mathcal{Q} \times \mathcal{S}_\struc, (q, \sigma) \xrightarrow{\delta} (p, \sigma')$.
                \item \emph{Total}: $\forall \sigma \in Start, (q_0, \sigma) \xrightarrow{*} End$.
                \item $Card(T) = 1$.
            \end{itemize}
            We say that $\aut$, with $T_{\aut}=\{a\}$, \emph{computes} $u:\mathcal{S}_\struc\to\nat$ if $\aut$ is \emph{computing} and $\forall \sigma \in Start, (q_0, \sigma) \xrightarrow{c} End \wedge \underset{\delta \in c}{\odot} \bar{a}_{\delta} = a^{u(\sigma)}$, where $\odot$ is a notation for concatenation.
        \item[Tree]\label{desc:def_tree}
            Let $\mathcal{S}$ be a set. We define inductively $\mathcal{T}(\mathcal{S})$, the set of \emph{$\mathcal{S}$-labelled trees}, as the set of tuples $(\sigma, l)$ where $l \in {\mathcal{T}(\mathcal{S})}^*$ is the list of subtrees and $\sigma \in \mathcal{S}$ is the label of the root. 
            If $l$ is empty, we call $(\sigma, l)$ a \emph{leaf}. For a tree $t:= (\sigma, l)$, we call the label of the root $rt(t):= \sigma$. For a list of trees $l = t_1, t_2, \ldots, t_k$, $rt(l):= rt(t_1)rt(t_2)\ldots rt(t_k)$.
            For a tree $t = (\sigma, l)$, we define the set of labels of the leaves $leaf(t):= \left\{
            \begin{array}{ll}
            \sigma & \mbox{if } l = \emptyset\\
            \underset{t' \in l}{\bigcup} leaf(t') & \mbox{else}
            \end{array}
            \right.$
        \item[Indexed grammar]\label{desc:def_gram}
            An \emph{$\struc$-grammar scheme} is a tuple $\gram_{scheme}:= (N, T, \struc, P)$, such that: 
            \begin{itemize}
                \item $N$ is the \emph{set of non-terminal symbols}. We denote by $A, B, C$ elements of $N$.
                \item $T$ is the \emph{set of terminal symbols}. We denote by $a$ an element of $T$.
                \item $\struc$ is a structure.
                \item $P \subseteq \Bigl((N \times \mathcal{F}) \times \bigl((T \cup N)^* \times \op_\struc\bigr)\Bigr)$ is the \emph{set of rules}, 
                where $\mathcal{F}$ is the set of boolean formulae over $\pred_\struc$. 
                We note $\bigl((A,g)\to(\gamma,op)\bigr)$ a rule $\bigl((A,g),(\gamma,op)\bigr)\in P$.
            \end{itemize}
            We define the \emph{derivation trees} of a $\struc$-grammar scheme as:
            \begin{itemize}
                \item Let $r = (A, \mathrm{g}) \to (\prescript{0}{}{\beta} \prescript{1}{}{A}  \prescript{1}{}{\beta}  \prescript{2}{}{A} \ldots \prescript{k}{}{A} \prescript{k}{}{\beta} , op) \in P$, with $\prescript{i}{}{\beta}  \in (T^*)$ and $\prescript{i}{}{A} \in N$. \\
                The one-step derivation $B[\sigma] \to Z$ is accepted by $r$ if, and only if, $B = A$, $\sigma \vDash \mathrm{g}$, and $Z = \prescript{0}{}{\beta} \prescript{1}{}{A}[op(\sigma)] \prescript{1}{}{\beta} \prescript{2}{}{A}[op(\sigma)] \ldots \prescript{k}{}{A}[op(\sigma)] \prescript{k}{}{\beta}$. 
                \item A \emph{derivation tree} of $\gram$ is inductively defined by:
                \begin{itemize}
                    \item $(a, \emptyset)$.
                    \item $(A[\sigma], l)$ for $l$ a list of \emph{derivation tree} such that $A[\sigma] \to rt(l)$ is accepted by a rule of $\gram_{scheme}$.
                \end{itemize}
            \end{itemize}
            We define $\gram$ an \emph{$\struc$-grammar} as a tuple $(N, T, \struc, P, S)$ such that $(N, T, \struc, P)$ is a $\struc$-grammar scheme and $S \in N$ that we call the \emph{axiom}.
            A \emph{derivation tree} $t$ of $\gram_{scheme}$ is a derivation tree of $\gram$ if $\exists \sigma \in \mathcal{S}_\struc, rt(t) = S[\sigma]$.
            \begin{figure}
                \centering
    \begin{tabular}{lll|lll}
        $(q_0,\mathcal{P}_{A}\wedge = 0)$&$\to$&$(a,id)$& 
        $(q_3,\top)$ &$\to$&$(q_0,pop)$\\
        $(q_0,\mathcal{P}_{A}\wedge\neq 0)$ &$\to$ &$(q_1,-1)$&
        $(q_1,\top)$ &$\to$&$(q_2,push(BA))$\\
        $(q_0,\mathcal{P}_{B}\wedge \neq 0)$ &$\to$&$(q_2,-1)$&
        $(q_2,\top)$ &$\to$&$(q_3,id)$\\
        $(q_0,\mathcal{P}_{B}\wedge=0)$&$\to$&$(q_0,pop)$\\
    \end{tabular}\\
    \begin{tikzpicture}[level distance=2cm,
                    sibling distance=3cm]
                    \node {${q_0}_{[A[0]\dots]}$}
                        child {
                            node {$a$}
                        };
    \end{tikzpicture}
                \vline 
    \begin{tikzpicture}[level distance=0.8cm,
                    sibling distance=4.5cm]
                    \node {${q_0}_{[A[i]\dots]}$}
                        child {
                            node {${q_1}_{[A[i-1]\dots]}$} 
                              child {
                                node {${q_2}_{[B[i-1]A[i-1]\dots]}$} 
                                  child {
                                    node {${q_3}_{[B[i-1]A[i-1]\dots]}$} 
                                      child {
                                        node {${q_0}_{[A[i-1]\dots]}$} 
                                          child [level distance=0.7cm]{node [draw,dashed,shape border uses incircle,isosceles triangle,shape border rotate=90, minimum height=5mm] {}}
                                      }
                                  }
                                  child {
                                    node {${q_0}_{[B[i-1]A[i-1]\dots]}$} 
                                      child {
                                        node {${q_2}_{[B[i-2]A[i-1]\dots]}$} 
                                          child {
                                            node {${q_3}_{[B[i-2]A[i-1]\dots]}$} 
                                              child {
                                                node {${q_0}_{[A[i-1]\dots]}$} 
                                                  child [level distance=0.7cm]{node [draw,dashed,shape border uses incircle,isosceles triangle,shape border rotate=90, minimum height=5mm] {}}
                                              }
                                          }
                                          child{
                                            node {${q_0}_{[B[i-2]A[i-1]\dots]}$} 
                                            child [level distance=0.7cm]{node [draw,dashed,shape border uses incircle,isosceles triangle,shape border rotate=90, minimum height=5mm] {}}
                                          }
                                      }
                                  }
                              }
                        };
                \end{tikzpicture}
                \caption{Example of an $\stack(\nat)$-grammar and derivation tree}
            \end{figure}
        \item[Alternating transducer]\label{desc:def_alt_trans}
            We will use alternating transducers as a technical object to make the transformation from transducer to grammar easier and more understandable.
            An \emph{alternating $\stack(\struc)$-transducer} $\aut_{\alt}:= (\mathcal{Q}, T, \Delta, q_0, \stack(\struc), Start, End)$ is a $\stack(\struc)$-transducer where transitions lead to conjunctions\footnote{Rigorously, conjontions are taken in free distributive lattice on $\mathcal{Q}\times(T\cup\varepsilon)\times \op$} of a state, a non-terminal or $\varepsilon$, and an operation. 
            \begin{itemize}
                \item Let $\delta = (q, \mathrm{g}) \to (\prescript{1}{}{q}, \prescript{1}{}{\bar{a}}, \prescript{1}{}{op}) \wedge \ldots \wedge (\prescript{k}{}{q}, \prescript{k}{}{\bar{a}}, \prescript{k}{}{op}) \in \Delta$. 
                The one-step relation $\xrightarrow{\delta}$ is
                $\Bigl\{\bigl((q, \sigma), \varphi\bigr) \mid (\sigma \vDash \mathrm{g}) 
                \wedge \bigl(\varphi = (\prescript{1}{}{q}, \prescript{1}{}{\sigma}) 
                \wedge \ldots \wedge (\prescript{k}{}{q}, \prescript{k}{}{\sigma})\bigr) 
                \wedge \bigl(\bigwedge_{1 \leq i \leq k} \prescript{i}{}{\sigma} = \prescript{i}{}{op}(\sigma)\bigr)\Bigr\}$. 
                \item A \emph{run tree} of $\aut_{\alt}$ is, inductively, defined by:
                \begin{itemize}
                    \item $\bigl((q, stk), \emptyset\bigr)$.
                    \item $\bigl((q, stk), l\bigr)$ with $l$ a \emph{run tree} such that $rt(l) = (\prescript{1}{}{q}, \prescript{1}{}{stk}), \ldots, (\prescript{k}{}{q}, \prescript{k}{}{stk})$ 
                    and there exists $\delta \in \Delta$ for which $(q, stk) \xrightarrow{\delta} \bigwedge_{1 \leq i \leq k} (\prescript{i}{}{q}, \prescript{i}{}{stk})$.  
                \end{itemize} 
                \item An \emph{accepting run tree} of $\aut_{\alt}$ is defined as a run tree where all leaves are $(End, \emptyset)$.
                \item We write $(q, stk) \xrightarrow{\alpha} \bigwedge_{1 \leq i \leq k} (\prescript{i}{}{q}, \prescript{i}{}{stk})$  with $\alpha \in \mathcal{T}(\Delta)$ 
                if there exists a run tree $t$ of $\aut_{\alt}$ rooted in $(q, stk)$ such that $leaf(t) = \bigl\{(\prescript{1}{}{q}, \prescript{1}{}{stk}), \ldots, (\prescript{k}{}{q}, \prescript{k}{}{stk})\bigr\}$, 
                the trees $\alpha$ and $t$ are equivalent up to relabelling and the rule used from a node in $t$ to its children is the label of the corresponding node in $\alpha$. 
                \item We call a run tree \emph{well-formed} if all of its branches are \emph{well-formed} runs. 
            \end{itemize}
        \item[Symbolic runs]\label{desc:def_symb_run}
            Observe that stack operations and predicates only touch the topmost of the stack. 
            Because of this, in transducers and alternating transducers, if there is a run (resp. run tree) from $(q, stk)$, then for every stack $r$, 
            there is a run (resp. run tree) with the same transitions from $(q, stk.r)$, as $r$ is never observed by said run. 
            To deal with this fact, in transducers and alternating transducers, when dealing with sub-runs of the principal run, 
            we use the notion of so-called \emph{symbolic runs} to represent sets of runs with the same transitions (resp. tree), with a symbol $\Omega$ to represent the bottom of the stack that is never observed. 
            This notion is crucial to our proofs, as we show that derivation trees of the grammars we construct are in bijection with symbolic runs of transducers, rather than with concrete runs.
    \end{description}
\section{From indexed grammar to memory transducer}\label{sec:gram_to_trans}
    In this part, we prove the following theorem:
    \begin{theorem}\label{thm:gram_to_trans}
    Let $\gram$ be a $\struc$-indexed grammar, without infinite derivations, producing $u(\sigma)$ derivation trees from the axiom indexed by $\sigma$ for all $\sigma \in \mathcal{S}_\struc$. 
    There exists a $\stack^2(\struc)$-transducer that \emph{computes} $u(\sigma)$.
    \end{theorem}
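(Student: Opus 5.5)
We build a deterministic $\stack^2(\struc)$-transducer that enumerates, depth-first and in a fixed order, all derivation trees of $\gram$ rooted at $S[\sigma]$, and emits one $a$ per complete tree found. The order-2 stack encodes the current partial derivation tree together with the pending choices; the order-1 stack inside each atom encodes the non-terminals still to be expanded at a given node. The structure of the construction mirrors the example of \cref{fig:trans}. We fix a total order on the rules of $P$, and first normalise $\gram$ so that each right-hand side contains only non-terminals, with terminals contributing a fixed factor $1$. This is harmless, since terminals do not change the number of derivation trees. We also make the guards mutually testable, so that for each non-terminal $A$ and index $\sigma$ the transducer can decide by boolean formulae which rules are applicable.

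The first step is to reduce counting to enumeration of a backtracking search. A configuration of the search is the list of pending atoms $A[\sigma]$ that remain to be derived. It is stored as an order-1 stack $\prescript{1}{}{A}[\sigma_1]\cdots\prescript{k}{}{A}[\sigma_k]\#$, which is the top order-2 element. Each order-2 level is labelled by an order-2 symbol recording the last rule tried at that point, since rule labels are finite. A step works as follows. Look at the top atom $A[\sigma]$ of the top order-1 stack. Copy the order-2 stack top (a $push_2$ duplicates the order-1 stack, as $push$ copies the index into each symbol). Select the first applicable rule $r=(A,g)\to(\prescript{1}{}{A}\cdots\prescript{m}{}{A},op)$, record $r$ in the order-2 symbol, and in the copy replace $A[\sigma]$ by $\prescript{1}{}{A}[op(\sigma)]\cdots\prescript{m}{}{A}[op(\sigma)]$. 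This replacement is exactly done by applying $op$ and then $push_1(\prescript{1}{}{A}\cdots\prescript{m}{}{A})$, thanks to the copy semantics of $push$. When the top order-1 stack is reduced to $\#$, a complete derivation tree has been found. The transducer then outputs $a$ and backtracks: $pop_2$ returns to the previous choice point, where the stored rule label tells us to try the next applicable rule for the same atom. When no rule remains, we $pop_2$ again. The run ends when the bottom order-2 level is exhausted, after which the transducer moves into the configuration $End$. For this to make sense, $End$ must be a fixed configuration; we arrange this by emptying back to $\#$ and a designated state.

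The second step is correctness. Leftmost derivations from $S[\sigma]$ are in bijection with derivation trees, because expanding the leftmost pending atom is canonical. The search explores the tree of leftmost-derivation prefixes, visiting each node once and enumerating children in the rule order. Hence the number of emitted $a$'s equals the number of leaves of that search tree at which the sentential form is empty, which is $u(\sigma)$. Determinism holds because at each (state, top symbols, guard valuation) we prescribe a unique action. We use the predicates $\mathcal{P}_X$ for both orders, and the guards of $\struc$ evaluated on the top index.

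The main obstacle is totality, i.e.\ termination of the search. The hypothesis that $\gram$ has no infinite derivations must imply that the tree of leftmost-derivation prefixes from $S[\sigma]$ is finite. I would argue this by König's lemma: the search tree is finitely branching, since $P$ is finite. An infinite branch would give an infinite leftmost derivation, hence an infinite derivation, which is excluded. So the search tree is finite and the depth-first traversal halts. I expect the care to lie in this step, and also in checking that the backtracking transitions correctly recover, from the stored rule label and the restored copy, which alternatives remain.
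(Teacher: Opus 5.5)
Your proposal is correct and follows the paper's construction $T_1(\gram)$. In both, each inner stack holds a copy of the pending sentential form, and nondeterminism is handled by duplicating the top copy with an outer-level push. An $a$ is emitted and that copy popped exactly when it is reduced to $\#$, so the run is a depth-first enumeration of leftmost derivations.

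The differences are minor. The paper first normalises $\gram$ (binary OR rules, one rule per non-terminal, complementary guards that pop a dead branch) and pushes the alternative branch eagerly, so no rule labels or backtracking mode are needed. You keep the original rules and resume from a stored rule label instead. You also make termination explicit via K\"onig's lemma, where the paper only invokes the absence of infinite derivations. Finally, your order labels are swapped relative to the paper's convention, where the outermost stack is order $1$ and $push_1$ is the duplicating operation.
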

    \subsection{Grammar normal form}\label{subsec:gram_normal_form}
    Without loss of generality, we consider a $\struc$-grammar in a normal form similar to Chomsky normal form for context-free grammars. 
    This normal form separates terminal production, non-terminal production and operation on index.\\
    Consequently, rules are in the following form: 
    \begin{tabular}{ll}
        Operation:&$(A, \mathrm{g}) \to (B, \mathrm{op})$\\
        AND:&$(A, \mathrm{g}) \to (BC, id)$\\
        OR:&$(A, \mathrm{g}) \to (B \mid C, id)$\\
        Dumping:&$(A, \mathrm{g}) \to (\bar{a}, id)$
    \end{tabular}\\
    where $\mid$ is a syntactic sugar for non-determinism.
    \\
    For every non-terminal, we allow at most one rule in the previous form, and for every rule in the previous form, we add a complementary guard rule. 
    \begin{align*}
        &\text{Complementary:}&&(A, \neg\mathrm{g}) \to \square
    \end{align*}
    The complementary rules guarantee that every indexed non-terminal generates a tree. The $\square$ symbol allows us to distinguish derivation  trees generated by these rules.
    \subsection{Computing transducer}\label{subsec:comp_trans}
        \cref{thm:gram_to_trans} is obtained by providing an explicit transducer $T_1(\gram)$.
        Its main idea is to simulate each derivation tree with an order $2$ stack, and to encode non-determinism via $push_1$ operations.
        We ensure the absence of deadlocks in the transducer by the guarantee that every indexed non-terminal generates a tree thanks to the absence of infinite derivations and the complementary rules. 
        The transformation of complementary rules ensures no $a$ is produced in the subrun corresponding to trees with $\square$. 
    \begin{definition}\label{def:transf_grammar_to_trans}
        Let $\gram = \bigl(N, T, \struc = <\mathcal{S}, \pred, \op>, P, S\bigr)$ be a $\struc$-grammar in normal form as defined in \cref{subsec:gram_normal_form}.\\
        We define \\
        $T_1(\gram):= \biggl(\mathcal{Q}_{T_1(\gram)}, \{a\}, \Delta_{T_1(\gram)} , q_0, \stack\Bigl(\{X_1,\#_1\}, \stack\bigl(\{N, \#_2\}, \struc\bigr)\Bigr), Start_{T_1(\gram)}, End_{T_1(\gram)}\biggr)$\\
        where :
        \begin{itemize}
            \item $\mathcal{Q}_{T_1(\gram)}=\{q_0\} \cup \{q_r, q_{r}' \mid r \in P\}$
            \item $\Delta_{T_1}= \bigl\{(q_0, \mathcal{P}_{X_1} \wedge \mathcal{P}_{\#_2}) \to (q_0, a, pop_1)\bigr\} \underset{r \in P}{\bigcup} T_1(r)$
         where $T_1(r)$ is defined by:\\
         \setlength{\tabcolsep}{4pt}
        \begin{tabular}{ll}
            If $r$ is a $operation$ rule then $T_1(r)=$&$
                                                        \left\{\begin{array}{l}
                                                            (q_0, \mathcal{P}_{X_1} \wedge \mathcal{P}_{A} \wedge \mathrm{g}) \to (q_r, \varepsilon, op)\\
                                                            (q_r, \top) \to \bigl(q_0, \varepsilon, push_2(B)\bigr)
                                                        \end{array}\right\}
                                                    $\\
            If $r$ is a $\mathrm{AND}$ rule then $T_1(r)=$&$\Bigl\{(q_0, \mathcal{P}_{X_1} \wedge \mathcal{P}_{A} \wedge \mathrm{g}) \to \bigl(q_0, \varepsilon, push_2(BC)\bigr)\Bigr\}$\\
            If $r$ is a $\mathrm{OR}$ rule then $T_1(r)=$&$
                                                                \left\{\begin{array}{l}
                                                                    (q_0, \mathcal{P}_{X_1} \wedge \mathcal{P}_{A} \wedge \mathrm{g}) \to \bigl(q_r, \varepsilon, push_2(C)\bigr)\\
                                                                    (q_r, \top) \to  \bigl(q_{r}', \varepsilon, push_1(X_1X_1)\bigr)\\
                                                                    (q_{r}', \top) \to \bigl(q_0, \varepsilon, push_2(B)\bigr)
                                                                \end{array}\right\}
                                                            $\\
            If $r$ is a $Dumping$ rule then $T_1(r)=$&$\bigl\{(q_0, \mathcal{P}_{X_1} \wedge \mathcal{P}_{A} \wedge \mathrm{g}) \to (q_0, \varepsilon, pop_2)\bigr\}$\\
            If $r$ is a $Complementary$ rule then $T_1(r)=$&$\bigl\{(q_0, \mathcal{P}_{X_1} \wedge \mathcal{P}_{A} \wedge \neg \mathrm{g}) \to (q_0, \varepsilon, pop_1)\bigr\}$
        \end{tabular}
            \item $Start_{T_1(\gram)}=\{X_1[S[\sigma]\#_2]\#_1 \mid \sigma \in \mathcal{S}\}$
            \item $End_{T_1(\gram)}=(q_0, \#_1)$
        \end{itemize}
    \end{definition}
    \begin{claimproof}[Proof Sketch of \cref{thm:gram_to_trans}]
        For the sake of simplicity, we prove this lemma for leftmost derivations (that corresponds to derivation trees; see section 4.3 in \cite{DBLP:books/aw/HopcroftU79}) instead of derivation trees.
        We prove  by structural induction that if there are $i$ leftmost derivation from a word $u\in (N\times \mathcal{S})^*$ in $\gram$ then the run from $(q, X_1[u\#_2]\Omega_1)$ produces the word $a^i$.\\
        The principal difficulty is for $\mathrm{OR}$ rules because of non-determinism. 
        This is solved by noticing that, in $\gram$, the number of leftmost derivation from $A[\sigma].u$  correspond to the addition of the number of leftmost derivation from $B[\sigma].u$ and  from $C[\sigma].u$  
        and that, in $T_1(\gram)$, the word produced from $(q_0, X_1[B_2[\sigma]u\#_2]X_1[C_2[\sigma]u\#_2]\Omega_1)$ is equivalent to the word 
        produced from $(q_0, X_1[B_2[\sigma]u\#_2]\Omega_1)$ concatenated with the word produced from $(q_0, X_1[C_2[\sigma]u\#_2]\Omega_1)$.\\
    \end{claimproof}
\section{From memory transducer to index grammar}\label{sec:trans_to_gram}
    For the rest of the paper, we fix $\aut=\Bigl(\mathcal{Q}_{\aut}, \{a\}, \Delta_\aut, q_0,\stack\bigl(\{X_1,\#_1\}, \stack(\Gamma, \struc)\bigr), Start_\aut,(q_{end}, \#_1)\Bigr)$ a $\stack^2(\struc)$-transducer 
    with $Start_\aut=\{X_1[Start_2[\sigma]\#_2]\#_1 \mid \sigma\in\mathcal{S}\}$, without cycle in the configuration graph, called after a livelock, from any configuration, 
    \emph{computing} a mapping $u: \struc \to \nat$ and  $\struc = <\mathcal{S}, \op, \mathcal{P}>$ a structure.\\
    In this part, we prove the following theorem:
    \begin{theorem}\label{thm:trans_to_gram}
    There exists a $\struc$-indexed grammar producing $u(\sigma)$ derivation trees from the axiom indexed by $\sigma$, for all $\sigma \in \mathcal{S}_\struc$. 
    \end{theorem}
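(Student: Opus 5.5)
We build the grammar in two stages, using the alternating $\stack(\struc)$-transducer of \cref{sec:preliminaries} as an intermediate object. In the first stage we remove the order-$1$ stack. In the second stage we remove the remaining order-$2$ stack $\stack(\Gamma,\struc)$, so that only the index in $\struc$ is left.

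\textbf{Step 1 (removing the outer stack).} We first normalise $\aut$ so that every transition performs a single kind of operation: an operation of $\struc$, a $push_2(\gamma)$, a $push_1(X_1X_1)$, or a $pop_1$. Because $\aut$ is deterministic and total, every $push_1$ from a configuration $(q, X_1[s]\,r)$ starts a well-formed subrun that ends with the matching $pop_1$ in a state $p$, in configuration $(p, X_1[s]\,r)$ where the top $s$ is restored. The stack $r$ underneath is never observed; this is the symbolic-run situation, with $r$ replaced by $\Omega$. We simulate such a subrun by an alternating branch.

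Concretely, the alternating $\stack(\struc)$-transducer has states $(q,p)$, meaning ``from $q$, the well-nested run ends by a $pop_1$ in state $p$''. A $push_1$ step from $q$ to $q'$ becomes the conjunction of two branches. The first branch is $(q',p')$ acting on a copy of $s$. The second branch is $(p',p)$ acting on the original $s$, which is unchanged because the stack is duplicated and the copy is popped. Here $p'$ is guessed, and $\varepsilon$- or $a$-outputs are attached to leaves. Since $\aut$ is deterministic, at most one guess $p'$ admits an accepting run tree. So the accepting, well-formed run trees from the initial configuration are in bijection with the accepting run of $\aut$. Each leaf carrying $a$ accounts for one unit of $u(\sigma)$, and the word read off the leaves of the tree is $a^{u(\sigma)}$.

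\textbf{Step 2 (from alternating transducer to indexed grammar).} Counting is now an issue: we need $u(\sigma)$ derivation trees, not one tree producing $a^{u(\sigma)}$. We therefore design the grammar so that a derivation tree \emph{selects} one occurrence of $a$ in the unique accepting run tree.

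The non-terminals are pairs (state, mark), where the mark records whether the selected $a$ lies in the current subtree. A conjunction node becomes an OR over which child receives the mark; the unmarked children must derive exactly one tree, via a deterministic \emph{check} grammar. A marked leaf producing $a$ yields one tree. This step uses the same addition-of-counts principle as in the proof sketch of \cref{thm:gram_to_trans}.

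To go from $\stack(\Gamma,\struc)$ indices down to $\struc$ indices, we use the symbolic-run idea again. A $push_2(\gamma)$ with $\gamma=Y_1\cdots Y_k$ applied to the top $X[\sigma]$ is followed by subruns that successively consume $Y_1[\sigma], \dots, Y_k[\sigma]$, each ending by popping its atom in some state. We introduce non-terminals $\langle q, Y, p\rangle[\sigma]$ meaning ``starting in state $q$ with top atom $Y[\sigma]$, the subrun pops it and reaches $p$''. A push of $Y_1\cdots Y_k$ is then translated into an AND rule $\langle q_1,Y_1,p_1\rangle\langle p_1,Y_2,p_2\rangle\cdots$, taking an OR over the intermediate states and keeping the mark information. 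Predicates and operations of $\struc$ on the top atom become guards and index operations, and $\mathcal{P}_Y$ is encoded in the non-terminal name.

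\textbf{Correctness and main obstacle.} We then prove, by induction on the height of symbolic run trees, a bijection between derivation trees of $\langle q,Y,p\rangle[\sigma]$ in marked mode and pairs (accepting symbolic subrun, selected $a$-occurrence). The main obstacle is guaranteeing that wrong guesses of intermediate states, and the check branches, contribute zero derivation trees rather than spurious ones or infinite derivations.

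For finiteness, we rely on the hypothesis that the configuration graph has no cycle, so that every guessed branch either terminates or gets stuck. A stuck branch yields no derivation tree, since the grammar has no complementary rules here. For uniqueness in check mode, we rely on determinism: at most one choice of intermediate states yields a tree. Finally, the axiom is $\langle q_0, Start_2, q_{end}\rangle$ in marked mode, adjusted for the final $pop_1$ reaching $\#_1$.
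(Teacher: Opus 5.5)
\textbf{Gap: single exit states for the order-$2$ atoms.} Your overall idea matches the paper's: select one occurrence of $a$ by non-determinism, and force every unmarked part to contribute exactly one tree. But the step that removes the order-$2$ stack does not work as stated. You use the classical triple $\langle q,Y,p\rangle$ with a \emph{single} exit state. That is correct for a linear pushdown run, but not for the run trees of your Step~1.

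Suppose a $push_1$ conjunction fires while some $Y_1[\sigma']$ is the top atom above $Y_2[\sigma]\cdots$; the guards of $push_1$ rules allow this. Both branches then have the same stack, and each later pops $Y_1[\sigma']$, since $pop_1$ only happens at $\#_2$. In general they pop it in different states, and each must then be continued on the \emph{same} atoms $Y_2[\sigma],\dots$. Nested conjunctions make the number of such branches unbounded.

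$Y_2[\sigma]$ is only visible in the single sibling non-terminal created at the $push_2$. So you cannot attach one continuation per branch. The interface between siblings must be a \emph{set} of state pairs, which is sound only because determinism makes equal pairs on equal stacks behave identically. This is why the paper's proof grammar has:
\begin{itemize}
\item non-terminals $\bigl(Y,\psi\to\varphi\bigr)$;
\item a $\mathrm{Decomp}$ rule that splits the incoming set along a fixed order;
\item \cref{lem:alt_trans_unic}, stated for subsets of a \emph{functional} set.
\end{itemize}
Your claim ``at most one choice of intermediate states yields a tree'' does not follow from determinism alone once sets appear. Without functionality and a canonical splitting, a set can be distributed among children and siblings in several ways, and check mode would yield more than one tree.

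\textbf{The same problem in marked mode.} An unmarked branch forking off the marked path at a $push_1$ must be verified on the whole remaining order-$2$ stack. That includes the atoms below the current one, which belong to sibling non-terminals. So the marked non-terminal would have to export to its sibling the exits of all check branches forked in its scope, and $\langle q,Y,p\rangle$ cannot do this.

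\textbf{How the paper handles it.} The paper makes the marked part a \emph{linear} run ($T_2$). At a $push_1$, this run jumps over the unmarked branch using a predicate $\mathcal{P}_{p,r}$ on the whole stack. These predicates are localised in two ways: by the memory component $M$, and by index predicates $\mathcal{P}_{M',M,Y}$ (\cref{def:transf_transpred_to_pred_gram}). Finally, $T_6$ replaces each such predicate by proof-grammar non-terminals $\bigl(Y,(q_i,p_i)\to M_i\bigr)$ with $M_i\subseteq M$.

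A fused, single-grammar version of your construction could work. It would need set-valued interfaces: a marked exit state together with check entry and exit sets of state pairs. It would also need the functional uniqueness lemma to keep check-mode counts at exactly one. That is the missing ingredient.
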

    \begin{figure}
    \begin{tikzpicture}[shorten >= 1pt, node distance = 3.8cm, on grid, auto]
        \node [state, shape=ellipse, align=center] (q0) {$\aut$\\Computing\\transducer}; 
        \node (q0') [right =3cm of q0] {};
        \node [state, shape=ellipse, align=center] (q2) [below = 1.1cm of q0'] {$\aut_{\pred}$\\Transducer\\with predicates};
        \node [state, shape=ellipse, align=center] (q3) [right=of q2] {$\gram_{\pred}$\\Grammar\\with predicates};
        \node [state, shape=ellipse, align=center] (q4) [above=1.1cm of q0'] {$\aut_{\alt}$\\Alternating\\transducer};
        \node [state, shape=ellipse, align=center] (q6) [right=of q4] {$\gram_{\pr}$\\Proof\\grammar};
        \node (q6') [right =3cm of q6] {};
        \node [state, shape=ellipse, align=center] (q8) [below = 1.1cm of q6'] {$\gram$\\Enumerating\\grammar};
        \draw [->](q0) to [bend right] node [align=center] (note1) [midway, below left] {$T_2$\\\cref{subsubsec:pred_trans}}(q2);
        \draw [->](q2) to [bend right] node [align=center] (note1) [midway, below] {$T_3$\\\cref{subsubsec:pred_gram}}(q3);
        \draw [->](q0) to [bend left] node [align=center] (note1) [midway, above left] {$T_4$\\\cref{subsubsec:alt_trans}}(q4);
        \draw [->](q4) to [bend left] node [align=center] (note1) [midway, above] {$T_5$\\\cref{subsubsec:proof_gram}}(q6);
        \draw [->](q6) to [bend left] node [align=center] (note1) [midway, above right] {$T_6$\\\cref{subsec:enum_gramm}}(q8);
        \draw [->](q3) to [bend right] node [align=center] (note1) [midway, below right] {$T_6$\\\cref{subsec:enum_gramm}}(q8);
    \end{tikzpicture}
    \caption{Road map for \cref{thm:trans_to_gram}}\label{fig:roadmap}
    \end{figure}
    The lower part creates from $\struc$ a grammar enumerating the function $u$ whose trees correspond to productive $pop_1$ of the accepting run of $\struc$. 
    This grammar will use additional predicates to ensure its trees only represent correct portions of the accepting run of $\struc$. 
    The upper part creates a grammar in which non-terminals correspond to these additional  predicates :
    those non-terminals produces a single derivation  tree from a non-terminal if and only if the corresponding additional predicate is true and non otherwise. 
    Then, we merge the two grammars into one by replacing predicates with the corresponding non-terminals.\\
    The majority of the lemmas in this section are true for a $\stack^2(\struc)$-transducer with livelock from non-reachable configurations; 
    but the final merging step is not necessarily true for $\stack^2(\struc)$-transducer with livelock.
    \subsection{Transducer normal form}\label{subsec:trans_normal_form}
    Without loss of generality, we consider that, in $\aut$, every run from any configuration terminates with stack $\#_1$ \footnote{That we obtain by completing the transducer}, 
    there is only one order $1$ symbol in addition to the bottom-of-stack symbol, 
    operations $push_i$ push at most $2$ symbols and only rules using $pop_1$ can produce $a$.\\
    Consequently, there is a sink state $\bot$  and rules are in the following form:\\
    \begin{tabular}{ll}
        $renaming$:&$(q, \mathrm{g}) \to \bigl(p, \varepsilon, push_2(X_2)\bigr)$\\
        $push_2$:&$(q, \mathrm{g}) \to \bigl(p, \varepsilon, push_2(X_2 Y_2)\bigr)$\\
        $pop_1$:&$(q, \mathcal{P}_{\#_2}) \to \bigl(p, \bar{a}, pop_1\bigr)$
    \end{tabular}
    \begin{tabular}{ll}
        $operation$:&$(q, \mathrm{g}) \to (p, \varepsilon, op)$\\
        $pop_2$:&$(q, \mathrm{g}) \to (p, \varepsilon, pop_2)$\\
        $push_1$:&$(q, \mathrm{g}) \to \bigl(p, \varepsilon, push_1(X_1 X_1)\bigr)$
    \end{tabular}\\
    Rules on $\bot$ are $(\bot, \neg\mathcal{P}_{\#_2}) \to (\bot, \varepsilon, pop_2)$ and $(\bot, \mathcal{P}_{\#_2}) \to (p, \varepsilon, pop_1)$.\\
    Observe that this normal form preserves the absence of livelock.
    \subsection{Enumerating grammar with predicates }\label{subsec:pred_gram}
        \subsubsection{Transducer with predicates}\label{subsubsec:pred_trans}
            \begin{lemma}\label{lem:bij_trans_to_trans_pred}
                There exists a $\stack(\struc)$-transducer using additional predicates producing $u(\sigma)$ runs from $X[\sigma]\#$ for all $\sigma \in \mathcal{S}_\struc$. 
            \end{lemma}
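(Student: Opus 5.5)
The plan is to eliminate the order-$1$ level of the memory of $\aut$ by replacing every well-formed sub-run that starts on a single order-$2$ stack and ends with a $pop_1$ by a \emph{summary}. The data of such a sub-run (the entering state and the exit state) becomes an additional predicate on order-$2$ stacks, so the new transducer $\aut_{\pred}$ works on $\stack(\Gamma,\struc)$ only. Concretely, for states $p,q$ we introduce predicates $\mathcal{R}_{p,q}$ on $\stack(\Gamma,\struc)$. The predicate $\mathcal{R}_{p,q}(stk)$ holds iff the symbolic run of $\aut$ from $(p, X_1[stk]\Omega_1)$ is well-formed and reaches state $q$ with its first uncompensated $pop_1$, i.e. it reaches $(q,\Omega_1)$. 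Determinism of $\aut$ makes $q$ unique for given $p$ and $stk$. The completion to the normal form of \cref{subsec:trans_normal_form}, which guarantees that every run ends with stack $\#_1$ and has no livelock, ensures that such a $q$ always exists.

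The key observation is the decomposition of a $push_1$ step. From $(p, X_1[stk]\Omega_1)$ a rule $(p,\mathrm{g})\to(p',\varepsilon,push_1(X_1X_1))$ leads to $(p', X_1[stk]X_1[stk]\Omega_1)$. The run then consists of a well-formed sub-run on the top copy, ending in some state $r$ with $\mathcal{R}_{p',r}(stk)$, followed by the run from $(r, X_1[stk]\Omega_1)$. The number of $a$'s produced is the sum of the two contributions. We therefore let $\aut_{\pred}$ be non-deterministic. It has states $\mathcal{Q}_\aut$ (plus auxiliary copies), and it works on the order-$2$ stack while keeping all order-$2$ rules ($renaming$, $push_2$, $pop_2$, $operation$) unchanged. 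A $pop_1$ rule $(q,\mathcal{P}_{\#_2})\to(p,\bar a,pop_1)$ becomes an accepting step that produces one run if $\bar a=a$ and none otherwise; more precisely, it terminates in an end configuration exactly when $\bar a = a$. A $push_1$ rule becomes a choice between two branches, both guarded by $\mathcal{R}_{p',r}$ for the unique $r$: either continue from $(p',stk)$, exploring the top copy, or continue from $(r,stk)$, skipping the top copy. Since $\mathcal{R}$ is evaluated on the current $stk$, the guard is a boolean formula over the enlarged predicate set, as allowed. The start configuration is $(q_0, Start_2[\sigma]\#_2)$.

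The core is to show, by induction on the length of the (finite, by absence of livelock) deterministic run of $\aut$ from $(p, X_1[stk]\Omega_1)$, that the number of accepting runs of $\aut_{\pred}$ from $(p,stk)$ equals the number of $pop_1$ transitions producing $a$ in that run. The order-$2$ steps are matched one-to-one. In a $push_1$ step the two branches correspond exactly to the two segments described above, which gives the additive split. In a $pop_1$ step the count is $1$ or $0$ according to the letter. Applying this at the start configuration, where $\Omega_1=\#_1$ and the final state is $q_{end}$, gives $u(\sigma)$ runs.

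The main obstacle is making the symbolic-run argument rigorous. We must show that the part of the run of $\aut$ on the top copy is independent of $\Omega_1$, so that $\mathcal{R}_{p,q}$ is well defined as a predicate on the order-$2$ stack alone, and that the continuation after the $pop_1$ restarts from an identical copy of $stk$. The second point holds because $push_1$ duplicates $stk$ and order-$1$ operations never touch lower order-$1$ entries. Termination of the induction also needs care, because the branch that continues from $(r,stk)$ does not shrink the stack. I would therefore take the measure to be the length of the remaining run of $\aut$, which is finite by the no-livelock assumption, and not any structural size of the stack.
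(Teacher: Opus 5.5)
Your proposal is correct and is essentially the paper's proof. Your $\mathcal{R}_{p,q}$ are the paper's $\mathcal{P}_{p,q}$, each $push_1$ rule is split into a ``follow the top copy'' transition and a predicate-guarded ``jump to the return state'' transition, and productive $pop_1$ rules become transitions to the accepting configuration $(\top,\#)$. The count then follows by induction on the finite well-formed symbolic run of the transducer, with the additive split $i+j$ at $push_1$. The only cosmetic difference is that you also guard the first branch by $\mathcal{R}_{p',r}$, whereas the paper guards only the second; this is harmless because exactly one $r$ satisfies the predicate.
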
This lemma is obtained by providing a non-deterministic $\stack(\struc)$-transducer $T_2(\aut)$is such that runs from $(q, stk)$ correspond to $pop_1$ operations occurring in runs of $\aut$ from $(q, X_1[stk]\Omega)$.
            Thus, a run of $T_2(\aut)$ corresponding to a $pop_1$ operation simulates the order-$2$ stack that is popped by this operation, while keeping only the transitions that affects it.
            Runs whose corresponding $pop_1$ operations are productive are accepting; the others are rejecting. Hence, $T_2(\aut)$ has $i$ accepting runs from $(q, stk)$ if and only if the run of $\aut$ from $(q, X_1[stk]\Omega)$ produces $a^i$.\\
            \cref{fig:separation_of_order_1} represents, in black, a \emph{symbolic well-formed} run of $\aut$, and, in colors, the corresponding runs of $T_2(\aut)$.\\
            To achieve this, in $T_2(\aut)$, order-$2$ and $operation$ transitions of $\aut$ are preserved. Transitions producing $pop_1$ are replaced with transitions to an accepting state, while
            $push_1$ transitions are replaced by transitions that follow the new topmost $2$-stack, together with non-deterministic transitions that follow the remaining one below it (whose computation occurs after the corresponding $pop_1$).
            Consequently, we need to know the state in which the computation of $\aut$ will be after the $pop_1$ operation.This is the key technical difficulty in the proof of \cref{thm:trans_to_gram}.\\
            This is achieved by introducing $\mathcal{P}_{(p,r)}$, which is used in this section; its \emph{computability} will be established later.
            The determinism and completeness of $\aut$ ensure that, given a stack $stk$ and a state $p$, there exists exactly one state $r$ such that $\mathcal{P}_{(p,r)}$ holds.
            Thus, every $push_1$ rule of $\aut$ corresponds to exactly two rules of $T_2(\aut)$.\\
            The \emph{computing} property of $\aut$ ensures its determinism and totality. Therefore, no predicate is needed for the second stack added by $push_1$.\\
            \begin{figure}[h]
                \centering
                \begin{tikzpicture}
                    \draw[thick,->] (0,0) -- (11,0)node[anchor=south east] {};
                    \draw[thick,->] (0,0) -- (0,4)node[anchor=north west] {order 1 stack height};
                    \draw[step=1cm, gray, very thin] (0,0) grid (10.9,3.9);
                    \draw[thick,-] (0,1) -- (1,1);
                    \draw[thick,-] (1,1) -- (1.5,2);
                    \draw[thick,-] (1.5,2) -- (2,2);
                    \draw[thick,-] (2,2) -- (2.5,3);
                    \draw[thick,-] (2.5,3) -- (4,3);
                    \draw[thick,-] (4,3) -- (4.5,2);
                    \draw[thick,-] (4.5,2) -- (5,2);
                    \draw[thick,-] (5,2) -- (5.5,3);
                    \draw[thick,-] (5.5,3) -- (6,3);
                    \draw[thick,-] (6,3) -- (6.5,3);
                    \draw[thick,-] (6.5,3) -- (7,2);
                    \draw[thick,-] (7,2) -- (8,2);
                    \draw[thick,-] (8,2) -- (8.5,1);
                    \draw[thick,-] (8.5,1) -- (9.5,1);
                    \draw[thick,-] (9.5,1) -- (10,0);
                    \draw[thick, yellow,-] (0,0.95) -- (1,0.95);
                    \draw[thick, yellow,-] (1,0.95) -- (1.5,1.95);
                    \draw[thick, yellow,-] (1.5,1.95) -- (2,1.95);
                    \draw[thick, yellow,-] (2,1.95) -- (2.5,2.95);
                    \draw[thick, yellow,-] (2.5,2.95) -- (4,2.95);
                    \draw[thick, blue,-] (0,0.90) -- (1,0.90);
                    \draw[thick, blue,-] (1,0.90) -- (1.5,1.90);
                    \draw[thick, blue,-] (1.5,1.90) -- (2,1.90);
                    \draw[thick, blue, dashed] (2,1.9) -- (4.5,1.9);
                    \draw[thick, blue,-] (4.5,1.9) -- (5,1.9);
                    \draw[thick, blue,-] (5,1.90) -- (5.5,2.90);
                    \draw[thick, blue,-] (5.5,2.90) -- (6.5,2.90);
                    \draw[thick, green,-] (0,0.85) -- (1,0.85);
                    \draw[thick, green,-] (1,0.85) -- (1.5,1.85);
                    \draw[thick, green,-] (1.5,1.85) -- (2,1.85);
                    \draw[thick, green, dashed] (2,1.85) -- (4.5,1.85);
                    \draw[thick, green,-] (4.5,1.85) -- (5,1.85);
                    \draw[thick, green, dashed] (5,1.85) -- (7,1.85);
                    \draw[thick, green,-] (7,1.85) -- (8,1.85);
                    \draw[thick, red,-] (0,0.80) -- (1,0.80);
                    \draw[thick, red, dashed] (1,0.80) -- (8.5,0.80);
                    \draw[thick, red,-] (8.5,0.80) -- (9.5,0.80);
                    \draw [->](1,0.90) to [bend right,distance=0.8cm] node [align=center] (note1) [midway, below] {$\mathcal{P}_{\prescript{1}{}{q}, \prescript{1}{}{p}}$}(8.5,0.90);
                    \draw [->](2,1.90) to [bend right,distance=0.8cm] node [align=center] (note1) [midway, below] {$\mathcal{P}_{\prescript{2}{}{q}, \prescript{2}{}{p}}$}(4.5,1.90);
                    \draw [->](5,1.90) to [bend right,distance=0.8cm] node [align=center] (note1) [midway, below] {$\mathcal{P}_{\prescript{3}{}{q}, \prescript{3}{}{p}}$}(7,1.90);
                    \draw [-](0,-0) to [bend right] node [align=center] (note1) [midway, below] {$q_0$}(0,-0);
                    \draw [-](1,-0) to [bend right] node [align=center] (note1) [midway, below] {$\prescript{1}{}{q}$}(1,-0);
                    \draw [-](2,-0) to [bend right] node [align=center] (note1) [midway, below] {$\prescript{2}{}{q}$}(2,-0);
                    \draw [-](5,-0) to [bend right] node [align=center] (note1) [midway, below] {$\prescript{3}{}{q}$}(5,-0);
                    \draw [-](4.5,-0) to [bend right] node [align=center] (note1) [midway, below] {$\prescript{2}{}{p}$}(4.5,-0);
                    \draw [-](7,-0) to [bend right] node [align=center] (note1) [midway, below] {$\prescript{3}{}{p}$}(7,-0);
                    \draw [-](8.5,-0) to [bend right] node [align=center] (note1) [midway, below] {$\prescript{1}{}{p}$}(8.5,-0);
                    \draw [-](10,0) to [bend right] node [align=center] (note1) [midway, below] {$q_{fin}$}(10,0);
                \end{tikzpicture}
                \caption{visualization of \emph{symbolic well-formed} runs in a run }\label{fig:separation_of_order_1}
            \end{figure}
        \begin{definition}\label{def:transf_trans_to_transpred}
            We define $T_2(\aut):= \Bigl(\mathcal{Q}_\aut \cup \{\top\}, \emptyset, \Delta_{T_2(\aut)}, q_0, \stack_{T_2(\aut)}(\Gamma_2, \struc),Start_{T_2(\aut)}, (\top, \#)\Bigr)$ where $Start_{T_2(\aut)}=\{X[\sigma]\# \mid \sigma \in \mathcal{S}\}$ and  $\stack_{T_2(\aut)}$ is a stack augmented with predicates $\{\mathcal{P}_{q, p} \mid q, p \in \mathcal{Q}\}$ and where $\Delta_{T_2(\aut)} = \underset{\delta \in \Delta_\aut}{\bigcup} T_2(\delta)$ with:\\
            \begin{tabular}{lcl}
                If $\delta$ is an unproductive $pop_1$ rule then &$T_2(\delta)$ =&$\emptyset$\\
                If $\delta$ is a productive $pop_1$ rule then &$T_2(\delta)$ =&$\bigl\{(q, \mathcal{P}_{\#})\to (\top, \varepsilon, id)\bigr\}$\\
                If $\delta$ is a $push_1$ rule then &$T_2(\delta)$ =&$
                                                                \left\{\begin{array}{l}
                                                                    (q, \mathrm{g})\to(p, \varepsilon, id) \\
                                                                    (q, \mathrm{g}\wedge \mathcal{P}_{p, r})\to(r, \varepsilon, id)|r\in \mathcal{Q}
                                                                \end{array}\right\}
                                                                $\\
                Else &$T_2(\delta)$ =&$\{\delta\}$
            \end{tabular}\\
            Following the notation used in \cref{subsec:trans_normal_form}\\
            A predicate $\mathcal{P}_{p, r}(stk)$ is true if and only if $(p, A_1[stk]\Omega_1)\underset{\aut}{\xrightarrow{*}} (r, \Omega_1)$.
        \end{definition}
        \begin{claimproof}[Proof Sketch of \cref{lem:bij_trans_to_trans_pred}]
            The proof consists in a structural induction on \emph{symbolic well-formed} runs of $\aut$ from $q, X_1[stk]\Omega_1$. 
            It shows that for any such run with $k$ $pop_1$, there is $k$ runs of $T_2(\aut)$ from $q, stk$.
            The conclusion is obtained by the fact that the accepting run is a \emph{well-formed} run. 
        \end{claimproof}
        For the rest of the paper, we denote $T_2(\aut)$ as  $\aut_{\pred}=\Bigl(\mathcal{Q}_{\pred}, \emptyset, \Delta_{\pred}, q_0, \stack_{\pred} (\Gamma, \struc),$\\
        $\{X[\sigma]\#|\sigma\in \mathcal{S}_\struc\},(\top, \#)\Bigr)$.
        \subsubsection{Grammar with predicates}\label{subsubsec:pred_gram}
            \begin{lemma}\label{lem:trans_pred_to_gram_pred}
                There exists a $\struc$-grammar using additional predicates producing $u(\sigma)$ derivation trees from $S[\sigma]$ for all $\sigma \in \mathcal{S}_\struc$. 
            \end{lemma}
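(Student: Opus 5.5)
We start from the $\stack(\struc)$-transducer with predicates $\aut_{\pred}$ given by \cref{lem:bij_trans_to_trans_pred}, which has exactly $u(\sigma)$ accepting runs from $(q_0, X[\sigma]\#)$. We then apply the classical triple construction for pushdown automata, as in the conversion of pushdown automata to context-free grammars. The difference here is that the stack alphabet symbols carry indices in $\struc$, and these indices are handled by the grammar's own index.

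\textbf{The construction $T_3$.} The non-terminals are triples $[q,X,p]$ with $q,p \in \mathcal{Q}_{\pred}$ and $X \in \Gamma$, together with an axiom $S$. The intended meaning of $[q,X,p][\sigma]$ is the set of symbolic runs of $\aut_{\pred}$ from $(q, X[\sigma]\Omega)$ to $(p,\Omega)$ that pop the topmost atom exactly at their last step, or that reach the accepting state $\top$. Since $\aut_{\pred}$ has an empty output alphabet, all we need is to count runs.

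For each transition $\delta=(q,\mathrm g)\to(p,\varepsilon,\cdot)$ we produce rules whose guard is $\mathrm g$ with $\mathcal P_X$ resolved syntactically (we only emit the rule for the matching $X$). The additional predicates $\mathcal{P}_{p,r}$ are kept as guards on the index.
\begin{itemize}
\item An operation transition gives $([q,X,s],\mathrm g)\to([p,X,s],op)$ for every $s$.
\item A renaming $push(Y)$ gives $([q,X,s],\mathrm g)\to([p,Y,s],id)$.
\item A $push(YZ)$ gives $([q,X,s],\mathrm g)\to([p,Y,r][r',Z,s],id)$, which needs a small refinement explained below.
\item A $pop$ gives $([q,X,p],\mathrm g)\to(\varepsilon,id)$.
\item An $id$ transition, which arises from $push_1$ in $T_2$, behaves like a renaming.
\end{itemize}
The bottom symbol $\#$ and the accepting transitions to $\top$ are handled by dumping rules. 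Finally, $S \to [q_0,X,\top]$.

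\textbf{Refinement for $push(YZ)$.} Here $Y$ and $Z$ receive the same index $\sigma$. The subtree for $Y$ ends by popping $Y$ in some state $r$, after which the run continues with $Z[\sigma]$ on top. Because the copy semantics of $push$ gives both children the index $op(\sigma)=\sigma$, and the index of $Z$ is untouched while $Y$ is processed, the rule $([q,X,s])\to[p,Y,r]\,[r,Z,s]$ is correct; one such rule is emitted for each $r$. This is exactly where indexed grammars match order-1 stacks of $\struc$: the nodes below the top keep their own index frozen.

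\textbf{Correctness.} The proof is an induction on the length of symbolic runs, which is finite because $\aut$ has no livelock. It establishes a bijection between derivation trees from $[q,X,p][\sigma]$ and symbolic runs from $(q,X[\sigma]\Omega)$ to $(p,\Omega)$. The main obstacle is uniqueness in the $push(YZ)$ case. A run decomposes at the first moment the height returns below $Y$, so the intermediate state $r$ is uniquely determined by the run; this gives a bijection with the disjoint union over $r$ of products of run sets. There is also a subtlety with acceptance at $\top$ occurring mid-stack, since the $\top$ transition requires $\mathcal P_\#$. We handle it by allowing $\top$ only via the $\#$ triple. This ensures that every accepting run corresponds to exactly one tree. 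Summing over trees gives $u(\sigma)$.
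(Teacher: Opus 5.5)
\textbf{There is a genuine gap.} It sits exactly where the paper places the main difficulty of this step. You keep the additional predicates $\mathcal{P}_{p,r}$ ``as guards on the index'', but they are not predicates on $\mathcal{S}_\struc$.

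By \cref{def:transf_trans_to_transpred}, $\mathcal{P}_{p,r}$ is evaluated on the whole stack $stk$ of $\aut_{\pred}$. It holds iff $(p,X_1[stk]\Omega_1)\underset{\aut}{\xrightarrow{*}}(r,\Omega_1)$. That run of $\aut$ must bring every atom of $stk$ to the top before the final $pop_1$, which requires $\mathcal{P}_{\#_2}$. So its truth value depends on the entire stack below the current atom.

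In a $\struc$-grammar, a guard only sees the index $\sigma$ of the node $[q,X,s][\sigma]$. The stack below $X[\sigma]$ is encoded only implicitly, by pending siblings elsewhere in the tree. Hence a rule with guard $\mathrm{g}\wedge\mathcal{P}_{p,r}$ is not a rule of a $\struc$-grammar.

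For the same reason your correctness argument fails. Runs of $\aut_{\pred}$ are not symbolic: which of the rules $(q,\mathrm{g}\wedge\mathcal{P}_{p,r'})\to(r',\varepsilon,id)$ can fire from $(q,X[\sigma]stk)$ depends on $stk$. So there is no bijection between trees from $[q,X,p][\sigma]$ and runs from $(q,X[\sigma]\Omega)$ to $(p,\Omega)$. If you simply drop the guard, every branch $r'$ becomes available and you overcount. Separately, your axiom rule $S\to[q_0,X,\top]$ contradicts reaching $\top$ only through the $\#$ triple; it should be $S\to[q_0,X,p][p,\#,\top]$ for each $p$.

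\textbf{The missing idea} is a finite summary of the stack below, carried in the non-terminals. The paper uses quadruples $(q,X,p,M)$. Here $M\subseteq\mathcal{Q}\times\mathcal{Q}$ is the set of pairs for which $\mathcal{P}_{\cdot,\cdot}$ holds on the stack below the current atom; it is total and functional by determinism and completeness of $\aut$ (\cref{rem:totfunc}).

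On a $push_2(XY)$ rule, the child for $Y$ keeps $M$. The child for $X$ receives a guessed $M'$, checked by the new guard $\mathcal{P}_{M',M,Y}(\sigma)$. This guard \emph{is} a predicate on the index, because the summary of $Y[\sigma]stk$ depends only on $Y$, $\sigma$ and the summary $M$ of $stk$.

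The stack predicate $\mathcal{P}_{p,r}$ in the second $push_1$ rule is then replaced by the syntactic condition $(p,r)\in M'$ together with $\mathcal{P}_{M',M,X}(\sigma)$. The axiom rules initialize $M$ from the $pop_1$ transitions on $\#$.

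The counting argument then has one more obligation. It must show that, in trees rooted at $S$, every $M$ that occurs is the true summary of some stack. It then follows that exactly one $M'$ satisfies each guard, so the extra component does not multiply derivation trees. Keeping all guards as predicates on $\mathcal{S}_\struc$ is also what later allows them to be replaced by non-terminals of $\gram_{\pr}$.
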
This lemma is obtained by providing an explicit transformation $T_3$.
            The main idea is that each derivation tree rooted at the axiom indexed by $\sigma$ represents an accepting run of $\aut_{\pred}$ on $\sigma$.
            To achieve this, we use a \emph{triple grammar} (Chapter 5.4 in \cite{Mharrison}), in which a non-terminal is composed of the current state, the current symbol, and a guess of the state reached after popping the current atom.\\
            The predicates added by $\aut_{\pred}$ cannot be used directly, since they apply on a stack which is not explcit. 
            We therefore need to add a fourth component to the non-terminals.\\
            Consequently, each derivation tree rooted at an indexed non-terminal $(q, X, p, M)[\sigma]$ 
            represents a \emph{well-formed} run starting from the current atom $X[\sigma]$, in state $q$, and ending in state $p$.\\
            \begin{figure}[H]
                \centering
                \begin{tikzpicture}[shorten >= 1pt, node distance = 2cm, on grid, auto]
                    \node [align=center] (q0)  {$q_1, X[\sigma]stk$}; 
                    \node [align=center] (q1) [right=2.4cm of q0]  {$q_2, Y[\sigma]Z[\sigma]stk$}; 
                    \node [align=center] (q2) [right=2.6cm of q1]  {$q_3, Z[\sigma]stk$}; 
                    \node [align=center] (q3) [right=2.5cm of q2]  {$q_4, stk$}; 
                    \draw [->](q0) to [bend right] node [align=center]  [midway, below ] {$push_{2_{T_2}}$} (q1) ;
                    \draw [->](q1) to [bend right] node [align=center]  [midway, below ] {$c_1, pop_{2_{T_2}}$}(q2) ;
                    \draw [->](q2) to [bend right] node [align=center]  [midway, below ] {$c_2, pop_{2_{T_2}}$}(q3) ;
                \end{tikzpicture}
                \begin{tikzpicture}[level distance=1.35cm,
                        level 1/.style={sibling distance=3cm},
                        level 2/.style={sibling distance=4cm}]
                        \node {$q_1, X[\sigma], q_6, M$}
                            child {node {$q_2, Y[\sigma], q_4, M'$ }
                                child {node [draw, dashed, shape border uses incircle, isosceles triangle,
                                     shape border rotate=90, minimum height=10mm] {$c_1$}
                                }
                            }
                            child {node {$q_4, Z[\sigma], q_6, M$ }
                                child {node [draw, dashed, shape border uses incircle, isosceles triangle,
                                     shape border rotate=90, minimum height=10mm] {$c_2$}
                                }
                            }
                       ;
                    \end{tikzpicture}
                \caption{transformation of a \emph{well-formed} run of $\aut_{\pred}$ into derivation tree of $T_3(\aut_{\pred})$ }\label{fig:trans_pred_to_gram_pred}
            \end{figure}
            In a non-terminal, we memorize in $M$ all pairs of states $(q,p)$ such that $\mathcal{P}_{q,p}$ holds on the current simulated stack after popping the current element.
            More precisely, a non-terminal $(q, X, p, M)[\sigma]$ represents the equivalence class of configurations $(q, X[\sigma]stk)$ such that $M$ is the set of pairs of states $(q,p)$ for which $\mathcal{P}_{(q,p)}(stk)$ holds.
            The memories are computed step by step by adding predicates $\mathcal{P}_{M',M,Z}(\sigma)$ to rules simulating $push_{2_{T_2}}$. 
            This relies on the fact that $\mathcal{P}_{q,p}(A[\sigma]stk)$ holds if and only if there exists a state $r$ such that $\mathcal{P}_{r,p}(stk)$ holds and $(q, A[\sigma]stk) \xrightarrow{*}_{\aut_{\pred}} (r, stk)$.
            The computed memories are then used in rules simulating $push_1$, in place of the predicates $\mathcal{P}_{(q,p)}$.
            \begin{remark}\label{rem:totfunc}
            By determinism, completeness, and the absence of livelocks in $\aut$, we ensure that the possible memory values of non-terminals always form a total function.
            \end{remark}
            \begin{definition}\label{def:transf_transpred_to_pred_gram}
                We define $T_3(\aut_{\pred}):=(N, \emptyset, \struc_{T_3(\aut_{\pred})}, S, P_{T_3(\aut_{\pred})})$ with
                \begin{itemize}
                    \item $N=\bigl(\mathcal{Q}_{\pred}\times\Gamma\times \mathcal{Q}_{\pred}\times\mathcal{P} (\mathcal{Q}_{\pred}\times\mathcal{Q}_{\pred})\bigr)\cup \{S\}$ 
                    \item $\struc_{T_3(\aut_{\pred})}=\Bigl<\mathcal{S}, \mathcal{P}\cup\bigl\{\mathcal{P}_{M', M, X}|M', M\subseteq(\mathcal{Q}_{\pred}\times\mathcal{Q}_{\pred})\wedge X\in\Gamma\bigr\}, \op\Bigr>$
                    \item $P_{T_3(\aut_{\pred})}=$\begin{tabular}{ll}
                        &$\Biggl\{(S, \top)\to\Bigl(\bigl((q_0, X, p, M),(p, \#, \top, \emptyset)\bigr), id\Bigr)$\huge$\vert$\small$ p\in\mathcal{Q}_{\pred} $\\
                        &$\wedge M=\biggl\{(q, p)$\LARGE$\vert$\small $\begin{array}{l}
                                    \Bigl((q, \mathcal{P}_{\#_2})\to(p, a, pop_1)\in \Delta_\pred\Bigr)\vee\\ 
                                    \Bigl(p=\bot \wedge \forall r,\bigl((q, \mathcal{P}_{\#_2})\to(r, a, pop_1)\not\in\Delta_\pred\bigr)\Bigr)                     
                        \end{array}\biggr\}\Biggr\}  $\\
                        &$\cup \underset{\delta\in\Delta_{\aut_{\pred}}}{\bigcup}T_3(\delta)$ 
                    \end{tabular} 
                \end{itemize} 
                Where $T_3(\delta)$ is:\\ 
                \setlength{\tabcolsep}{1pt}
                \begin{tabular}{lcl}
                    If $\delta$ is $pop_{2_{T_2}}$ then &$T_3(\delta)$ = &$\Bigl\{\bigl((q, Z, p, M ), \mathrm{g}\bigr)\to(\varepsilon)|M\subseteq(\mathcal{Q}\times \mathcal{Q})\Bigr\}$\\
                    If $\delta$ is  $push_{2_{T_2}}$ then &$T_3(\delta)$ = &$\Bigl\{\bigl((q, Z, s, M), \mathrm{g}\wedge \mathcal{P}_{M', M, C}\bigr)\to\bigl((p, X, r, M')(r, Y, s, M), id\bigr)$\\
                    &&\Large$\vert$\small$(s\in\mathcal{Q}) \wedge  (p \in\mathcal{Q}) \wedge \bigl(M\subseteq(\mathcal{Q}\times\mathcal{Q})\bigr)\Bigr\}$\\
                    If $\delta$ is productive $pop_{1_{T_2}}$ then &$T_3(\delta)$ = &$\Bigl\{\bigl((q, \#, \top, \emptyset), \mathrm{g}\bigr)\to(\varepsilon)\Bigr\}$\\
                    If $\delta$ is second $push_{1_{T_2}}$ then &$T_3(\delta)$ =&$\Bigl\{\bigl((q, Z, s, M), \mathrm{g}\wedge\mathcal{P}_{M', M, X}\bigr)\to\bigl((r, X, s, M), id\bigr)$\\
                    &&\Large$\vert$\small$\bigl(s \in\mathcal{Q}\bigr)\wedge \bigl(M\subseteq(\mathcal{Q}\times\mathcal{Q})\bigr)\wedge \bigl(M\subseteq(\mathcal{Q}\times\mathcal{Q})\bigr) \wedge\bigl((p, r)\in M'\bigr)\Bigr\}$\\
                    Else &$T_3(\delta)$ =&$\Bigl\{\bigl((q, Z, s, M ), \mathrm{g}\bigr)\to\bigl((p, X, s, M), op\bigr)$\\
                    &&\Large$\vert$\small$\bigl(s \in \mathcal{Q}\bigr)\wedge \bigl(M\subseteq(\mathcal{Q}\times\mathcal{Q})\bigr)\Bigr\}$\\
                \end{tabular}\\
                Following the notation used in \cref{subsec:trans_normal_form} and \cref{def:transf_trans_to_transpred}\\
                With $Z$ being the order-2 symbol to which $\delta$ applies, and $M$ and $M'$ being functional and total sets of state pairs, i.e.
                $\forall q\in\mathcal{Q},\exists !, p\in\mathcal{Q}\ \text{such that}\ (q,p)\in M$.\\
                A predicate $\mathcal{P}_{M', M, Z} (\sigma)$ with $\sigma\in \mathcal{S}_\struc$ is true if and only if \\
                \begin{tabular}{l}
                $\forall stk\in \mathcal{S}_{\stack(\struc)}, \Bigl(M=\bigl\{(q, p)\mid q, X_1[stk]\Omega_1 \underset{\aut}{\xrightarrow{*}}p, \Omega_1 \bigr\}\Bigr)\implies$\\
                $\Bigl(M'=\bigl\{(q, p)| \exists r, p\in M, (q, X_1[Z_2[\sigma]stk]\Omega_1)\underset{\aut}{\xrightarrow{*}} (r, X_1[stk]\Omega_1)\bigr\}\Bigr)$.
                \end{tabular}
            \end{definition}
            \begin{claimproof}[Proof of \cref{lem:trans_pred_to_gram_pred}]
                We prove this lemma in two steps.\\
                We exhibit a total surjective function from the \emph{well-formed} runs of $\aut_{\pred}$ to the derivation trees of $T_3(\aut_{\pred})$ not rooted in  $S$. 
                This is done by promoting $T_3$ to a relation from the \emph{well-formed} runs of $\aut_{\pred}$ to the derivation trees of $T_3(\aut_{\pred})$ that are not rooted in  $S$, and by showing that this relation is a total function. 
                Surjectivity is achieved by establishing the totality of ${T_3}^{-1}$, both by structural induction. 
                The key idea is to notice that $T_3$ associates a run $(q, X[\sigma]stk) \underset{\aut_{\pred}}{\xrightarrow{*}} (p, stk)$ with the derivation tree rooted in  $(q, X, p, M)[\sigma]$, 
                where $M = \{(q, p) \mid q, X_1[stk]\Omega_1 \underset{\aut}{\xrightarrow{*}} (p, \Omega_1)\}$.\\                
                Then, we exhibit a bijection from the \emph{well-formed} runs of $\aut_{\pred}$ starting from $(q_0, X[\sigma]\#)$ to the derivation trees of $T_3(\aut_{\pred})$ rooted in  $S$. 
                We show that this relation is a total surjective function by proving that the rules on $S$ preserve the totality, the surjectivity and the functionality proved before. 
                Finally, we establish injectivity by showing that ${T_3}^{-1}$ is a function, since $stk$ is uniquely determined in every configuration appearing in an accepting run.
            \end{claimproof}
            For the rest of the paper, we denote $T_3(\aut_{\pred})$ as $\gram_{\pred}=(T_{\pred}, \emptyset, \struc_{\pred}, S, P_{\pred})$ 
            with $\struc_{\pred}=\Bigl<\mathcal{S}, \mathcal{P}\cup\bigl\{\mathcal{P}_{M', M, X}|M', M\subseteq(\mathcal{Q}_{\pred}\times\mathcal{Q}_{\pred})\wedge X\in\Gamma\bigr\}, op\Bigr>$.
    \subsection{Boolean output grammar}\label{subsec:bool_gram}
         We construct, from $\aut$, a $\struc$-grammar with the aim of replacing the predicates used in $T_3\bigl(T_2(\aut)\bigr)$.\
        This notion of grammar should not be confused with the notion of conjunctive grammar introduced by A. Okhotin \cite{DBLP:journals/jalc/Okhotin01}.
        Intuitively, our grammar returns a Boolean value in response to a query to the underlying transducer: the answer is true if and only if there exists a corresponding derivation tree, and false otherwise.
        This grammar respects the syntax given in the preliminaries.
        \subsubsection{Alternating transducer}\label{subsubsec:alt_trans}
        \begin{lemma}\label{lem:bij_trans_to_alt_trans}
            There exists a $\stack(\struc)$ alternating transducer having states in $(\mathcal{Q}\times\mathcal{Q})$ producing a unique accepting run tree from $\bigl((q, p), stk\bigr)$ if and only if there exists a \emph{symbolic well-formed} run from $(q, X_1[stk]\Omega_1)$ to $(p, \Omega_1)$ in $\aut$, and none otherwise.  
        \end{lemma}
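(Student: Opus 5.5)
We give an explicit alternating $\stack(\struc)$-transducer $T_4(\aut)$ and show, by induction on the length of symbolic well-formed runs of $\aut$, that its accepting run trees from $\bigl((q,p),stk\bigr)$ correspond one-to-one with symbolic well-formed runs $(q, X_1[stk]\Omega_1)\xrightarrow{*}(p,\Omega_1)$. Since $\aut$ is deterministic, there is at most one such run, so there is at most one run tree. The state $(q,p)$ means ``starting in $q$ with $stk$ as the topmost order-$2$ stack, the computation pops this order-$1$ atom and arrives in state $p$.''

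\textbf{Construction.} The states are $(\mathcal{Q}\times\mathcal{Q})\cup\{End\}$. We translate each rule $\delta=(q,\mathrm{g})\to(q',\varepsilon,o)$ of $\aut$, in the normal form of \cref{subsec:trans_normal_form}, according to the kind of $o$.
\begin{itemize}
\item If $o$ acts only on the order-$2$ stack (renaming, $push_2$, $pop_2$, $operation$), then for every $p$ we add $((q,p),\mathrm{g})\to((q',p),\varepsilon,o)$.
\item If $o=pop_1$, the rule has guard $\mathcal{P}_{\#_2}$. We add $((q,q'),\mathcal{P}_{\#_2})\to(End,\varepsilon,id)$, which accepts. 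States $(q,p)$ with $p\neq q'$ have no applicable rule there, so they fail.
\item If $o=push_1(X_1X_1)$, the run of $\aut$ splits into two parts. First it performs a well-formed subrun on the new top copy of $stk$, reaching some state $r$. Then it performs a well-formed subrun on the lower copy, which is the same $stk$, reaching $p$. For every pair $(p,r)$ we add the conjunctive rule $((q,p),\mathrm{g})\to((q',r),\varepsilon,id)\wedge((r,p),\varepsilon,id)$.
\end{itemize}
Both branches start from $stk$ unchanged, because $push_1$ copies the topmost order-$2$ stack.

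\textbf{Correctness.} We argue by induction on the length of the run.
\begin{itemize}
\item \emph{Run to tree.} Take a symbolic well-formed run $c$ from $(q,X_1[stk]\Omega_1)$ to $(p,\Omega_1)$. Its Dyck structure on order-$1$ operations decomposes it uniquely. Either it begins with an order-$2$ step, followed by a shorter well-formed run. Or it is a final $pop_1$. Or it begins with $push_1$, followed by a well-formed run $c_1$ on the top copy reaching $r$, followed by a well-formed run $c_2$ from $(r,X_1[stk]\Omega_1)$ reaching $p$. In each case we build the run tree from the trees given by the induction hypothesis.
\item \emph{Tree to run.} Conversely, each accepting run tree rooted at $((q,p),stk)$ yields such a run by concatenating the runs read off its branches. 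In the conjunctive case we concatenate $push_1$, then the run of the left subtree, then the run of the right subtree.
\item \emph{Symbolic part.} Neither direction inspects $\Omega_1$, so the correspondence holds for symbolic runs.
\end{itemize}

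\textbf{Uniqueness.} The main obstacle is the conjunctive rule: the alternating transducer guesses $r$, and we must show this produces exactly one tree rather than several. Determinism of $\aut$ gives a single rule applicable at $(q,stk)$, so only one translated rule family fires. For the guess, determinism, totality, and the absence of livelocks (assumed for $\aut$) imply that from $(q',X_1[stk]\Omega_1)$ there is exactly one well-formed run, reaching a unique $r$. Every other choice of $r$ gives a left branch with no accepting tree, by the induction hypothesis in the negative direction. Hence the number of trees is the product of the subtree counts, which is $1$ or $0$.

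For well-foundedness, we induct on the length of the run when one exists. When none exists we must rule out infinite run trees, since the definition of run tree is inductive. Any branch of a tree corresponds to a run of $\aut$, which cannot cycle by the no-livelock assumption. Therefore a tree exists exactly when the run exists, and the count is exactly $1$ or $0$, as stated in \cref{lem:bij_trans_to_alt_trans}.
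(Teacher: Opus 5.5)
Your proposal is correct and follows essentially the same route as the paper: you give the same construction $T_4(\aut)$, which carries the target state through order-$2$ steps, splits each $push_1$ into a conjunction with a guessed intermediate state $r$, and accepts at a $pop_1$ only when the guessed state matches. Your proof is likewise a structural induction in both directions (runs to run trees and back) in which determinism of $\aut$ rules out every wrong guess of $r$, with one superfluous part: run trees are finite by their inductive definition, so there are no infinite trees to exclude, and the case where no run exists is already handled by your tree-to-run direction.
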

        This lemma is obtained by providing an explicit transformation $T_4$.
        Its main idea is to contract runs into run trees whose leaves correspond to its $pop_1$. 
        In a run of $\aut$, rules are applied to the topmost order $2$ stack. 
        A branch of a run tree of $\aut_{\alt}$ simulates the part of the associated run of $\aut$ corresponding to an order $2$ stack or its predecessor. 
        For a run tree of $\aut_{\alt}$, the associated run of $\aut$ can be reconstructed by a depth-first search on the run tree. 
        When the topmost order $2$ stack changes via an order $1$ operation, we must ensure state coherence in $\aut_{\alt}$ despite the lack of state synchronization caused by the contraction. 
        Similary to \cref{subsubsec:pred_gram} ,this is achieved by assuming the state in which the topmost stack created by $push_1$ will be popped and by verifying this assumption in the rule obtained from $pop_1$. 
        If the guess is incorrect, the branch fails, hence also the full tree.
        \begin{figure}[h]
                \centering
                \begin{tikzpicture}
                    \draw (0,1.2) node {$(q_0,q_{fin})$};
                    \draw[thick, yellow,-] (0,0.95) -- (1,0.95);
                    \draw (1.6,0.9) node {$push_1$};
                    \draw[thick, yellow,-] (1,0.95) -- (1.5,1.95);
                    \draw (1.5,2.15) node {$(q,\prescript{1}{}{p})$};
                    \draw[thick, yellow,-] (1.5,1.95) -- (2,1.95);
                    \draw[thick, yellow,-] (2,1.95) -- (2.5,2.95);
                    \draw[thick, yellow,-] (2.5,2.95) -- (4,2.95);
                    \draw[thick, -] (4,2.95) -- (4.5,2.95);
                    \draw (4.25,3.10) node {$pop_1$};
                    \draw (4.7,2.95) node {$\top$};
                    \draw[thick, blue,-] (0,0.90) -- (1,0.90);
                    \draw[thick, blue,-] (1,0.90) -- (1.5,1.90);
                    \draw[thick, blue,-] (1.5,1.90) -- (4,1.90);
                    \draw[thick,-] (4,1.90) -- (4.5,1.90);
                    \draw (4.25,2.05) node {$pop_1$};
                    \draw (4.7,1.9) node {$\top$};
                    \draw[thick, green,-] (0,0.85) -- (1,0.85);
                    \draw[thick, green,-] (1,0.85) -- (1.5,1.85);
                    \draw[thick, green,-] (1.5,1.85) -- (2.5,1.85);
                    \draw[thick, green,-] (2.5,1.85) -- (3,0.85);
                    \draw[thick, green,-] (3,0.85) -- (3.5,0.85);
                    \draw (3.5,0.6) node {$(q',\prescript{1}{}{p})$};
                    \draw (3.75,1) node {$pop_1$};
                    \draw[thick, -] (3.5,0.85) -- (4,0.85);
                    \draw (4.2,0.85) node {$\top$};
                    \draw[thick, red,-] (0,0.80) -- (1,0.80);
                    \draw[thick, red,-] (1,0.80) -- (1.5,-0.80);
                    \draw (1.5,-1.) node {$(\prescript{1}{}{p},q_{fin})$};
                    \draw[thick, red,-] (1.5,-0.80) -- (2.5,-0.80);
                    \draw[thick, -] (2.5,-0.80) -- (3,-0.80);
                    \draw (2.75,-0.65) node {$pop_1$};
                    \draw (3.2,-0.8) node {$\top$};
                \end{tikzpicture}
                \caption{visualization of the simulation of run in \cref{fig:separation_of_order_1} }\label{fig:run_tree}
            \end{figure}
        \begin{definition}\label{def:transf_trans_to_alt_trans}
            We define $T_4(\aut):= \bigl(\mathcal{Q}_{T_4(\aut)}, \emptyset, \Delta_{T_4(\aut)},q_{0_{T_4(\aut)}}, \stack (\Gamma_2, \struc),Start_{T_4(\aut)},(\top, \#)\bigr)$ where : 
            \begin{itemize}
                \item $\mathcal{Q}_{T_4(\aut)}=(\mathcal{Q}_\aut\times\mathcal{Q}_\aut)\cup\{\top\}$
                \item $\Delta_{T_4(\aut)}=\underset{\delta\in\Delta_\aut}{\bigcup}T_4(\delta)$ where $T_4(\delta)$ is defined by: \\
                    \begin{tabular}{lcl}
                        If $\delta$ is a $pop_1$ rule then &$T_4(\delta)$ =&$\Bigl\{\bigl((q, p), \mathcal{P}_{\#}\bigr)\to\bigl(\top, \varepsilon, id\bigr)\Bigr\}$\\
                        If $\delta$ is a $push_1$ rule then &$T_4(\delta)$ =&$\Bigl\{\bigl((q, s), \mathrm{g}\bigr)\to\bigl((p, r), \varepsilon, id\bigr)\wedge \bigl((r, s), \varepsilon, id\bigr)$\\
                        &&\Large$\vert$\small$(r\in\mathcal{Q})\wedge (s\in\mathcal{Q})\Bigr\}$\\
                        Else &$T_4(\delta)$ =&$\Bigl\{\bigl((q, s), \mathrm{g}\bigr)\to\bigl((p, s), \varepsilon, op\bigr)|s\in\mathcal{Q}\Bigr\}$
                    \end{tabular}\\
                    Following the notation used in \cref{subsec:trans_normal_form}.\\
                \item $q_{0_{T_4(\aut)}}=(q_0, q_{fin})$
                \item $Start_{T_4(\aut)}=\{Start_2[\sigma]\#|\sigma\in \mathcal{S}_\struc\}$
            \end{itemize}
            
        \end{definition}
        \begin{claimproof}[Proof Sketch of \cref{lem:bij_trans_to_alt_trans}]
            We exhibit a bijection between the \emph{symbolic well-formed} runs of $\aut$ and the accepting run trees of $T_4(\aut)$. 
            This is done by promoting $T_4$ to a relation from the \emph{symbolic well-formed} runs of $\aut$ to the accepting run trees of $T_4(\aut)$, and by showing that both $T_4$ and $T_4^{-1}$ are total functions.
            Both directions are solved by structural induction, using the determinism of $\aut$.
        \end{claimproof}
        We state a lemma that can be regarded as the key lemma of this part. 
        \begin{lemma}\label{lem:alt_trans_unic}(uniqueness property)
            For all $q\in\mathcal{Q}_\aut, X[\sigma]\in \Gamma_2\times\mathcal{S}_\struc$ and $\psi\in \mathcal{P} (\mathcal{Q}_\aut\times\mathcal{Q}_\aut)$ functional, 
            $\exists^{\leq1}p\in\mathcal{Q}_\aut, \varphi\subseteq\psi, \alpha\in\mathcal{T} (\Delta_{T_4(\aut)}), \bigl((q, p),X[\sigma]\Omega\bigr)\underset{T_4(\aut)}{\xrightarrow{\alpha}}\underset{(q_i, p_i)\in\varphi}{\bigwedge} \bigl((q_i, p_i), \Omega\bigr)$. 
        \end{lemma}
        From now on, we call $\phi_{q, X, \psi, \sigma}$ the subset of $\psi$ and $p_{q, X, \psi, \sigma}$ the state such that
        \begin{center}
            $\bigl((q, p_{q, X, \psi, \sigma}), X[\sigma].\Omega\bigr) \xrightarrow{*} \bigl(\underset{(q_i, p_i)\in\phi_{q, X, \psi, \sigma}}{\bigwedge} (q_i, p_i), \Omega\bigr)$
        \end{center}
        This uniqueness property is necessary in the definition of the enumerating grammar. In it, when eliminating the additional predicates, we need to know $\phi_{q, X, \psi, \sigma}$. 
        This can be handled with a non-determistic guess, as thanks to the uniqueness property, at most one guess can produce a tree.\\
        As a consequence of this lemma, there exists at most one $(q', p_{q, X, \psi, \sigma})$ ending state for the lower branch,
        i.e. the branch corresponding to the second symbol for every $push_{1_{T_4}}$, for a state $q$, an atom $X[\sigma]$, and a functional set $\psi$.
        We denote this state by $(q_{q, X, \psi, \sigma}, p_{q, X, \psi, \sigma})$.\\
        By definition,
        \begin{tabular}{l}
            $\forall X[\sigma]\in \Gamma_2\times\mathcal{S}_\struc, \forall q \in \mathcal{Q}, \forall stk\in \mathcal{S}_{\stack^2(\struc)}, \forall \psi \in \mathcal{P}(\mathcal{Q}\times\mathcal{Q}),$\\
            $ \bigl(\forall (p, r)\in\psi, (p, X_1[stk]\Omega_1)\underset{\aut}{\xrightarrow{*}}(r, \Omega_1)\bigr)$ \\
            $\implies \bigl((q, X_1[X_2[\sigma]stk]\Omega_1)\underset{\aut}{\xrightarrow{*}}(q_{q, X, \psi, \sigma}, X_1[stk]\Omega_1)\bigr)$
        \end{tabular}
        \begin{claimproof}[Proof Sketch of \cref{lem:alt_trans_unic}]
            For the sake of simplicity, we prove this lemma by proving it for any stack $stk$ instead of $X[\sigma]$.
            We establish this property by structural induction on run trees. By the determinism of $\aut$, only one rule $r$ can be applied to a given configuration in $\aut$. 
            Consequently, only rules produced by $T_4(r)$ can be applied in $T_4(\aut)$. 
            The only minor difficulty concerns the rules obtained from the transformation of $push_1$ rules due to the union of the two branches. 
            This is solved by the conservation of $p$ through the run tree in the upper branch, fixing the state $q$ in the lower branch. 
        \end{claimproof}
        For the rest of the paper, we denote $T_4(\aut)$ as $\aut_{\alt}=\bigl(\mathcal{Q}_{\alt}, \emptyset, \Delta_{\alt}, q_0, \stack(\Gamma_2, \struc), \{X[\sigma]\#|\sigma\in \mathcal{S}_\struc\},(\top, \#)\bigr)$.
        \subsubsection{Proof grammar}\label{subsubsec:proof_gram}
            \begin{lemma}\label{lem:bij_alt_trans_to_proof_gram}
                There exists a grammar producing one derivation  tree from $(X,\{(q, p)\}\to \varphi)[\sigma]$ with $\varphi$ functional if and only if 
                $\bigl((q, p), X[\sigma]\Omega\bigr) \xrightarrow[\aut_{\alt}]{*} \underset{(q'_i, p'_i)\in \varphi}{\wedge} \bigl((q'_i, p'_i), \Omega\bigr)$, and zero otherwise.
            \end{lemma}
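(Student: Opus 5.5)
We construct an explicit $\struc$-grammar $T_5(\aut_{\alt})$ whose non-terminals are pairs $(X,\{(q,p)\}\to\varphi)$ with $X\in\Gamma_2\cup\{\#\}$ and $\varphi\subseteq\mathcal{Q}_{\alt}$ functional. Such a non-terminal, indexed by $\sigma$, stands for the symbolic statement ``from $\bigl((q,p),X[\sigma]\Omega\bigr)$ there is a run tree of $\aut_{\alt}$ whose leaves are exactly the configurations $\bigl((q'_i,p'_i),\Omega\bigr)$ for $(q'_i,p'_i)\in\varphi$''. We also use non-terminals $(X,\psi\to\varphi)$ for a finite set $\psi$ of source states, to handle the fact that several branches may sit on the same atom; for these, $\varphi=\bigcup_{(q,p)\in\psi}\varphi_{(q,p)}$.

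The rules are read off $\Delta_{\alt}$ by the same triple-grammar technique as for $T_3$. A rule of $\aut_{\alt}$ coming from an $operation$ or $renaming$ transition keeps a single branch. It yields a rule that applies $op$ to the index, or replaces $X$ by $X'$, with the same guard $\mathrm{g}$. A $push_2(X_2Y_2)$ transition splits the statement into the part consumed on $X$ and the part consumed on $Y$. It yields $(Z,\{(q,s)\}\to\varphi)\to (X,\{(p,s)\}\to\varphi')(Y,\varphi'\to\varphi)$, where $\varphi'$ ranges over functional sets. A set-indexed non-terminal $(Y,\psi\to\varphi)$ is expanded by an AND rule into one non-terminal per element of $\psi$, together with a partition of $\varphi$. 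The $push_1$-derived conjunction $\bigl((p,r),\varepsilon,id\bigr)\wedge\bigl((r,s),\varepsilon,id\bigr)$ yields an AND rule producing $(X,\{(p,r)\}\to\varphi_1)$ and $(X,\{(r,s)\}\to\varphi_2)$ with $\varphi=\varphi_1\cup\varphi_2$. The $pop_2$ rules produce $\varepsilon$ exactly when the popped atom's leaf states $\{(q',p')\}$ form the declared $\varphi$, i.e.\ a rule $(X,\{(q,p)\}\to\{(q',p')\})\to\varepsilon$ guarded by $\mathrm{g}$ whenever the transition leads from $(q,p)$ to $(q',p')$ by a $pop_2$. Finally, the $pop_1$-derived rule to $\top$ on $\#$ gives $(\#,\{(q,p)\}\to\emptyset)\to\varepsilon$.

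The proof then has two directions, both by structural induction. First, every derivation tree from $(X,\{(q,p)\}\to\varphi)[\sigma]$ maps to a run tree $\bigl((q,p),X[\sigma]\Omega\bigr)\xrightarrow{\alpha}\bigwedge_{\varphi}\bigl((q'_i,p'_i),\Omega\bigr)$: replace each grammar rule by the $\aut_{\alt}$ transition it came from, and concatenate sub-run-trees at the split points. Conversely, every such symbolic run tree decomposes, at each $pop_2$ of the current atom, into a derivation tree. This gives existence: there is a tree if and only if the run tree exists.

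The main obstacle is uniqueness, i.e.\ showing there is at most one derivation tree. Existential choices appear at the intermediate $\varphi'$ in $push_2$ rules, at the partition of $\varphi$ among branches, and at the guessed $r$ in $push_1$ rules. For each of them we invoke \cref{lem:alt_trans_unic} with $\psi=\varphi'$ or $\psi=\varphi$. Given $q$, the atom and a functional $\psi$, there is at most one $p$ and one $\phi_{q,X,\psi,\sigma}\subseteq\psi$ admitting a run tree. Hence at most one guess of $\varphi'$, of $r$ (which equals $p_{q,X,\psi,\sigma}$ in the upper branch) and of the partition leads to a complete tree; every other guess dies.

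Beyond the guesses, one also needs that the sub-run-tree itself is unique. This follows from the determinism of $\aut$, as in the proof of \cref{lem:alt_trans_unic}: only transitions in $T_4(r)$ for the unique applicable $r$ can fire. Functionality of $\varphi$ must also be preserved under splitting; we check this by noting that each leaf state $(q'_i,p'_i)$ is forced by its first component. Combining existence with uniqueness yields exactly one derivation tree when the run tree exists and zero otherwise.
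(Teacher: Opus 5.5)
Your overall route is the paper's: the same $T_5$-style scheme (single-source and set-source non-terminals, a functional intermediate set at $push_2$, a guessed $r$ and a union $\varphi_1\cup\varphi_2$ at $push_1$), one structural induction per direction for existence, and \cref{lem:alt_trans_unic} for uniqueness.

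The step that does not go through as you state it is the uniqueness of the intermediate set $\varphi'$ in the $push_2$ rule. Invoking \cref{lem:alt_trans_unic} with $\psi=\varphi'$ only says that, for one fixed candidate $\varphi'$, the upper sub-tree is unique. It never compares two candidates.

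In general, the upper non-terminal $(X,\{(p,s)\}\to\varphi')$ on its own does have derivation trees for many different functional $\varphi'$. A $push_1$ in the run on $X[\sigma]$ lets the upper copy end in a leaf whose second component is the free guess $r$. Nothing in the run on $X[\sigma]$ checks that guess; only a $pop_1$ on $\#$ does. Invoking the lemma with $\psi=\varphi$ on the upper branch is not available either, since its leaves sit on $Y[\sigma]\Omega$ and $\varphi'\not\subseteq\varphi$ in general.

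The missing idea is to transport the constraint upward from the lower part. Applying \cref{lem:alt_trans_unic} with $\psi=\varphi$ to each $(Y,\{(q_i,p_i)\}\to\varphi_i)$ shows that $p_i$ is determined by $q_i$, given $Y$, $\sigma$ and $\varphi$. Hence every admissible $\varphi'$ lies inside one functional set $\Psi$ that does not depend on the candidate, and you then apply the lemma to the upper branch with $\psi=\Psi$.

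Alternatively, apply the lemma once at the root with $\psi=\varphi$ and prove that the map from derivation trees to run trees is injective. That means showing that $\varphi'$, the splits of $\varphi$ and the guesses $r$ can all be read off the run tree. The paper only gestures at this step (``the functionality of $\psi$ allowing the use of \cref{lem:alt_trans_unic}''), but one of these two arguments is what is needed.

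Two construction details also matter for the count ``exactly one''. First, the set-source expansion of $(Y,\psi\to\varphi)$ must list the elements of $\psi$ in a single fixed order; otherwise each ordering gives a separate derivation tree. The paper's $Decomp$ splits off the $<$-minimal pair precisely for this reason.

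Second, the family $(\varphi_v)_{v\in\psi}$ must be a possibly overlapping cover $\varphi=\bigcup_{v}\varphi_v$, as in your first paragraph, not a disjoint partition. Distinct sources can pop $Y[\sigma]$ into the same leaf $(q',p')$, and a disjoint partition would then give zero derivation trees although the run tree exists. With these points fixed, your proof coincides with the paper's.
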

            This lemma is obtained by providing an explicit transformation $T_5$.
            Its main idea is ,once again, to represent each atom of a stack $\aut_{\alt}$ as a non-terminal.
            Once again, the idea is similar to that of a triple grammar, but adapted to alternating transducers.
            Due to the fact that alternating transducers produce run trees rather than runs, the third component is the set of states reached after popping the current atom, rather than a single state.
            These sets are separated by rules simulating $push_1$ and are verified as singletons during the simulation of $pop$ rules.
            As an extension of \cref{lem:alt_trans_unic}, thanks to \cref{lem:bij_alt_trans_to_proof_gram}, we ensure that the grammar produces exactly one derivation tree from an indexed non-terminal $(X,{(q, p)}\to \varphi)[\sigma]$, and no derivation tree otherwise, if and only if $p=p_{q, X, \psi, \sigma}$ and $\varphi=\phi_{q, X, \psi, \sigma}$ for all $\psi$ such that $\varphi\subseteq\psi$.
            \begin{definition}\label{def:transf_alt_trans_to_proof_gram}
                We consider an arbitrary total order $<$ on the states of $\aut_{\alt}$.
                We define
                \begin{center}
                    $T_5(\aut_{\alt}):=\Bigl(\Gamma\times\bigl(\mathcal{P}(\mathcal{Q}_\aut\times\mathcal{Q}_\aut)\times\mathcal{P}(\mathcal{Q}_\aut\times\mathcal{Q}_\aut)\bigr), \emptyset, \struc, P=\bigl(\mathrm{Decomp} \cup\underset{r\in{\Delta_{\alt}}}{\bigcup}T_5(r)\bigr)\Bigr)$ 
                \end{center}
                a grammar-scheme where\\
                \begin{tabular}{lcl}
                    &$Decomp$=&$\Biggl\{\biggl(\Bigl(X,\bigl(\{(q, p)\}\cup\psi\bigr)\to(\varphi_1\cup\varphi_2)\Bigr), \top\biggr)$\\
                    &&$\to\biggl(\Bigl(A, (\{(q, p)\}\to \varphi_1)\Bigr)\Bigl(A, (\psi\to \varphi_2)\Bigr), id\biggr)$\\
                    &&\Huge$\vert$\small$\psi, \varphi_1, \varphi_2\in \mathcal{P}(\mathcal{Q}_{\alt}), \forall v\in\psi, (q, p)<v \Biggr\}$\\
                    If $r$ is $op_{T_4}$ then &$T_5(r)$ =&$\Biggl\{\biggl(\Bigl(Z,\bigl(\{(q, s)\}\to\varphi\bigr)\Bigr), g\biggr)$\\
                    &&$\to\biggl(\Bigl(Z,\bigl(\{(p, s)\}\to\varphi\bigr)\Bigr), op\biggr)$\Huge$\vert$\small$\varphi\in \mathcal{P}(\mathcal{Q}_{\alt})\Biggr\}$\\
                    If $r$ is $pop_{2_{T_4}}$ then &$T_5(r)$ =&$\Biggl\{\biggl(\Bigl(Z,\bigl(\{(q, s)\}\to\{(p, s)\}\bigr)\Bigr), g\biggr)\to\biggl(\varepsilon, id\biggr)\Biggr\}$\\
                    If $r$ is $renaming_{2_{T_4}}$ then &$T_5(r)$ =&$\Biggl\{\biggl(\Bigl(Z,\bigl(\{(q, s)\}\to\varphi\bigr)\Bigr), g\biggr)$\\
                    &&$\to\biggl(\Bigl(X, \bigl(\{(p, s)\}\to \varphi\bigr)\Bigr), id\biggr)$\Huge$\vert$\small$\varphi\in \mathcal{P}(\mathcal{Q}_{\alt})\Biggr\}$\\
                    If $r$ is  $push_{2_{T_4}}$ then &$T_5(r)$ =&$\Biggl\{\biggl(\Bigl(Z,\bigl(\{(q, s)\}\to\varphi\bigr)\Bigr), g\biggr)\to$\\
                    &&$\biggl(\Bigl(X, \bigl(\{(p, s)\}\to \psi\bigr)\Bigr)\Bigl(Y, \bigl(\psi\to \varphi\bigr)\Bigr), id\biggr)$\\
                    &&$|\psi, \varphi\subseteq\mathcal{P} (\mathcal{Q}_{\alt})\text{ functional }\Biggr\}$\\
                    If $r$ is $pop_{1_{T_4}}$ then &$T_5(r)$ =&$\Biggl\{\biggl(\Bigl(\#,\bigl(\{(q, p)\}\to\top\bigr)\Bigr), \top\biggr)\to\biggl(\varepsilon, id\biggr)\Biggr\}$\\
                    If $r$ is $push_{1_{T_4}}$ then &$T_5(r)$ =&$\Biggl\{\biggl(\Bigl(Z,\bigl(\{(q, s)\}\to(\varphi_1\cup\varphi_2)\bigr)\Bigr), g\biggr)$\\
                    &&$\to\biggl(\Bigl(Z,\bigl(\{(p, r)\}\to \varphi_1\bigr)\Bigr)\Bigl(Z,\bigl(\{(r, s)\}\to \varphi_2\bigr)\Bigr), id\biggr)$\\
                    &&\Huge$\vert$\small$\varphi_1, \varphi_2\in \mathcal{P}(\mathcal{Q}_{\alt})\Biggr\}$\\
                \end{tabular}\\           
                Following the notation used in \cref{subsec:trans_normal_form} and \cref{def:transf_trans_to_alt_trans}
            \end{definition}
            \begin{claimproof}[Proof Sketch of \cref{lem:bij_alt_trans_to_proof_gram}]
                We consider the promotion of $T_5$ as a relation from the \emph{symbolic well-formed} run trees to the derivation  trees.
                We prove that $T_5^{-1}$ is a total function using \cref{lem:alt_trans_unic}.
                Then, we prove a uniqueness property lemma, that follows from \cref{lem:alt_trans_unic}, for the derivation trees of the grammar scheme and we use it to prove that $T_5$ is a total function.
                \begin{center}
                    \begin{tikzpicture}[shorten >= 1pt, node distance = 2cm, on grid, auto]
                        \node [state, shape=ellipse] (q0)  {run trees}; 
                        \node [state, ellipse] (q1) [right=7.5cm of q0] {derivation trees};  
                        \draw [->](q0) to [bend left] node (note1) [midway, above] {Total function} (q1) ;
                        \node (note1') [above=10pt of note1] {$T_5$}; 
                        \draw [->](q1) to [bend left] node (note2) [midway, below] {${T_5}^{-1}$} (q0) ;
                        \node (note2') [below=10pt of note2] {Total function}; 
                        \node (note2'') [below=10pt of note2'] {}; 
                        \node (1) [above right=of q1] {derivation trees}; 
                        \node (1') [below=10pt of 1] {uniqueness}; 
                        \node (2) [above left=of q0] {run trees uniqueness }; 
                        \node (2') [below=10pt of 2] { \cref{lem:alt_trans_unic} };
                        \draw [->](q0) to  node (0) [midway, above] {Total bijective function}  (q1) ;
                        \draw [->](q1) to  node (0') [midway, below] {\cref{lem:bij_alt_trans_to_proof_gram}} (q0);
                    \end{tikzpicture}
                \end{center}
                In all parts of this proof, we consider a promotion $T_5$ as a relation from the \emph{symbolic well-formed} run trees of $\aut_{\alt}$ to the derivation  trees of $T_5(\aut_{\alt})$ rooted in $\bigl(X,(q, p)\to\varphi\bigr)[\sigma]$.
                The relation $T_5$ associates a \emph{symbolic well-formed} run tree $\bigl((q, p), X[\sigma]\Omega\bigr) \xrightarrow{c} \bigl(\varphi, \Omega\bigr)$ with a derivation tree from $\bigl(X,(q, p)\to\varphi\bigr)[\sigma]$.\\
                We first prove, via structural induction from $\bigl(X,(q, p)\to\varphi\bigr)[\sigma]$ with $\varphi$ functional, that $T_5^{-1}$ is total. 
                Functionality follows immediately, since there is at most one \emph{symbolic well-formed} run tree from $\bigl((q, p), X[\sigma]\Omega\bigr)$ to $\bigl(\varphi, \Omega\bigr)$ with $\varphi$ functional, as established in \cref{lem:alt_trans_unic}.\\
                Secondly, we prove via structural induction that $T_5$ is total. 
                We prove that $T_5$ is functional if there is a uniqueness property for derivation trees of $T_5(\aut_{\alt})$ rooted in  $\bigl(X,(q, p)\to\varphi\bigr)$ with $\varphi$ functional.\\
                So, finally, we prove that for all non-terminals $\bigl(X,(q, p)\to\varphi\bigr)$ with $\varphi$ functional, there exists at most one derivation tree rooted in the non-terminal.
                There is a minor difficulty for  $push_{T_5}$ due to the addition of $\psi$. This is solved by the order $<$ on states of $\aut_{\alt}$, ensuring uniqueness in the separation made by $Decomp$, and the functionality of $\psi$ allowing the use of \cref{lem:alt_trans_unic}.
            \end{claimproof}
            For the rest of the paper, we denote $T_5(\aut_{\alt})$ as $\gram_{\pr}=(T_{\pr}, \emptyset, \struc, P_{\pr})$.
        \subsection{Enumerating grammar}\label{subsec:enum_gramm}
            We give a transformation creating, from $\gram_{\pred}$ and $\gram_{\pr}$, a $\struc$-grammar satisfying \cref{thm:trans_to_gram}.
            This transformation consists in a substitution of each predicate $\mathcal{P}_{M', M, A}$ in $\gram_{\pred}$ by a concatenation of non-terminals $\bigl(A,(q_i, p_i)\to M_i\bigr)$ of $\gram_\pr$ with $(q_i, p_i)\in M'$ and $M_i\subseteq M$. 
            Thanks to the uniqueness property in $\gram_{\pr}$ for every subset of a functional set, and $M$ is functional by \cref{rem:totfunc}, 
            we can use non-determinism to guess the correct $M_i$ for every element of $M'$ without multiplying the number of derivation trees. 
            \begin{definition}\label{def:transf_proof_gram_to_enum_gram} 
                We define $T_6(\gram_{\pred}, \gram_{\pr}):=\bigl(T_{\pred} \cup T_{\pr}, \emptyset, \struc,$\\$ P_{\pr} \cup \underset{r\in P_{\pred}}{\bigcup} T_6(r), S_{\pred}\bigr)$.\\
                With $T_6$:
                \setlength{\tabcolsep}{1pt}
                \begin{tabular}{lllclllll}
                    If&$r $&$=$&&$ \Bigl(A, g$&$\wedge \mathcal{P}_{M', M, Y}$&$\Bigr)\to \Bigl(\gamma$&&$, op\Bigr)$ \\
                    then& $T_6(r)$&$=$&$ \biggl\{$&$\Bigl(A, g$&&$\Bigr)\to \Bigl(\gamma$&$\underset{0\leq i\leq \#(\mathcal{Q}_\aut)}{\odot}\bigl(Y,(q_i, p_i)\to M_{i}\bigr)$&$, op\Bigr)$\\
                    &&&&&\huge$\vert$\small$ \underset{0\leq i\leq \#(\mathcal{Q}_\aut)}{\forall}$&$\Bigl(\bigl(q_i, p_i$&$\bigr)\in M' \wedge M_{i}\subseteq M\Bigr)\biggr\}$\\
                    Else&$T_6(r)$&$=$&$\{r\}$\\
                \end{tabular}  
        \end{definition}
        \begin{proof}[Proof of \cref{thm:trans_to_gram}]
            By \cref{lem:trans_pred_to_gram_pred}, the grammar $T_6(\gram_{\pred}, \gram_{\pr})$ produces $u(\sigma)$ derivations trees from $S_{\pred}$ for all $\sigma\in \mathcal{S}_\struc$ if the substitution of $\mathcal{P}_{M', M, X}$  produces one derivation tree if and only if the predicate is true and zero otherwise.\\
            By \cref{lem:alt_trans_unic} and \cref{lem:bij_alt_trans_to_proof_gram}, we know that there exists a derivation tree from $\bigl(X,(q_i, p_i)\to M_i\bigr)[\sigma]$ only if $(q_{q_i, X, M, \sigma}, p_{q_i, X, M, \sigma})\in M_i$.
            By rewriting the property given for $q_{q_i, X, M, \sigma}$ in \cref{subsubsec:alt_trans}, we have that $\forall stk\in \mathcal{S}_{\stack(\struc)},$ such that $M=\bigl\{(q, p)\mid q, X_1[stk]\Omega_1 \underset{\aut}{\xrightarrow{*}}p, \Omega_1\bigr\}$  \\
            \begin{tabular}{ll}
                $\Bigl(\bigl(X,(q_i, p_i)\to M_i\bigr)[\sigma]$ produces a derivation tree &$\implies$\\
                $\underbrace{\exists (q'_i, p_i) \in M, \bigl(q_i, X_1[X_2[\sigma]stk]\Omega_1\bigr) \underset{\aut}{\xrightarrow{*}}\bigl(q'_i, X_1[stk]\Omega_1\bigr)\Bigr)}_{(\text{denoted } P)}$&\\
            \end{tabular}\\
            As said in \cref{subsubsec:proof_gram}, $(X,(q_i, p_i)\to M_i)[\sigma]$ can produce a derivation tree if and only if $M_i=\phi_{q_i, X, M, \sigma}$. Consequently, \\
            \begin{tabular}{ll}    
                $\forall stk\in \mathcal{S}_{\stack(\struc)}, \Bigl(M=\bigl\{(q, p)\mid q, X_1[stk]\Omega_1 \underset{\aut}{\xrightarrow{*}}p, \Omega_1 \bigr\}\Bigr)\implies$\\
                $\bigl(\underset{M_i\subseteq M}{\oplus}\bigl(X,(q_i, p_i)\to M_i\bigr)[\sigma]$  produce a derivation tree $\Longleftrightarrow P\bigr)$\\
            \end{tabular}\\
            With $\oplus$ a notation for non-determinism.\\
            As $\aut$ is complete and without livelock, we know that all incorrect non-terminal will die in a finite derivation tree.\\
            So, \\
            \begin{tabular}{ll}    
                $\forall stk\in \mathcal{S}_{\stack(\struc)}, \Bigl(M=\bigl\{(q, p)\mid q, X_1[stk]\Omega_1 \underset{\aut}{\xrightarrow{*}}p, \Omega_1 \bigr\}\Bigr)\implies$\\
                $\Bigl(\underset{0\leq i\leq \#(\mathcal{Q}_\aut)}{\odot}\underset{M_i\subseteq M}{\oplus}\bigl(X,(q_i, p_i)\to M_i\bigr)[\sigma]$ produces a derivation tree $\Longleftrightarrow \forall (q_i, p_i)\in M', P\Bigr)$.
            \end{tabular}\\
            Which is equivalent to the definition of $\mathcal{P}_{M', M, X}[\sigma]$ given in \cref{def:transf_transpred_to_pred_gram}, thanks to the totality of  $M$ and $M'$ as stated in \cref{rem:totfunc}.
        \end{proof}
\section{Conclusions and further work}
Using effective proofs, we have shown that functions \emph{computed} by $\stack^2(\struc)$-transducers are equivalent to functions enumerating trees generated by $\struc$-grammars.
Since a stack built over a structure is itself a structure, we obtain the following corollary:
\begin{corollary}
The following classes of functions are equivalent:
\begin{itemize}
\item functions computed by $\stack^{k+2}(\struc)$-transducers;
\item functions enumerating trees generated by $\stack^k(\struc)$-grammars.
\end{itemize}
In the case where $\struc$ is the structure on $\nat$ endowed with the operation $n \mapsto n\stackrel{.}{-}1$ and the non-zero predicate,
since the sequences defined by polynomial recurrences are computed by deterministic transducers with memory in stacks of stacks of $\nat$ (\cite{DBLP:journals/apal/FrataniS06}, Proposition 53), 
our result shows that such sequences can be represented by context-free grammars with indexes in $\nat$.\\
This link between transducers and combinatorial objects, can be thought of as an extension of the well-known links between $\nat$-linear recurrences and regular grammars, $\nat$-algebraic recurrences and context-free grammars (see the surveys \cite{Viennot85,DBLP:conf/dimacs/Delest94,DBLP:conf/stacs/Bousquet-Melou05}).\\  
Since they are contained in series defined by polynomial recurrences, this framework could be used to identify syntactic restrictions on indexed grammars to answer the representation problem for D-finite series.
\end{corollary}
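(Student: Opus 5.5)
The corollary follows by instantiating \cref{thm:gram_to_trans} and \cref{thm:trans_to_gram} with the structure $\mathbb{B} := \stack^k(\struc)$ in place of $\struc$. First, $\stack^k(\struc)$ is itself a structure in the sense of the preliminaries: its elements are the $k$-stacks, its predicates are those of $\struc$ (lifted to the topmost atom) together with the top-symbol predicates $\mathcal{P}_X$, and its operations are those of $\struc$ together with the $push$ operations. Next, $\stack^2(\stack^k(\struc))$ coincides with $\stack^{k+2}(\struc)$ up to renaming of the orders. This holds because operations and predicates of the inner structure act on the topmost atom, and that is exactly how the order-$j$ operations of a $(k+2)$-stack are defined. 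I would check this identification by an induction on $k$, using the definition of $\stack(\Gamma,\struc)$ and the convention that orders are numbered from the outermost stack.

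For the direction from grammars to transducers, take a $\stack^k(\struc)$-grammar without infinite derivations. \cref{thm:gram_to_trans} applied to $\mathbb{B}$ gives a $\stack^2(\mathbb{B})$-transducer computing its enumerating function, and by the identification this is a $\stack^{k+2}(\struc)$-transducer. For the converse, take a computing $\stack^{k+2}(\struc)$-transducer. View it as a $\stack^2(\mathbb{B})$-transducer, bring it into the normal form of \cref{subsec:trans_normal_form}, and apply \cref{thm:trans_to_gram} to obtain a $\mathbb{B}$-grammar, that is, a $\stack^k(\struc)$-grammar, counting $u$.

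\textbf{Main obstacle.} The hypotheses of the two theorems do not match exactly the informal classes in the corollary. \cref{thm:trans_to_gram} needs the absence of livelocks and the special shape of $Start$ (a single order-$1$ atom whose order-$2$ stack is one atom $Start_2[\sigma]$). \cref{thm:gram_to_trans} needs the absence of infinite derivations. I would handle this in three steps.

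\begin{itemize}
\item For $Start$: compare arguments through the index alone, so that the input $\sigma\in\mathcal{S}_\mathbb{B}$ is placed as $X_1[Start_2[\sigma]\#_2]\#_1$. This matches how both theorems are phrased.
\item For livelocks: argue that a computing transducer is total from its start configurations. Cycles reachable from non-start configurations can be cut off by the completion step (the sink state $\bot$), as the paper indicates.
\item For infinite derivations: restrict the grammar class to grammars whose enumerating function is well defined (finite), which I expect is what "enumerating" means.
\end{itemize}

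The second half of the corollary, about $\nat$ with $n\mapsto n\stackrel{.}{-}1$, takes $k=0$. Combine the cited result that polynomial recurrences are computed by $\stack^2(\nat)$-transducers with \cref{thm:trans_to_gram}.
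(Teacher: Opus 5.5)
Your main line is the paper's own argument. The paper justifies the corollary in one line: a stack built over a structure is itself a structure. So \cref{thm:gram_to_trans} and \cref{thm:trans_to_gram} are applied at $\stack^k(\struc)$, with $\stack^2(\stack^k(\struc))=\stack^{k+2}(\struc)$ holding directly by the nested definition, and the $\nat$ case is $k=0$ together with \cite{DBLP:journals/apal/FrataniS06}, Proposition 53. Where you go beyond the paper, in discharging the side hypotheses, two of your steps fail.

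\emph{Livelocks.} Completion cannot remove them. The sink state $\bot$ only adds transitions at configurations where no rule applies, so in a deterministic transducer an infinite run stays infinite after completion. The paper only claims that the normal form \emph{preserves} absence of livelock. It also states that most lemmas of \cref{sec:trans_to_gram} survive livelocks from non-reachable configurations, but the final merging step does not. The reason is \cref{rem:totfunc}: if some $(q,X_1[stk]\Omega_1)$ never reaches $\Omega_1$, the sets $M$ are no longer total functions, and both the predicates $\mathcal{P}_{M',M,Z}$ and the guessing in $T_6$ rely on that totality.

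\emph{Infinite derivations.} Finite counts do not give the absence of infinite derivations that \cref{thm:gram_to_trans} needs. Take the rules $(S,\top)\to(A\mid B,id)$, $(A,\top)\to(A,id)$ and $(B,\top)\to(a,id)$. Every $S[\sigma]$ has exactly one derivation tree, yet $T_1(\gram)$ cycles forever on the copy simulating $A$. So $T_1(\gram)$ is not total and computes nothing.

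Your arguments therefore do not remove these hypotheses; they must be built into the two classes. This is how the paper's one-line justification implicitly reads the corollary: transducers without livelock from any configuration on one side, grammars without infinite derivations on the other.

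For a genuine equivalence of classes, you should then check that each construction lands in the other restricted class. That means showing $T_1(\gram)$ has no livelock from any configuration, and that the grammar produced by $T_6$ has no infinite derivations; the paper only asserts the latter. Likewise, the $\nat$ application needs the transducers of Proposition 53 to be livelock-free from every configuration, and your completion step does not supply this.
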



\newpage
\bibliography{papier_gram_trans}
\newpage
\appendix
\tableofcontents  
\section{Annexe}\label{sec:annexe}
\subsection{\cref{subsec:gram_normal_form} proof}

    In this part we will proof the following lemma.

    \begin{lemma}\label{lem:gram_normal_form}
        There exists a transformation from a $\struc$-grammar to a normal form that preserves the number of $\square$-free derivation trees generated from the axiom indexed by $\sigma \in \mathcal{S}_\aut$.
    \end{lemma}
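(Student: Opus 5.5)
I will build the normal form in a sequence of local rewriting steps, each of which comes with an explicit bijection between derivation trees of the old and new grammar, so that the number of derivation trees rooted at $S[\sigma]$ is preserved. The steps are:
\begin{enumerate}
\item split long right-hand sides;
\item separate terminals, non-terminals and operations;
\item eliminate non-determinism between rules of the same non-terminal;
\item add the complementary rules.
\end{enumerate}
Since we only count trees containing no $\square$-leaf, the last step must be shown not to create new $\square$-free trees.

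\textbf{Steps 1 and 2.} A general rule $(A,\mathrm{g})\to(\prescript{0}{}{\beta}\prescript{1}{}{A}\ldots\prescript{k}{}{A}\prescript{k}{}{\beta},op)$ is first replaced by an Operation rule $(A,\mathrm{g})\to(A_r,op)$, with a fresh non-terminal $A_r$ and a trivial guard on the subsequent rules. The body is then produced from $A_r$ by a right comb of AND rules with identity operation. Every terminal letter $a$ is replaced by a fresh non-terminal $N_a$ carrying a Dumping rule $(N_a,\top)\to(a,id)$, and the empty body by $(A_r,\top)\to(\varepsilon,id)$. Since all fresh non-terminals have a unique applicable rule and the guard $\top$, each old derivation step corresponds to exactly one gadget subtree. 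Contracting these gadgets is therefore a bijection; I will prove this by structural induction on derivation trees.

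\textbf{Step 3.} If $A$ has rules $r_1,\dots,r_m$ (now each of a single form), I introduce fresh non-terminals and a binary OR chain $A\to A_{r_1}\mid A'_2$, $A'_2\to A_{r_2}\mid A'_3$, and so on. Each $A_{r_j}$ has the single rule $r_j$ with $A$ renamed to $A_{r_j}$, and the OR rules have guard $\top$. A derivation tree of the old grammar at $A[\sigma]$ using $r_j$ maps to the unique path through the chain ending in $A_{r_j}$. This gives a bijection, because the OR rule contributes a distinct child for each branch, and different branches give different trees.

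\textbf{Step 4.} For each non-terminal with its unique rule of guard $\mathrm{g}$, I add $(A,\neg\mathrm{g})\to\square$. A $\square$-free tree never uses these rules, so the $\square$-free trees of the final grammar are exactly the trees of the grammar before this step, and the count is preserved.

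The main obstacle is step 3 combined with guards. Choosing a branch $A_{r_j}$ whose guard fails yields no $\square$-free tree, since it can only be continued by the complementary $\square$ rule. I must also check that no duplicates arise, for instance when two original rules coincide. I would handle this by indexing the chain by rule identities rather than by bodies: the old grammar counts derivation trees labelled only by $rt$, so if two rules yield identical children they give the same tree. To match this, I would deduplicate $P$ up to equal one-step derivations (same guard semantics and same body and operation), or argue that the count in the paper's definition is per rule. I will fix this convention first, and then the remaining inductions are routine.
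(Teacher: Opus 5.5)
Your proposal is essentially the paper's proof: steps 1--2 are its $T_{sep}$ (an Operation rule into a fresh non-terminal, followed by a chain of AND and unary rules), step 3 is its OR-chain $T_{1rule}$ (whose definition of $\bigoplus$ misprints $C_{r_1}C_{r_2}$ for $C_{r_1}\mid C_{r_2}$), and step 4 is the final addition of complementary rules, for which your one-line argument is exactly what is needed. On the convention question, adopt the rule-labelled count, which the paper's bijections tacitly assume when they refer to the first applied rule of a tree; deduplicating $P$ is not enough, because two rules with overlapping but inequivalent guards, or with different operations that agree at some $\sigma$, coincide only at some indices, and such coincidences already break your steps 1--2 bijection (each rule gets its own fresh $A_r$), not only step 3.
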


    We prove this with two transformation. First, we show a transformation such that the resulting grammar satisfy the required rule form. 
    Second, we show a transformation such that the resulting grammar satisfy the “only one rule per non-terminal” condition using only rule in the required form. 
    Finally, we add the complementary rules.

\paragraph{First transformation}
Let $\gram=(N, T, \struc, P, S)$ be a $\struc$-grammar. \\
We considere that every non-terminal have at least one rule applying on them. If a non-terminal $A$ infringes these properties we solve it by adding the rule $(A, \bot)\to(\varepsilon, id)$   
We first define a transformation $T_{sep}$ from a $\struc$-grammar to a grammar satisfying the required rule forms.

\begin{definition}   
We define 
\[
T_{sep}(\gram)=(N, T_{T_{sep}(\gram)}=T\cup \underset{\underset{\to(\beta, op_r)\in P}{r=(p_r, g_r)}}{\bigcup}\{A_{r_i}, B_{r_i}|0<i\leq \#(\beta)\}, \struc, P_{T_{sep}(\gram)}= \underset{r\in P}{\bigcup}T(r), S).
\]
Let $r=(p_r, g_r)\to(\alpha_{r_1}\alpha_{r_2}\dots\alpha_{r_k}, op_r)$, where $\alpha_i\in(N\cup T)$. Then:
\begin{align*}
T(r)=&\{(p_r, g_r)\to(A_{r_1}, op_r)\}\\
&\cup\{(A_{r_i}, \top)\to(B_{r_i}A_{r_{i+1}}, id)|i<k\}\\
&\cup\{(B_{r_i}, \top)\to(\alpha_i, id)|i\leq k\}\\
&\cup\{(A_{r_k}, \top)\to(B_{r_k}, id)\}
\end{align*}
\end{definition}

According to its definition, $T_{sep}(\gram)$ satisfies the required rule forms.

\begin{lemma}
Let $\gram=(N, T, \struc, P, S)$ be a $\struc$-grammar that produces $u(\sigma)$ derivation trees from $S[\sigma]$. Then the $\struc$-grammar $T_{sep}(\gram)$ produces $u(\sigma)$ derivation trees rooted in  $S[\sigma]$.
\end{lemma}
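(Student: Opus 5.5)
The plan is to build an explicit bijection $\Phi$ between derivation trees of $\gram$ and derivation trees of $T_{sep}(\gram)$ that preserves the root label. I will define $\Phi$ by structural induction on derivation trees of $\gram$. A terminal leaf $(a,\emptyset)$ is sent to itself. For an internal node $(A[\sigma],t_1\dots t_k)$, the derivation step $A[\sigma]\to rt(t_1)\dots rt(t_k)$ is accepted by some rule $r=(A,g_r)\to(\alpha_{r_1}\dots\alpha_{r_k},op_r)$. So $\sigma\vDash g_r$, and $rt(t_i)$ is either $\alpha_{r_i}$ (a terminal) or $\alpha_{r_i}[op_r(\sigma)]$. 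The image $\Phi$ of this node is the tree rooted at $A[\sigma]$ with a single child $A_{r_1}[op_r(\sigma)]$. That child has children $B_{r_1}[op_r(\sigma)]$ and $A_{r_2}[op_r(\sigma)]$, and so on down the right comb, ending with $A_{r_k}[op_r(\sigma)]$, whose only child is $B_{r_k}[op_r(\sigma)]$. Each $B_{r_i}[op_r(\sigma)]$ has the single subtree $\Phi(t_i)$. Every step of this gadget is accepted by the corresponding rule of $T(r)$: the first uses guard $g_r$ and operation $op_r$, and all the others use guard $\top$ and $id$, so the index $op_r(\sigma)$ is propagated unchanged down to the $B_{r_i}$ and hence to the roots of the $\Phi(t_i)$. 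For $k=0$ the comb is empty, and this case needs a convention. I would read $T(r)$ as containing the rule $(A_{r_0},\top)\to(\varepsilon,id)$, or directly $(A,g_r)\to(\varepsilon,op_r)$, which is a Dumping rule. The patch rules $(A,\bot)\to(\varepsilon,id)$ fall under this case.

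For the inverse, note that the fresh symbols $A_{r_i},B_{r_i}$ are indexed by the rule $r$. Each has exactly one rule in $T_{sep}(\gram)$, and the only rules of $T_{sep}(\gram)$ with an original nonterminal $A$ on the left are the first rules $(A,g_r)\to(A_{r_1},op_r)$, one per $r$. So a derivation tree of $T_{sep}(\gram)$ rooted at $A[\sigma]$ determines $r$ from the label of its unique child. Below that child, the shape is forced to be exactly the comb above, and its index labels are all $op_r(\sigma)$. Reading off the subtrees below the $B_{r_i}$ and recursively contracting gives $\Psi$. A structural induction, on tree height in each direction, shows $\Psi\circ\Phi=id$ and $\Phi\circ\Psi=id$, and both maps preserve $rt$.

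Restricting to roots $S[\sigma]$ gives the equality of counts. If $u(\sigma)$ is infinite, the bijection still applies, but the theorem assumes no infinite derivations anyway. One point needs care: the fresh symbols are listed in the paper among the terminals $T_{T_{sep}(\gram)}$, whereas they obviously have to act as nonterminals. I will treat them as nonterminals, and I will also interpret $(p_r,g_r)$ as $(A,g_r)$.

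The only delicate step is showing that $\Psi$ is well defined, i.e. that a derivation tree of $T_{sep}(\gram)$ below $A_{r_1}$ cannot mix gadgets of different rules. This holds because the names carry $r$, which makes the gadgets disjoint. If the gadgets were shared between rules, for example when two rules with identical right-hand sides and operations but different guards both fire, the counts would still match, since each original rule gives its own first step, but injectivity would have to be argued through that first step. I will phrase the argument so that it relies only on that first step.
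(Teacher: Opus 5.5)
Your $\Phi$ and $\Psi$ are the paper's two directions: the right comb of $A_{r_i},B_{r_i}$ below $A[\sigma]$, and recovering $r$ from the fresh symbol under the root. Flagging the case $k=0$ and the fresh symbols being listed as terminals is more careful than the paper.

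There is, however, a genuine gap. The paper's proof shares it when it speaks of ``the first applied rule'' of a derivation tree. Derivation trees of $\gram$ carry only indexed symbols as labels and do not record which rule was applied. So your $\Phi$, defined by choosing \emph{some} rule $r$ that accepts $A[\sigma]\to rt(t_1)\dots rt(t_k)$, is not canonical. Moreover, $\Phi\circ\Psi=id$ fails as soon as two distinct rules accept the same step. Concretely, take $r=(S,p)\to(a,id)$ and $r'=(S,q)\to(a,id)$, and $\sigma$ with $p(\sigma)=q(\sigma)=\top$. Then $\gram$ has exactly one derivation tree from $S[\sigma]$. But $T_{sep}(\gram)$ has two: one through $A_{r_1}[\sigma],B_{r_1}[\sigma]$ and one through $A_{r'_1}[\sigma],B_{r'_1}[\sigma]$. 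You invoke the disjointness of the $r$-indexed gadgets to make $\Psi$ well defined; that same disjointness is what makes $\Psi$ non-injective. Your remark that ``the counts would still match'' is therefore wrong for this construction.

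As literally stated, the lemma needs an extra reading or hypothesis. The natural repair is to count \emph{derivations}, i.e. trees whose internal nodes are also labelled by the rule applied there. Then $\Phi$ reads $r$ off the node, and $\Psi$ is injective because the fresh symbol below $A[\sigma]$ determines $r$. With that change your induction goes through verbatim.

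The alternative is to assume that no one-step derivation $A[\sigma]\to Z$ is accepted by two distinct rules of $P$. Merging guards by disjunction enforces this for rules with the same right-hand side and operation. It does not work for rules whose different operations happen to agree on some $\sigma$, since that coincidence need not be expressible by a guard. So in general, counting derivations is the right fix.
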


\begin{proof}
Let $\gram=(N, T, \struc, P, S)$ be a $\struc$-grammar, and let 
$T_{sep}(\gram)=(N, T_{T_{sep}(\gram)}, \struc, P_{T_{sep}(\gram)}, S)$.

We exhibit a bijection between derivation trees of $\gram$ and derivation trees of $T_{sep}(\gram)$ rooted in  non-terminals in $T$. This is done by promoting $T_{sep}$ to a relation between derivation trees of $\gram$ and derivation trees of $T_{sep}(\gram)$ rooted in  non-terminals in $T$, and showing that $T_{sep}$ and ${T_{sep}}^{-1}$ are total functions.

\proofsubparagraph*{Direction $1$}
We first prove that the promotion of $T_{sep}$ from derivation trees of $\gram$ to derivation trees of $T_{sep}(\gram)$ rooted in  non-terminals in $T$ is a total function. The proof proceeds by structural induction on derivation trees of $\gram$.

Let $d$ be a derivation tree whose first applied rule is 
$r=(p_r, g_r)\to(\alpha_{r_1}\alpha_{r_2}\dots\alpha_{r_k}, op_r)$, 
where $\alpha_i \in (N\cup T)$:
\begin{center}
\begin{tikzpicture}[level distance=1.2cm,
    level 1/.style={sibling distance=2cm},
    level 2/.style={sibling distance=2cm}]
    \node {$p_r[\sigma]$}
        child {
            node {$\alpha_{r_1}[op_r(\sigma)]$}
                child {node [draw, dashed, shape border uses incircle, isosceles triangle,
                shape border rotate=90, minimum height=10mm] {$d_1$}}
        }
        child {
            node {$\alpha_{r_2}[op_r(\sigma)]$} 
                child {node [draw, dashed, shape border uses incircle, isosceles triangle,
                shape border rotate=90, minimum height=10mm] {$d_2$}}
        }
        child {
            node {$\dots$} 
                child {node [draw, dashed, shape border uses incircle, isosceles triangle,
                shape border rotate=90, minimum height=10mm] {$d_i$}}
        }
        child {
            node {$\alpha_{r_k}[op_r(\sigma)]$}
                child {node [draw, dashed, shape border uses incircle, isosceles triangle,
                shape border rotate=90, minimum height=10mm] {$d_k$}}
        };
\end{tikzpicture}
\end{center}

Then $T_{sep}(d)$ is:

\begin{center}
\begin{tikzpicture}[level distance=1cm,
    level 1/.style={sibling distance=1cm},
    level 2/.style={sibling distance=4cm},
    level 3/.style={sibling distance=4cm, level distance=2.2cm}]
    \node {$p_r[\sigma]$}
        child {
            node {$A_{r_1}[op_r(\sigma)]$}
            child{
                node{$B_{r_1}[op_r(\sigma)]$}
                child{
                    node{$\alpha_{r_1}[op_r(\sigma)]$}
                    child {
                        node [draw, dashed, isosceles triangle,
                        shape border rotate=90, minimum height=10mm] {$T_{sep}(d_1)$}
                    }
                }
            }
            child{
                node{$A_{r_2}[op_r(\sigma)]$}
                child{
                    node{$B_{r_2}[op_r(\sigma)]$}
                    child{
                        node{$\alpha_{r_2}[op_r(\sigma)]$}
                        child {
                            node [draw, dashed, isosceles triangle,
                            shape border rotate=90, minimum height=10mm] {$T_{sep}(d_2)$}
                        }
                    }
                }
                child[dashed]{
                    node{$A_{r_i}[op_r(\sigma)]$}
                    child[solid]{
                        node{$B_{r_i}[op_r(\sigma)]$}
                        child{
                            node{$\alpha_{r_i}[op_r(\sigma)]$}
                            child {
                                node [draw, dashed, isosceles triangle,
                                shape border rotate=90, minimum height=10mm] {$T_{sep}(d_i)$}
                            }
                        }
                    }
                    child[dashed]{
                        node{$A_{r_k}[op_r(\sigma)]$}
                        child[solid]{
                            node{$B_{r_k}[op_r(\sigma)]$}
                            child{
                                node{$\alpha_{r_k}[op_r(\sigma)]$}
                                child {
                                    node [draw, dashed, isosceles triangle,
                                    shape border rotate=90, minimum height=10mm] {$T_{sep}(d_k)$}
                                }
                            }
                        }
                    }
                }
            }
        };
\end{tikzpicture}
\end{center}

By definition, there exists exactly one rule associated with each non-terminal $A_{r_i}$ and $B_{r_i}$. By the induction hypothesis, $T_{sep}(d_i)$ is uniquely defined for each $i$. Hence, $T_{sep}(d)$ is uniquely determined. Therefore, the promotion of $T_{sep}$ defines a total function on derivation trees.

\proofsubparagraph*{Direction $2$}
We now prove that $T_{sep}^{-1}$, from derivation trees of $T_{sep}(\gram)$ rooted in  non-terminals in $T$ to derivation trees of $\gram$, is a total function. The proof proceeds by structural induction on derivation trees of $T_{sep}(\gram)$ rooted in  non-terminals in $T$.

The only rules on non-terminals in $T$ are of the form $(p_r, g_r)\to(A_{r_1}, op_r)$. 
Let $t$ be a derivation tree of $T_{sep}(\gram)$ rooted in  $p_r$, whose first applied rule is $(p_r, g_r)\to(A_{r_1}, op_r)$. By definition, this rule is followed by rules on $A_{r_i}$ and $B_{r_i}$.

\begin{center}
\begin{tikzpicture}[level distance=1cm,
    level 1/.style={sibling distance=1cm},
    level 2/.style={sibling distance=4cm},
    level 3/.style={sibling distance=4cm, level distance=2.2cm}]
    \node {$p_r[\sigma]$}
        child {
            node {$A_{r_1}[op_r(\sigma)]$}
            child{
                node{$B_{r_1}[op_r(\sigma)]$}
                child{
                    node{$\alpha_{r_1}[op_r(\sigma)]$}
                    child {
                        node [draw, dashed, isosceles triangle,
                        shape border rotate=90, minimum height=10mm] {$t_1$}
                    }
                }
            }
            child{
                node{$A_{r_2}[op_r(\sigma)]$}
                child{
                    node{$B_{r_2}[op_r(\sigma)]$}
                    child{
                        node{$\alpha_{r_2}[op_r(\sigma)]$}
                        child {
                            node [draw, dashed, isosceles triangle,
                            shape border rotate=90, minimum height=10mm] {$t_2$}
                        }
                    }
                }
                child[dashed]{
                    node{$A_{r_i}[op_r(\sigma)]$}
                    child[solid]{
                        node{$B_{r_i}[op_r(\sigma)]$}
                        child{
                            node{$\alpha_{r_i}[op_r(\sigma)]$}
                            child {
                                node [draw, dashed, isosceles triangle,
                                shape border rotate=90, minimum height=10mm] {$t_i$}
                            }
                        }
                    }
                    child[dashed]{
                        node{$A_{r_k}[op_r(\sigma)]$}
                        child[solid]{
                            node{$B_{r_k}[op_r(\sigma)]$}
                            child{
                                node{$\alpha_{r_k}[op_r(\sigma)]$}
                                child {
                                    node [draw, dashed, isosceles triangle,
                                    shape border rotate=90, minimum height=10mm] {$t_k$}
                                }
                            }
                        }
                    }
                }
            }
        };
\end{tikzpicture}
\end{center}

By construction, the rules $(p_r, g_r)\to(A_{r_1}, op_r)$ together with rules on $A_{r_i}$ and $B_{r_i}$ exists if and only if there exists a rule 
$r=(p_r, g_r)\to(\alpha_{r_1}\dots\alpha_{r_k}, op_r)$ in $\gram$.

By the induction hypothesis, ${T_{sep}}^{-1}(t_i)$ exists in $\gram$ and is uniquely determined for each $i$. Hence, ${T_{sep}}^{-1}(t)$ is uniquely determined. Therefore, $T_{sep}^{-1}$ defines a total function.
\end{proof}

\paragraph{Second transformation}
We next give a transformation $T_{1rule}$ from a $\struc$-grammar obtained by $T_{sep}$ to a grammar satisfying the "only one rule per non-terminal" property.

\begin{definition}   
We define 
\[
T_{1rule}(\gram)=(N_{T_{1rule}(T_{sep}(\gram))}, T, P_{T_{1rule}(T_{sep}(\gram))}=\bigcup_{A\in T_{T_{sep}(\gram)}}\bigoplus_{r_i \text{ on } A} r_i, S).
\]
For two rules $r_1=(A, g_{r_1})\to(\beta_{r_1}, op_{r_1})$ and $r_2=(A, g_{r_2})\to(\beta_{r_2}, op_{r_2})$ applying to the same non-terminal, we define:
\begin{align*}
r_1 \bigoplus r_2 = &\{(A, \top)\to(C_{r_1}C_{r_2}, id), \\
&(C_{r_1}, g_{r_1})\to(\beta_{r_1}, op_{r_1}), \\
&(C_{r_2}, g_{r_2})\to(\beta_{r_2}, op_{r_2})\}
\end{align*}

For a set $R$ of rules $r_i$ applying to the same non-terminal, we define:
\begin{align*}
&\bigoplus_{r_i \in R} r_i = r_i && \text{if } \#(R)=1, \\
&\bigoplus_{r_i \in R} r_i = (\ldots (r_1 \bigoplus r_2)\bigoplus \ldots )\bigoplus r_{\#(R)} && \text{otherwise.}
\end{align*}
\end{definition}

\begin{lemma}
Let $\gram=(N, T, \struc, P, S)$ be a $\struc$-grammar that produces $u(\sigma)$ derivation trees from $S[\sigma]$. Then the $\struc$-grammar $T_{1rule}(T_{sep}(\gram))$ produces $u(\sigma)$ derivation trees rooted in  $S[\sigma]$.
\end{lemma}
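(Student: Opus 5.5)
We read the combination $r_1 \bigoplus r_2$ as an $\mathrm{OR}$ rule: the right-hand side $C_{r_1}C_{r_2}$ stands for the non-deterministic choice $C_{r_1}\mid C_{r_2}$ of \cref{subsec:gram_normal_form}. This is the only reading under which the counts are preserved, since a genuine $\mathrm{AND}$ would multiply them rather than add them. We also assume that every iteration of $\bigoplus$ introduces fresh non-terminals $C_{\cdot}$, one per intermediate combination. Then each non-terminal of $T_{1rule}(T_{sep}(\gram))$ has exactly one rule: either the unique $\mathrm{OR}$ rule on $A$ or on an intermediate $C$, or the original rule $r_i$ now carried by $C_{r_i}$.

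As in the proof for $T_{sep}$, we will exhibit a bijection between the derivation trees of $T_{sep}(\gram)$ and those of $T_{1rule}(T_{sep}(\gram))$ rooted in original non-terminals. We promote $T_{1rule}$ to a relation on trees and show that both it and its inverse are total functions, by structural induction on derivation trees. The key local fact is the following. Fix $A$ and let $R=\{r_1,\dots,r_m\}$ be its rules. A derivation tree of the new grammar rooted in $A[\sigma]$ has the following shape:
\begin{itemize}
\item first, a chain of $\mathrm{OR}$ steps through the left-nested intermediate non-terminals, all with guard $\top$ and operation $id$, so the index stays $\sigma$;
\item this chain ends at exactly one $C_{r_i}[\sigma]$;
\item then comes an application of $r_i$ with its original guard $g_{r_i}$ and operation $op_{r_i}$;
\item finally come subtrees rooted in original symbols.
\end{itemize}
A first induction on $m=\#(R)$, following the left-associated definition of $\bigoplus$, will show that the chains from $A$ to the leaves $C_{r_1},\dots,C_{r_m}$ of the nested $\mathrm{OR}$ are in bijection with $\{1,\dots,m\}$. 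Each chain is determined by its endpoint, because the leaves are distinct fresh symbols.

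The outer structural induction then works as follows. Take a tree $d$ of $T_{sep}(\gram)$ whose root $A[\sigma]$ is expanded by $r_i$ with children $d_1,\dots,d_k$. We map it to the unique chain ending in $C_{r_i}$, followed by $r_i$ with children $T_{1rule}(d_1),\dots,T_{1rule}(d_k)$. The guard $g_{r_i}$ holds at $\sigma$ in $d$, and $C_{r_i}$ still carries index $\sigma$, so the new tree is valid. Conversely, given a new tree, we read off $i$ from the chain, delete the chain, and apply the induction hypothesis to the subtrees. The two maps are mutually inverse. It follows that the number of trees from $A[\sigma]$ is $\sum_i(\text{number of trees starting with } r_i)$, which is the same quantity in both grammars. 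Taking $A=S$ gives $u(\sigma)$.

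The main obstacle is the bookkeeping for the intermediate non-terminals and for dead branches. We must check that choosing a $C_{r_j}$ whose guard fails at $\sigma$ yields no tree, and therefore contributes $0$ rather than a spurious extra tree. This holds because derivation trees are finite and complete, so such a branch cannot be completed; the complementary rules are added only later, and their $\square$-trees are excluded by \cref{lem:gram_normal_form}. We must also check that the chain is unique for a given $i$. This needs the freshness of the intermediate symbols and the fact that their rules carry guard $\top$, which makes them always applicable, and operation $id$, which leaves the index unchanged. Once these two facts are established, the induction goes through exactly as in the $T_{sep}$ case.
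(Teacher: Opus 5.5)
Your proposal is correct and follows essentially the same route as the paper. Like the paper, you promote $T_{1rule}$ to a relation between derivation trees of $T_{sep}(\gram)$ and of $T_{1rule}(T_{sep}(\gram))$ rooted in non-terminals of $T_{sep}(\gram)$, and you show by structural induction that both it and its inverse are total functions; the only real content is the unique $\top$-guarded, $id$-operation chain from $A[\sigma]$ down to $C_{r_i}[\sigma]$. Your reading of $C_{r_1}C_{r_2}$ as the choice $C_{r_1}\mid C_{r_2}$ is exactly how the paper's own Direction 2 treats these rules, and your remarks on fresh intermediate non-terminals and on failing guards at the other $C_{r_j}$ make explicit what the paper leaves implicit.
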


\begin{proof}
Let $\gram_{sep}=T_{sep}(\gram)$ and $T_{1rule}(\gram_{sep})=(N, T_{T_{1rule}(\gram_{sep})}, \struc, P_{T_{1rule}(\gram_{sep})}, S)$.

We exhibit a bijection between derivation trees of $\gram_{sep}$ and derivation trees of $T_{1rule}(\gram_{sep})$ rooted in  indexed non-terminals in $T_{\gram_{sep}}$. This is done by promoting $T_{1rule}$ to a relation between derivation trees of $\gram_{sep}$ and derivation trees of $T_{1rule}(\gram_{sep})$ rooted in  indexed non-terminals in $T_{\gram_{sep}}$, and showing that $T_{1rule}$ and ${T_{1rule}}^{-1}$ are total functions.

\proofsubparagraph*{Direction $1$}
We first prove that the promotion of $T_{1rule}$ from derivation trees of $\gram_{sep}$ to derivation trees of $T_{1rule}(\gram_{sep})$ rooted in  indexed non-terminals in $T_{\gram_{sep}}$ is a total function. The proof proceeds by structural induction on derivation trees of $\gram_{sep}$. 

By definition, all non-terminals added by the transformation $T_{sep}$ have only one rule. Therefore, by the definition of $\bigoplus$, $T_{1rule}$ does not transform rules on these non-terminals, and the induction is trivial for derivation  trees rooted in  these non-terminals.

Let $A$ be a non-terminal in $T$. Let $R=\{r_1, \ldots, r_{\#(R)}\}$ be the set of rules on $A$ and $D=\{d_1, \ldots, d_{\#(R)}\}$ the corresponding derivation  tree of $\gram_{sep}$ rooted in  $A$, such that the first applied rule in $d_i$ is $r_i=(A, g_i)\to(A_i, op_i)$.
\begin{center}
\begin{tikzpicture}[level distance=1.3cm,
    level 1/.style={sibling distance=1cm},
    level 2/.style={sibling distance=4cm},
    level 3/.style={sibling distance=4cm, level distance=2.2cm}]
    \node {$A[\sigma]$}
        child {
            node {$A_{1}[op_{1}(\sigma)]$}
            child{
                node [draw, dashed, isosceles triangle,
                shape border rotate=90, minimum height=10mm] {$d'_1$}
            }
        }
       ;
\end{tikzpicture}
\begin{tikzpicture}[level distance=1.3cm,
    level 1/.style={sibling distance=1cm},
    level 2/.style={sibling distance=4cm},
    level 3/.style={sibling distance=4cm, level distance=2.2cm}]
    \node {$\ldots $}
       ;
\end{tikzpicture}
\begin{tikzpicture}[level distance=1.3cm,
    level 1/.style={sibling distance=1cm},
    level 2/.style={sibling distance=4cm},
    level 3/.style={sibling distance=4cm, level distance=2.2cm}]
    \node {$A[\sigma]$}
        child {
            node {$A_{i}[op_{i}(\sigma)]$}
            child{
                node [draw, dashed, isosceles triangle,
                shape border rotate=90, minimum height=10mm] {$d'_k$}
            }
        }
       ;
\end{tikzpicture}
\begin{tikzpicture}[level distance=1.3cm,
    level 1/.style={sibling distance=1cm},
    level 2/.style={sibling distance=4cm},
    level 3/.style={sibling distance=4cm, level distance=2.2cm}]
    \node {$\ldots $}
       ;
\end{tikzpicture}
\begin{tikzpicture}[level distance=1.3cm,
    level 1/.style={sibling distance=1cm},
    level 2/.style={sibling distance=4cm},
    level 3/.style={sibling distance=4cm, level distance=2.2cm}]
    \node {$A[\sigma]$}
        child {
            node {$A_{\#(R)}[op_{\#(R)}(\sigma)]$}
            child{
                node [draw, dashed, isosceles triangle,
                shape border rotate=90, minimum height=10mm] {$d'_{\#(R)}$}
            }
        }
       ;
\end{tikzpicture}
\end{center}
Then $T_{1rule}(d_i)$ is:  
\begin{center}
\begin{tikzpicture}[level distance=1cm,
    level 1/.style={sibling distance=1cm},
    level 2/.style={sibling distance=4cm},
    level 3/.style={sibling distance=4cm, level distance=1.8cm}]
    \node {$A[\sigma]$}
        child {node {$C_{\underset{j\leq\#(R)-1}{\oplus}r_j}[\sigma]$ }
            child {node {$C_{\underset{j\leq\#(R)-2}{\oplus}r_j}[\sigma]$ }
                child {node {$C_{\underset{j\leq i+1}{\oplus}r_j}[\sigma]$ }[dashed]
                    child{ node {$C_{r_i}[\sigma]$} [solid]
                        child{node  {$A_{i}[op_{i}(\sigma)]$} 
                            child{node [draw, dashed, isosceles triangle,
                            shape border rotate=90, minimum height=10mm] {$T_{1rule}(d'_i)$}
                            }
                        }
                    }
                }
            }
        }
   ;
\end{tikzpicture}
\end{center}By the induction hypothesis, $T_{1rule}(d'_i)$ is uniquely determined. Hence $T_{1rule}(d_i)$ is uniquely determined. Therefore, the promotion of $T_{1rule}$ defines a total function on derivation trees.

\proofsubparagraph*{Direction $2$}
We now prove that $T_{1rule}^{-1}$, from derivation trees of $T_{1rule}(\gram_{sep})$ rooted in  indexed non-terminals in $T_{\gram_{sep}}$ to derivation trees of $\gram_{sep}$, is a total function. The proof proceeds by structural induction.

By definition, all non-terminals added by $T_{sep}$ have only one rule. Therefore, by definition of $\bigoplus$, the induction is trivial for derivation  trees rooted in  these non-terminals.

The only rules on non-terminals in $T$ are of the form $(A, \top)\to(C_{\bigoplus_{j\leq k-1}r_j}|C_{r_k}, id)$. Let $t$ be a derivation  tree of $T_{1rule}(\gram_{sep})$ rooted in  $A$:

\begin{center}
\begin{tikzpicture}[level distance=1cm,
    level 1/.style={sibling distance=1cm},
    level 2/.style={sibling distance=4cm, level distance=1.8cm}]
    \node {$A[\sigma]$}
        child {node {$C_{\underset{j\leq k-1}{\oplus}r_j}[\sigma]$ }
            child {node {$C_{\underset{j\leq i+1}{\oplus}r_j}[\sigma]$ }[dashed]
                child{ node {$C_{r_i}[\sigma]$} [solid]
                    child{node  {$A_{i}[op_{i}(\sigma)]$} 
                        child{node [draw, dashed, isosceles triangle,
                        shape border rotate=90, minimum height=10mm] {$t_i$}
                        }
                    }
                }
            }
        }
   ;
\end{tikzpicture}
\end{center}
By definition, such a rule exists if and only if there are $k$ rules on $A$ in $\gram_{sep}$. By induction, there exists exactly one derivation  tree $d$ such that $T_{1rule}(d)=t_i$. 

Therefore, $T_{1rules}^{-1}$ is a total function. 
    \end{proof}
\subsection{\cref{thm:gram_to_trans} proof}
We prove that the word produce by the well formed run from $q_0, X_1[u\#_2]$ in $T_1(\gram)$, that we write $t_{T_1(\gram)}(X_1[u\#_2])$, is equal to the number of leftmost derivation from $u$ without $\square$ on leaves in $\gram$ , that we write $\#\lm_\gram(u)$, for every $u\in (N\times \mathcal{S})^*$, i.e. $$\forall u\in(N\times \mathcal{S})^*, t_{T_1(\gram)}(X_1[u\#_2])=\#LmD_\gram(u)$$\\

Since $\gram$ is in normal form, only one rule $r$ can be applied to the leftmost letter of $u$.\\

\begin{description}
    \item [If $r$ is a $Complementary$ rule] $(A, \neg g)\to ( \square, id)$ then $\#LmD_\gram(u)=0$.\\ 
        By $T_1$, there is a rule $(q_0, \mathcal{P}_{X_1}\wedge \mathcal{P}_{A_2} \wedge \neg g)\to (q_0, \varepsilon, pop_1)$. \\
        By normal form and determinism of $T_1$, there is no other rules on $(q_0, X_1[u\#_2])$, so $t_{T_1(\gram)}(X_1[u\#_2])=t_{T_1(\gram)}(\varepsilon)=0$.\\
        Then  $t_{T_1(\gram)}(X_1[u\#_2])=\#LmD_\gram(u)=0$.
    \item[If $r$ is a $Dumping$ rule] $(A, g)\to ( \bar{a}, id)$ then $\#LmD_\gram(u)=\#LmD_\gram(u')=i$.\\ 
        By $T_1$, there is a rule $(q_0, \mathcal{P}_{X_1}\wedge \mathcal{P}_{A_2} \wedge g)\to (q_0, \varepsilon, pop_2)$. \\
        By induction $t_{T_1(\gram)}(X_1[u'\#_2])=i$. \\
        By normal form and determinism of $T_1$, there is no other rules on $(q_0, X_1[u\#_2])$, so $t_{T_1(\gram)}(X_1[u\#_2])=t_{T_1(\gram)}(X_1[u'\#_2])=i$.\\
        Then  $t_{T_1(\gram)}(X_1[u\#_2])=\#LmD_\gram(u)=i$.
    \item [If $r$ is a $Operation$ rule] $(A, g)\to ( B, op)$ then $\#LmD_\gram(u)=\#LmD_\gram(u')=i$.\\ 
        By $T_1$, there is rules 
        \begin{tabular}{ll}
            $\delta_1$: &$(q_0, \mathcal{P}_{X_1}\wedge \mathcal{P}_{A_2} \wedge g)\to (q_r, \varepsilon, op)$\\
            $\delta_2$: &$(q_r, \top)\to (q_0, \varepsilon, push_2(B_2))$\\
        \end{tabular}\\
        By induction $t_{T_1(\gram)}(X_1[u'\#_2])=i$. \\
        By $T_1$ there is no other rule than $\delta_2$ on $q_r$.\\
        By normal form and determinism of $T_1$, there is no other rules on $(q_0, X_1[u\#_2])$, so $t_{T_1(\gram)}(X_1[u\#_2])=t_{T_1(\gram)}(X_1[u'\#_2])=i$.\\
        Then  $t_{T_1(\gram)}(X_1[u\#_2])=\#LmD_\gram(u)=i$.
    \item [If $r$ is a $\mathrm{AND}$ rule $(A, g)\to ( BC, id)$] then $\#LmD_\gram(u)=\#LmD_\gram(u')=i$.\\ 
        By $T_1$, there is a rule $(q_0, \mathcal{P}_{X_1}\wedge \mathcal{P}_{A_2} \wedge g)\to (q_0, \varepsilon, push_2(B_2C_2))$. \\
        By induction $t_{T_1(\gram)}(X_1[u'\#_2])=i$. \\
        By normal form and determinism of $T_1$, there is no other rules on $(q_0, X_1[u\#_2])$, so $t_{T_1(\gram)}(X_1[u\#_2])=t_{T_1(\gram)}(X_1[u'\#_2])=i$.\\
        Then  $t_{T_1(\gram)}(X_1[u\#_2])=\#LmD_\gram(u)=i$.
    \item[If $r$ is a $\mathrm{OR}$ rule $(A, g)\to ( B|C, id)$] then $\#LmD_\gram (A[\sigma]u)=\#LmD_\gram (B[\sigma]u) + \#LmD_\gram(C[\sigma]u)=i+j$.\\ 
        By $T_1$, there is rules 
        \begin{tabular}{ll}
            $\delta_1$: &$(q_0, \mathcal{P}_{X_1}\wedge \mathcal{P}_{A_2} \wedge g)\to (q_r, \varepsilon, push_2(C_2))$\\
            $\delta_2$: &$(q_r, \top)\to (q_{r}', \varepsilon, push_2(B_2))$\\
            $\delta_3$: &$(q_{r}', \top)\to (q_0, \varepsilon, push_2(B_2))$\\
        \end{tabular}\\
        By induction $t_{T_1(\gram)}(X_1[B_2[\sigma]u\#_2])=i$ and $t_{T_1(\gram)}(X_1[C_2[\sigma]u\#_2])=j$. \\
        By $T_1$ there is no other rule than $\delta_2$ on $q_r$ and no other rule than $\delta_3$ on $q'_r$.\\
        By normal form and determinism of $T_1$, there is no other rules on $(q_0, X_1[u\#_2])$, so $t_{T_1(\gram)}(X_1[A_2[\sigma]u\#_2])=t_{T_1(\gram)}(X_1[B_2[\sigma]u\#_2]X_1[C_2[\sigma]u\#_2])$.\\
        Notice that $t_{T_1(\gram)}(X_1[B_2[\sigma]u\#_2]X_1[C_2[\sigma]u\#_2])=t_{T_1(\gram)}(X_1[B_2[\sigma]u\#_2])+t_{T_1(\gram)}(X_1[C_2[\sigma]u\#_2])$.\\
        So, $t_{T_1(\gram)}(X_1[A_2[\sigma]u\#_2])=t_{T_1(\gram)}(X_1[B_2[\sigma]u\#_2])+t_{T_1(\gram)}(X_1[C_2[\sigma]u\#_2])=i+j$.\\
        Then  $t_{T_1(\gram)}(X_1[A_2[\sigma]u\#_2])=\#LmD_\gram(A_2[\sigma]u)=i+j$.
\end{description}

If $A[\sigma]\to \varepsilon$ then $\#LmD_\gram(A[\sigma])=1$.\\
By rule $(q_0, \mathcal{P}_{X_1}\wedge \mathcal{P}_{\#_2})\to (q_0, a, pop_1)$, and the fact that it is the only rule on $\#_2$, then $t_{T_1(\gram)}(X_1[\#_2])=1$.\\
By induction $t_{T_1(\gram)}(X_1[A[\sigma]\#_2])=1$.\\
\subsection{proof of \cref{subsec:trans_normal_form}}
    In this part, we will prove the following lemma.
        \begin{lemma}\label{lem:trans_norm_form}
            There exists a function that transforms a $P^2(\aut)$-transducer into normal form while preserving the computed function.
        \end{lemma}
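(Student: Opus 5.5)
We will prove \cref{lem:trans_norm_form} by composing a sequence of local transformations. Each one rewrites a single rule of $\aut$ into a short chain of rules through fresh intermediate states. Because every inserted state has exactly one outgoing rule with guard $\top$, the new transducer stays deterministic. Each inserted chain is also simulated step by step, so the run from $(q_0, X_1[Start_2[\sigma]\#_2]\#_1)$ in the new transducer maps to the old run by contracting the chains. This preserves totality, the absence of livelocks, and the concatenation of produced letters, hence the computed $u(\sigma)$.

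The steps, in order, are as follows.
\begin{enumerate}
\item \emph{Splitting pushes.} A $push_i(\gamma)$ with $|\gamma| = k > 2$ is replaced by $push_i$ of two symbols followed by a renaming chain. Concretely, we push $Y\,Z_\gamma$ with a fresh symbol, then expand $Z_\gamma$ later via intermediate states. For order $2$ this is $push_2(X^1 W)$ and then, in a state remembering that $W$ must be expanded next, $push_2(X^2 W')$, and so on. For order $1$, pushing more than two copies is done by iterating $push_1(X_1X_1)$.
\item \emph{One order-$1$ symbol.} Replace every order-$1$ symbol other than $\#_1$ by $X_1$. The lost information is kept in the order-$2$ stack: we push a marker atom onto the copied $2$-stack via a $push_2$ immediately after the $push_1$, and read it back by predicates on $\Gamma$ when it resurfaces. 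Alternatively, when the order-$1$ symbols are only tested after a $pop_1$, we encode them in the state, since their values are fixed at push time. The first option is more robust, so we adopt it.
\item \emph{Output only on $pop_1$.} A rule producing $a$ with another operation is replaced by a $push_1(X_1X_1)$ whose top copy is immediately emptied by $pop_2$ steps through a fresh state, then popped with a productive $pop_1$ returning to the intended state. This is possible because the copied stack is finite and we control it from a dedicated state. Equivalently, we make $pop_1$ productive only when its guard is $\mathcal{P}_{\#_2}$.
\item \emph{Guard $\mathcal{P}_{\#_2}$ on $pop_1$.} A $pop_1$ applied to a nonempty $2$-stack is replaced by a loop of $pop_2$ in a fresh state, followed by $pop_1$ on $\mathcal{P}_{\#_2}$.
\item \emph{Completion and sink.} Any configuration with no applicable rule is sent to $\bot$, which empties the stack with the listed rules down to $\#_1$.
\end{enumerate}

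The main obstacle is keeping the computed function and the ``no livelock'' property through completion. We must ensure the sink terminates from every configuration and never produces $a$. The rules on $\bot$ only pop, so every run from $\bot$ is bounded by the total stack size. It also matters that completed runs end at $\#_1$ and never hit $End$ spuriously. The real work is checking that $End$ is only reached along the original accepting run; the chain-contraction simulation argument handles this.
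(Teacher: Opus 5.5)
\textbf{Verdict: there is a gap in your step 2.} Your decomposition otherwise follows the paper's: binary pushes as in $T_{bin}$, the $push_1$/drain/$pop_1$ gadget of $T_{prod}$, a draining state before $pop_1$, and order-$1$ symbols encoded into the order-$2$ stack as in $T_{1symb}$.

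\textbf{Step 2 fails as described.} A $push_2$ executed immediately after the $push_1$ can only act on the topmost $2$-stack, i.e.\ the copy computed on next, never on the suspended copy below it.
\begin{itemize}
\item After simulating $push_1(B_1C_1)$, the suspended copy carries no trace of $C_1$, so there is nothing to ``read back when it resurfaces''.
\item A marker on top of the active copy hides exactly the atom whose symbol and index the simulated computation must test and rewrite (guards, $op$, $push_2$).
\item The finite control cannot compensate, because the order-$1$ symbols of the suspended copies form an unbounded sequence. This is also why your alternative of keeping them in the state is not available in general.
\end{itemize}
The missing idea is to write the information for the lower copy \emph{before} copying, retag the upper copy \emph{afterwards}, and keep the tag where it does not obstruct the simulation. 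The paper does this by pairing. Order-$2$ symbols become $(A_1,A_2)$, and $push_1(B_1,C_1)$ is simulated by $push_2((C_1,A_2))$, then $push_1(A,A)$, then $push_2((B_1,A_2))$. Every $pop_2$ is followed by re-pairing the new top atom.

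Your marker variant can be repaired in the same spirit:
\begin{itemize}
\item push a marker for $C_1$ before the $push_1$;
\item pop it from the active copy just after;
\item keep the active copy's order-$1$ symbol in the state (it is invariant while that $2$-stack is on top);
\item after every $pop_1$, pass through a state that reads and pops the marker.
\end{itemize}
In either form, correctness is a simulation modulo a re-encoding of stacks; the paper defers this to Fratani's thesis. It is not the contraction of inserted chains that you invoke for every step. That argument only covers your steps 1, 3, 4 and 5.

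\textbf{Step 1 also needs correcting as written.} After $push_2(X^1W)$ the top atom is $X^1$, so the following $push_2(X^2W')$ rewrites $X^1$, not $W$. Build the word from the bottom, as $T_{bin}$ does: first $push_i(X^{k-1}X^{k})$, then repeatedly replace the top atom $X^{j}$ by $X^{j-1}X^{j}$ through intermediate states. This needs no fresh symbols and works for $i=1$ with distinct order-$1$ symbols. By contrast, iterating $push_1(X_1X_1)$ presupposes step 2, which comes later in your ordering.

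\textbf{Minor points.}
\begin{itemize}
\item The draining states of steps 3 and 4 have two rules with complementary guards, not a single $\top$-rule. Determinism still holds, but not for the reason you give.
\item $End$ must be excluded from the completion in step 5, since the paper assumes no run leaves $End$.
\end{itemize}
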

        We prove this result through three rule transformations.
        First, we show a transformation such that $push_i$ operations are binary and unary .
        Second, we show a tranformation such that only rules applying the $pop_1$ operation produce $a$ using only binary and unary $push_i$ operations .  
        Finally, we show a  transformation satisfying the “single order $1$ symbol” condition keeping the previous properties.

Let 
$
\aut=(\mathcal{Q}, T, \Delta,
\stack(\Gamma_1, \stack(\Gamma_2, \struc)), Z, End)
$
be a \emph{computing} $\stack^2(\struc)$-transducer.

The property allowing $pop_1$ only when the topmost order $2$ symbol is $\#_2$ can easily be ensured by adding a draining state performing $pop_2$ transitions until $\#_2$ is reached.  
Hence, we assume that this property holds in $\aut$.

\paragraph{First transformation}
As stated in the proof sketch, we first define a transformation $T_{bin}$ that decomposes every $n$-ary $push_i$, for $n>2$, into a sequence of unary and binary $push_i$ operations.

\begin{definition}
We define
$
T_{bin}(\aut)=
(\mathcal{Q}_{\aut_{bin}}, T,
\Delta_{\aut_{bin}},
\struc, Z, End)
$
where
$
\Delta_{\aut_{bin}}
=\bigcup_{r\in\Delta}T_{bin}(r).
$

For a rule $r=(q, g_r)\to(p, \bar a, op_r)$:

\begin{align*}
&\text{If } op_r\in \op_{\struc}\cup\{pop_i\},
&& T_{bin}(r)=\{r\}.\\
\\
&\text{If } op_r=push_i(A_{r_1}, \dots, A_{r_k}),
&& T_{bin}(r)= \\
&&& \{(q, g_r)\to(q_{r_1}, \bar a,
push_i(A_{r_{k-1}}, A_{r_k}))\}\\
&&&\cup\{(q_{r_j}, \top)\to(q_{r_{j+1}}, \varepsilon,
push_i(A_{r_{k-(j+1)}}, A_{r_{k-j}}))
\mid 1\le j<k-1\}\\
&&&\cup\{(q_{r_{k-1}}, \top)\to(p, \varepsilon, id)\}.
\end{align*}
\end{definition}
\begin{lemma}
Let $\aut$ be a $\stack^2(\struc)$-transducer \emph{computing} $u(\sigma)$. 
Then $T_{bin}(\aut)$ computes $u(\sigma)$.
\end{lemma}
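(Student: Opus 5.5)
We simulate each run of $\aut$ step by step by a run of $T_{bin}(\aut)$ and check three things: the final configuration, the output word, and the computing property. The argument is a step-by-step simulation. We define a map $\Phi$ from runs of $\aut$ to runs of $T_{bin}(\aut)$ that replaces each transition $r$ by the block of transitions $T_{bin}(r)$. For every rule $r$ we create fresh intermediate states $q_{r_1},\dots,q_{r_{k-1}}$, so blocks belonging to different rules never share states.

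The central local claim is the following. Let $op_r = push_i(A_{r_1}\cdots A_{r_k})$ with $k>2$, and let $(q,stk)\xrightarrow{r}(p,stk')$. Then the block $T_{bin}(r)$ gives a run $(q,stk)\xrightarrow{*}(p,stk')$ in $T_{bin}(\aut)$, and this block outputs exactly $\bar a_r$.

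We prove this by induction on $j$. The invariant is that after the $j$-th binary push, the topmost part of the order-$i$ stack reads $A_{r_{k-j}}\cdots A_{r_k}$, with every atom carrying the copied index, or copied order-$(i+1)$ stack, of the original top. One fact must be checked here: $push_i(XY)$ erases the top atom and copies its content into both symbols. So each later binary push rewrites only the current top atom into two atoms. The original content is carried along unchanged each time, so the final stack matches $push_i(A_{r_1}\cdots A_{r_k})(stk)$ exactly.

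In writing this step I would correct the indices so that the decomposition really is right-to-left. Each binary push should keep the current top symbol and push one further symbol before it. Equivalently, we use pairs $(A_{r_{k-j-1}},A_{r_{k-j}})$ and never re-push a symbol that is already below the top. Only the first transition of the block carries $\bar a_r$; the others output $\varepsilon$. Guards apply only at the first step, since the intermediate rules have guard $\top$. Rules whose operation is in $\op_\struc$ or is a $pop_i$ are unchanged.

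The global step is an induction on the length of the run. From it we get that $\Phi$ sends the accepting run from $(q_0,\sigma)$ to a run of $T_{bin}(\aut)$ reaching $End$, with the same concatenated output $a^{u(\sigma)}$. This gives totality.

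For determinism, consider first a configuration whose state is an original state. There the rules available in $T_{bin}(\aut)$ are exactly the first rules of the blocks $T_{bin}(r)$, with the same guards as in $\aut$, so determinism of $\aut$ carries over. At an intermediate state $q_{r_j}$ there is exactly one rule. The run of $T_{bin}(\aut)$ is therefore unique, and so it is $\Phi$ of the run of $\aut$. This also preserves the convention that $End$ has no outgoing transitions, and $Card(T)=1$ is unchanged. Hence $T_{bin}(\aut)$ is computing and computes $u$.

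The main obstacle I expect is the invariant on the stack contents under iterated binary pushes. The order in which symbols are pushed, and the indexing in the definition of $T_{bin}$, must be made consistent. Otherwise a symbol is duplicated or the order is reversed. I would settle this first, by checking explicitly the cases $k=3$ and $k=4$.
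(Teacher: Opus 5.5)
Your proof is correct, but it is organised differently from the paper's.

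\textbf{The paper's route.} The paper promotes $T_{bin}$ to a relation between \emph{symbolic well-formed} runs of $\aut$ and of $T_{bin}(\aut)$, starting from arbitrary stacks over $\Gamma_1,\Gamma_2$. It then shows, by structural induction following the $push_1$/$pop_1$ nesting, that both $T_{bin}$ and $T_{bin}^{-1}$ are total functions, so $T_{bin}$ is a bijection. The inverse direction is what guarantees that $T_{bin}(\aut)$ has no runs other than images of runs of $\aut$.

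\textbf{Your route.} You prove only the forward direction, by block-by-block simulation along the concrete run and induction on its length. You replace the inverse direction by a direct check that $T_{bin}(\aut)$ is deterministic:
\begin{itemize}
\item the rules at original states are exactly the first rules of the blocks, with unchanged guards;
\item each fresh intermediate state $q_{r_j}$ has a single rule, with guard $\top$.
\end{itemize}
Hence the unique run of $T_{bin}(\aut)$ from $(q_0,\sigma)$ must be the image of the run of $\aut$, and $End$ still has no successor.

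\textbf{Comparison.} Your argument is more elementary. It also explicitly verifies every clause of \emph{computing}: determinism, totality, the $End$ convention and $Card(T)=1$. The paper's proof leaves these mostly implicit (``the uniqueness of intermediate states ensures determinism''). What the paper's route buys is a configuration-independent bijection on symbolic well-formed sub-runs. This is the template it reuses for $T_{prod}$, $T_2$ and $T_4$, where sub-runs from arbitrary stacks matter.

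\textbf{On the indices.} No re-indexing is actually needed. In the definition, the rule from $q_{r_j}$ pushes $(A_{r_{k-(j+1)}},A_{r_{k-j}})$, and the first rule of the block is the instance $j=0$. This is already exactly your right-to-left scheme.
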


\begin{proof}

We establish a bijection between symbolic well-formed runs of $\aut$ and symbolic well-formed runs of $T_{bin}(\aut)$ starting from stacks containing only symbols in $\Gamma_1$ and $\Gamma_2$.

We promote $T_{bin}$ as a relation between such runs and prove that both $T_{bin}$ and $T_{bin}^{-1}$ are total functions.

\medskip
\noindent
\proofsubparagraph*{Direction $1$}

We prove by structural induction on symbolic well-formed runs of $\aut$ that the promotion of $T_{bin}$ is a total function.

\medskip
\noindent
\emph{Case 1: $op_r\in\op_{\struc}\cup\{pop_i\}$.}

Let
\[
c=q, A_1[stk_2]\Omega_1
\xrightarrow{r}
\xrightarrow{c_1}
q', \Omega_1
\]
be a symbolic well-formed run of $\aut$.

Since $T_{bin}(r)=r$, by construction,
\[
T_{bin}(c)=
q, A_1[stk_2]\Omega_1
\xrightarrow{r}
\xrightarrow{T_{bin}(c_1)}
q', \Omega_1.
\]

By induction hypothesis, $T_{bin}(c_1)$ is uniquely defined. Hence so is $T_{bin}(c)$.

\medskip
\noindent
\emph{Case 2: $push_2(A_{r_1}, \dots, A_{r_k})$.}

Let
\begin{align*}
c&=q, A_1[A[\sigma]stk_2]\Omega_1\\
&\xrightarrow{r}
q, A_1[A_{r_1}[\sigma]\dots A_{r_k}[\sigma]stk_2]\Omega_1\\
&\xrightarrow{c_1}
q', \Omega_1.
\end{align*}

By construction,

\begin{align*}
T_{bin}(c)
&=(q, A_1[A[\sigma]stk_2]\Omega_1)\\
&\xrightarrow{push_2(A_{r_{k-1}}, A_{r_k})}
(q_{r_1}, A_1[A_{r_{k-1}}[\sigma]A_{r_k}[\sigma]stk_2]\Omega_1)\\
&\xrightarrow{*}
(q_{r_{k-1}}, A_1[A_{r_1}[\sigma]\dots A_{r_k}[\sigma]stk_2]\Omega_1)\\
&\xrightarrow{id}
(p, A_1[A_{r_1}[\sigma]\dots A_{r_k}[\sigma]stk_2]\Omega_1)\\
&\xrightarrow{T_{bin}(c_1)}
(q', \Omega_1).
\end{align*}

The uniqueness of intermediate states ensures determinism.  
The induction hypothesis concludes.

\medskip
\noindent
\emph{Case 3: $push_1(A_{r_1}, \dots, A_{r_k})$.}

Let

\begin{align*}
c&=q, A_1[stk_2]\Omega_1\\
&\xrightarrow{r}
p, A_{r_1}[stk_2]\dots A_{r_k}[stk_2]\Omega_1\\
&\xrightarrow{c_1}
\dots
\xrightarrow{c_k}
q', \Omega_1.
\end{align*}

Then

\begin{align*}
T_{bin}(c)
&=q, A_1[stk_2]\Omega_1\\
&\xrightarrow{push_1(A_{r_{k-1}}, A_{r_k})}
(q_{r_1}, A_{r_{k-1}}[stk_2]A_{r_k}[stk_2]\Omega_1)\\
&\xrightarrow{*}
(q_{r_{k-1}}, A_{r_1}[stk_2]\dots A_{r_k}[stk_2]\Omega_1)\\
&\xrightarrow{id}
(p, A_{r_1}[stk_2]\dots A_{r_k}[stk_2]\Omega_1)\\
&\xrightarrow{T_{bin}(c_1)}
\dots
\xrightarrow{T_{bin}(c_k)}
(q', \Omega_1).
\end{align*}

Uniqueness follows from construction and induction hypothesis.

     \proofsubparagraph*{Direction 2}
We prove that the relation $T_{bin}^{-1}$, from the symbolic well-formed runs of $T_{bin}(\aut)$ over stacks using only symbols in $\Gamma_1$ and $\Gamma_2$, to the symbolic well-formed runs of $\aut$, is a  total function.  
The proof proceeds by structural induction on symbolic well-formed runs of $T_{bin}(\aut)$ using only symbols in $\Gamma_1$ and $\Gamma_2$.

Let $c=q, A_1[stk_2]\Omega_1 \xrightarrow{r} \xrightarrow{c_1} r, \Omega_1$ be a symbolic well-formed run of $T_{bin}(\aut)$ starting with a rule $r=(q, g_r)\to(p, \bar{a}, op_r)$ such that $op_r\in \op_\struc\cup \{pop_i\}$.  

By construction, $T_{bin}^{-1}(r)=r$. Therefore,
\[
T_{bin}^{-1}(c) = q, A_1[stk_2]\Omega_1 \xrightarrow{r} \xrightarrow{T_{bin}^{-1}(c_1)} r, \Omega_1.
\]  
By induction, there exists exactly one $c'_1$, a symbolic well-formed run of $\aut$, such that $T_{bin}^{-1}(c_1)=c'_1$.  
Hence, there exists exactly one $c'$, a symbolic well-formed run of $\aut$, such that $T_{bin}^{-1}(c)=c'$.

Let $c$ be a symbolic well-formed run of $T_{bin}(\aut)$ starting with a rule $r=(q, g_r)\to(p, \bar{a}, push_2(A_{r_{k-1}}A_{r_k}))$.  
By construction, using the uniqueness of rules on added states:

\begin{align*}
&c = && (q, A_1[A[\sigma]stk_2]\Omega_1) \\
&\xrightarrow{push_2(A_{r_{k-1}}A_{r_k})} && (q_{r_1}, A_1[A_{r_{k-1}}[\sigma] A_{r_k}[\sigma] stk_2]\Omega_1) \\
&\xrightarrow{*} && (q_{r_i}, A_1[A_{r_i}[\sigma] A_{r_{i-1}}[\sigma] \ldots  A_{r_k}[\sigma] stk_2]\Omega_1) \\
&\xrightarrow{*} && (q_{r_{k-1}}, A_1[A_{r_1}[\sigma] A_{r_2}[\sigma] \ldots  A_{r_k}[\sigma] stk_2]\Omega_1) \\
&\xrightarrow{id} && (p, A_1[A_1[\sigma] \ldots  A_{r_k}[\sigma] stk_2]\Omega_1) \\
&\xrightarrow{c_1} && (r, \Omega_1)
\end{align*}

By construction, $r$ exists in $T_{bin}(\aut)$ if and only if there exists $r'=(q, g_r)\to(p, \bar{a}, push_2(A_{r_1}\ldots A_{r_k}))$ in $\Delta$.  
By induction, there exists exactly one $c'_1$ such that $T_{bin}^{-1}(c_1)=c'_1$, and hence exactly one $c'$ such that $T_{bin}^{-1}(c)=c'$.

Let $c$ be a symbolic well-formed run starting with $r=(q, g_r)\to(p, \bar{a}, push_1(A_{r_{k-1}}A_{r_k}))$.  
By construction and uniqueness of rules on added states:

\begin{align*}
&c = && q, A_1[stk_2]\Omega \\
&\xrightarrow{push_1(A_{r_{k-1}}A_{r_k})} && (q_{r_1}, A_{r_{k-1}}[stk_2] A_{r_k}[stk_2] \Omega) \\
&\xrightarrow{*} && (q_{r_i}, A_{r_i}[stk_2] A_{r_{i-1}}[stk_2] \ldots  A_{r_k}[stk_2] \Omega) \\
&\xrightarrow{*} && (q_{r_{k-1}}, A_{r_1}[stk_2] \ldots  A_{r_k}[stk_2] \Omega) \\
&\xrightarrow{id} && (p, A_1[stk_2] \ldots  A_{r_k}[stk_2] \Omega) \\
&\xrightarrow{T_{bin}(c_1)} && q_1, A_{r_2}[stk_2] \ldots  A_{r_k}[stk_2] \Omega_1 \\
&\xrightarrow{T_{bin}(c_2)} && \ldots  \\
&\xrightarrow{T_{bin}(c_{k-1})} && q_k, A_{r_k}[stk_2] \Omega_1 \\
&\xrightarrow{T_{bin}(c_k)} && q_r, \Omega_1
\end{align*}

By construction, $r$ exists in $T_{bin}(\aut)$ if and only if $r'=(q, g_r)\to(p, \bar{a}, push_1(A_{r_1}\ldots A_{r_k}))$ exists in $\Delta$.  
Hence, we can invert the run:

\begin{align*}
&T_{bin}^{-1}(c) = && q, A_1[stk_2]\Omega_1 \\
&\xrightarrow{r'} && p, A_{r_1}[stk_2] \ldots  A_{r_k}[stk_2] \Omega_1 \\
&\xrightarrow{T_{bin}^{-1}(c_1)} && q_1, A_{r_2}[stk_2] \ldots  A_{r_k}[stk_2] \Omega_1 \\
&\xrightarrow{T_{bin}^{-1}(c_2)} && \ldots  \\
&\xrightarrow{T_{bin}^{-1}(c_{k-1})} && q_k, A_{r_k}[stk_2] \Omega_1 \\
&\xrightarrow{T_{bin}^{-1}(c_k)} && q_r, \Omega_1
\end{align*}
By induction, there exists exactly one $c'_i$ such that $T_{bin}^{-1}(c_i)=c'_i$, and hence exactly one $c'$ such that $T_{bin}^{-1}(c)=c'$.
\end{proof}

\paragraph{Second transformation}

We now define a transformation $T_{prod}$ from the \emph{computing} 
$\stack^2(\struc)$-transducer $T_{bin}(\aut)$
to a \emph{computing} transducer satisfying the 
"only $pop_1$ can produce" property 
while preserving the "unary and binary $push_i$" property.

\begin{definition}
We define
$
T_{prod}(T_{bin}(\aut))=
(\mathcal{Q}_{prod},
T,
\Delta_{prod},
\stack(\Gamma_1, \stack(\Gamma_2, \struc)),
Z,
End)
$
where 
$\Delta_{prod}
=\bigcup_{r\in\Delta_{T_{bin}(\aut)}} T_{prod}(r)$.

For a rule 
$r=(q, \mathcal{P}_{A_1}\wedge g_r)\to(p, \bar a, op_r)$
in $\Delta_{T_{bin}(\aut)}$, we define:

\begin{align*}
&\text{if } \bar a=\varepsilon,
&& T_{prod}(r)=\{r\}, \\
&\text{if } op_r=pop_1,
&& T_{prod}(r)=\{r\}, \\
&\text{otherwise}, \\
&T_{prod}(r)=
\{(q, g_r)\to(q_r, \varepsilon, op_r)\}\\
&\qquad\cup
\{(q_r, \top)\to(q'_r, \varepsilon, push_1(A_1, A_1))\}\\
&\qquad\cup
\{(q'_r, \neg\mathcal{P}_{\#_2})
\to(q'_r, \varepsilon, pop_2)\}\\
&\qquad\cup
\{(q'_r, \mathcal{P}_{\#_2})
\to(p, \bar a, pop_1)\}.
\end{align*}
\end{definition}

\begin{lemma}
Let $\aut=(\mathcal{Q}, T, \Delta, \struc, Z, End)$
be a $\stack^2(\struc)$-transducer \emph{computing} $u(\sigma)$.
Then $T_{prod}(T_{bin}(\aut))$
is a $\stack^2(\struc)$-transducer computing $u(\sigma)$.
\end{lemma}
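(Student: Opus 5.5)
The plan is the same scheme as for $T_{bin}$: promote $T_{prod}$ to a relation between symbolic well-formed runs of $T_{bin}(\aut)$ and those of $T_{prod}(T_{bin}(\aut))$ starting from stacks over $\Gamma_1\cup\Gamma_2$. We then show that $T_{prod}$ and $T_{prod}^{-1}$ are total functions, and that corresponding runs produce the same output word. This suffices, because the accepting run from $(q_0, X_1[Start_2[\sigma]\#_2]\#_1)$ is well-formed; the output of the accepting run is therefore preserved and $u(\sigma)$ is computed. Determinism of the new transducer is immediate. The rules on the fresh states $q_r$ and $q'_r$ are the only ones applicable there, and on $q'_r$ the guards $\mathcal{P}_{\#_2}$ and $\neg\mathcal{P}_{\#_2}$ are complementary. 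Every other rule is either kept unchanged or has had its guard split off without overlap. Totality then follows from the run bijection.

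The core of the argument is the case of a producing rule $r=(q,g_r)\to(p,a,op_r)$ with $op_r\neq pop_1$; all other rules are mapped to themselves and are handled by the induction hypothesis exactly as in Case 1 of $T_{bin}$. For such an $r$, from a configuration $(q, A_1[stk_2]\Omega_1)$ the new run first applies $op_r$ and reaches $(q_r, A_1[stk_2']\Omega_1)$. It then applies $push_1(A_1A_1)$, giving $(q'_r, A_1[stk_2']A_1[stk_2']\Omega_1)$. Next it drains the topmost order-$2$ stack by $pop_2$ until $\#_2$ is reached, which happens in finitely many steps because $stk_2'$ has bottom $\#_2$. Finally it performs $pop_1$ while emitting $a$, arriving at $(p, A_1[stk_2']\Omega_1)$. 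This is exactly the configuration reached by $r$ in the original transducer, with the same emitted letter. The inserted segment is well-nested up to its final $pop_1$, which cancels the inserted $push_1$, so well-formedness of the continuation $c_1$ is preserved and the induction hypothesis applies to $c_1$. When $op_r$ is itself a $push_1$ or a $pop_2$, the same computation goes through. The inserted $push_1/pop_1$ pair is balanced, so the Dyck structure of the original order-$1$ operations is untouched.

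For $T_{prod}^{-1}$, we argue by structural induction on runs of the new transducer. A run entering $q_r$ is forced, by the uniqueness of rules on fresh states, to follow exactly the segment described above. Such a segment can come only from a single producing rule $r$ of $T_{bin}(\aut)$, so it can be contracted back to one application of $r$. The remaining runs begin with an unchanged rule. Binarity of pushes is preserved, since the only new push is $push_1(A_1A_1)$.

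The main obstacle is the well-formedness bookkeeping in the inverse direction. A symbolic well-formed run of the new transducer may end on the inserted $pop_1$ rather than on an original one. We must therefore check that any symbolic sub-run starting inside the inserted gadget is not counted as a run of the original machine. I would handle this by restricting the bijection to runs whose starting state is an original state. Inside the gadget, the symbolic $\Omega_1$ below the duplicated order-$1$ atom is never observed.
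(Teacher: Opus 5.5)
\textbf{The $push_1$ case fails.} Your overall strategy matches the paper's. Its proof only says the argument is analogous to the one for $T_{bin}$ and that output is delayed by a finite deterministic detour ending in a $pop_1$. Your handling of the gadget, of the inverse direction via the forced rules on $q_r,q'_r$, and of well-formedness (restricting to runs that start in states of $T_{bin}(\aut)$) is a faithful elaboration of that. However, the sentence ``When $op_r$ is itself a $push_1$ \ldots, the same computation goes through'' is false for the gadget as defined.

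The inserted rule $(q_r,\top)\to(q'_r,\varepsilon,push_1(A_1A_1))$ replaces the topmost order-$1$ atom by two atoms labelled $A_1$. Here $A_1$ is the symbol in the guard of $r$, that is, the top symbol \emph{before} $op_r$. Such producing $push_1$ rules do occur in $T_{bin}(\aut)$, because $T_{bin}$ keeps $\bar a$ on the first binary push. Take $op_r=push_1(B_1C_1)$ with $B_1\neq A_1$:
\begin{itemize}
\item after $op_r$ the configuration is $(q_r,B_1[stk_2]C_1[stk_2]\Omega_1)$, not $(q_r,A_1[stk'_2]\Omega_1)$ as you wrote;
\item the gadget then ends in $(p,A_1[stk_2]C_1[stk_2]\Omega_1)$ instead of $(p,B_1[stk_2]C_1[stk_2]\Omega_1)$, the configuration reached by $r$.
\end{itemize}
Subsequent rules guarded by $\mathcal{P}_{B_1}$ then see the wrong order-$1$ symbol. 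So the correspondence between runs, and with it your induction and the preservation of the output, is not established in this case.

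\textbf{How to repair it.} The fix is in the construction, which the paper's one-line proof also passes over. Either of the following works:
\begin{itemize}
\item for a producing $push_1(B_1C_1)$ rule, let the gadget duplicate the new top symbol, i.e.\ use $push_1(B_1B_1)$;
\item uniformly for all producing rules, perform the detour \emph{before} $op_r$: duplicate $A_1$, drain, $pop_1$ emitting $a$, then apply $op_r$ from a fresh state with guard $\top$.
\end{itemize}
With either change your argument goes through as written.

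\textbf{Minor point.} Your claim that determinism is immediate needs the first gadget rule to keep the full guard $\mathcal{P}_{A_1}\wedge g_r$; the definition writes only $g_r$, which can create overlaps between rules from $q$ that differ only in their order-$1$ symbol.
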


\begin{proof}
The proof is analogous to the proof for $T_{bin}$.
The construction only delays output production until a $pop_1$ transition is performed.
This delay is deterministic and finite, hence it preserves runs and the computed function.
\end{proof}

\paragraph{Third transformation.}
We finally define a transformation $T_{1symb}$ from the computing 
$\stack^2(\struc)$-transducer 
$T_{prod}(T_{bin}(\aut))$
to a computing transducer satisfying the 
``only one order $1$ symbol'' property 
while preserving the 
``unary and binary $push_i$'' and 
``only $pop_1$ can produce'' properties.

\begin{definition}
Let 
$\aut=
(\mathcal{Q}, T, \Delta,
\stack(\Gamma_1, \stack(\Gamma_2, \struc)),
Z, End)$
with 
$Z=\{A_1[A_2[\sigma]\#_2]\#_1
\mid \sigma\in\struc\}$,
and let 
$\aut_{prod}
= T_{prod}(T_{bin}(\aut))$.\\
We define
\begin{align*}
T_{1symb}(\aut_{prod})=
(\mathcal{Q}_{1symb},
T,
\Delta_{1symb},
\stack(\{A, \#_1\},
\stack((\Gamma_1\times\Gamma_2)\cup\Gamma_2,
\struc)),
Z_{1symb},
End),
\end{align*}
where 
$\Delta_{1symb}
=\bigcup_{r\in\Delta_{\aut_{prod}}}
T_{1symb}(r)$
and 
$Z_{1symb}
=\{A[(A_1, A_2)[\sigma]\#_2]\#_1
\mid \sigma\in\struc\}$.

For a rule
$r=(q, \mathcal{P}_{A_1}\wedge\mathcal{P}_{A_2}\wedge g_r)
\to(p, \bar a, op_r)$
in $\Delta_{\aut_{prod}}$, we define:

\begin{align*}
&\text{if } op_r\in\op_{\struc}, \\
&T_{1symb}(r)=
\{(q, \mathcal{P}_{(A_1, A_2)}\wedge g_r)
\to(p, \bar a, op_r)\}, \\
&\text{if } op_r=pop_2, \\
&T_{1symb}(r)=
\{(q, \mathcal{P}_{(A_1, A_2)}\wedge g_r)
\to(q_r, \bar a, pop_2)\}\\
&\qquad\cup
\{(q_r, \mathcal{P}_{B_2})
\to(p, \varepsilon,
push_2((A_1, B_2)))
\mid B_2\in\Gamma_2\}, \\
&\text{if } op_r=pop_1, \\
&T_{1symb}(r)=
\{(q, \mathcal{P}_{(A_1, A_2)}\wedge g_r)
\to(p, \bar a, pop_1)\}, \\
&\text{if } op_r=push_2(B_2, C_2), \\
&T_{1symb}(r)=
\{(q, \mathcal{P}_{(A_1, A_2)}\wedge g_r)
\to(p, \bar a,
push_2((A_1, B_2), C_2))\}, \\
&\text{if } op_r=push_1(B_1, C_1), \\
&T_{1symb}(r)=
\{(q, \mathcal{P}_{(A_1, A_2)}\wedge g_r)
\to(q_r, \varepsilon,
push_2((C_1, A_2)))\}\\
&\qquad\cup
\{(q_r, \top)
\to(q'_r, \varepsilon,
push_1(A, A))\}\\
&\qquad\cup
\{(q'_r, \top)
\to(p, \varepsilon,
push_2((B_1, A_2)))\}.
\end{align*}
\end{definition}

\begin{lemma}
Let $\aut$ be a computing 
$\stack^2(\struc)$-transducer computing $u(\sigma)$.
Then 
$T_{1symb}(T_{prod}(T_{bin}(\aut)))$
is a computing $\stack^2(\struc)$-transducer computing $u(\sigma)$.
\end{lemma}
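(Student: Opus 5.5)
We follow the scheme used for $T_{bin}$: we build a correspondence between runs of $\aut_{prod}$ and runs of $T_{1symb}(\aut_{prod})$ that preserves the produced word. Then we check that $T_{1symb}$ preserves determinism, totality and the normal-form properties already obtained.

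\textbf{Invariant.} The idea of $T_{1symb}$ is to store the order-$1$ symbol of each order-$1$ atom inside the topmost order-$2$ symbol, as a pair $(A_1,A_2)$. The order-$1$ stack then only needs the single symbol $A$. We define an encoding $\mathrm{enc}$ on configurations. An order-$1$ atom $A_1[X_2[\sigma]stk_2]$ becomes $A[(A_1,X_2)[\sigma]stk_2]$. All deeper order-$2$ atoms are left unchanged, since only the topmost order-$2$ atom carries the pair. The encoding acts atom-wise on the whole order-$1$ stack. It is injective on configurations reachable from $Z$. It maps $Z$ onto $Z_{1symb}$, and it maps $End$ to $End$. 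Here we take $\#_1$ as the bottom order-$1$ atom and $\#_2$ as the bottom order-$2$ symbol, both unpaired.

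\textbf{One-step simulation.} We prove by case analysis on the rule $r$ that if $(q,c)\xrightarrow{r}(p,c')$ in $\aut_{prod}$, then $(q,\mathrm{enc}(c))$ reaches $(p,\mathrm{enc}(c'))$ in $T_{1symb}(\aut_{prod})$.
\begin{itemize}
\item \emph{Operations in $\op_\struc$ and $pop_1$.} The step is the single translated rule, with the same output $\bar a$. The guard $\mathcal{P}_{(A_1,A_2)}$ is exactly the conjunction $\mathcal{P}_{A_1}\wedge\mathcal{P}_{A_2}$ read on $\mathrm{enc}(c)$.
\item \emph{$push_2(B_2,C_2)$.} This yields $(A_1,B_2)[\sigma]C_2[\sigma]$, which is the encoding of $B_2[\sigma]C_2[\sigma]$ under $A_1$.
\item \emph{$pop_2$.} The simulation takes two steps. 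The pair is popped, and the new top $B_2$ is re-tagged by the renaming $push_2((A_1,B_2))$. The intermediate state $q_r$ must remember $A_1$, so we index $q_r$ by $A_1$ (this is implicit in $r$). If the new top is $\#_2$, we either allow a pair $(A_1,\#_2)$ or keep $\#_2$ with the guard of the following $pop_1$ adapted accordingly.
\item \emph{$push_1(B_1,C_1)$.} First $push_2((C_1,A_2))$ retags the top with $C_1$. Then $push_1(A,A)$ duplicates the order-$2$ stack. Finally $push_2((B_1,A_2))$ retags only the new top copy. This gives exactly $\mathrm{enc}$ of $B_1[stk_2]C_1[stk_2]$. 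The renaming replaces the top pair instead of stacking on it, as the semantics of $push$ in the preliminaries guarantees.
\end{itemize}
In every case only the original rule can emit $\bar a$. The auxiliary steps emit $\varepsilon$, so the word read along a run is preserved.

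\textbf{Converse and conclusion.} The auxiliary states $q_r,q'_r$ have a unique outgoing rule, or a unique applicable rule once the top symbol is fixed. Hence every run of $T_{1symb}(\aut_{prod})$ from $\mathrm{enc}(c)$ decomposes uniquely into blocks, each simulating one rule of $\aut_{prod}$. This gives the inverse map by structural induction on symbolic well-formed runs, as in Direction 2 for $T_{bin}$. Determinism follows because guards of distinct rules on $q$ stay disjoint when translated. Totality follows because the accepting run of $\aut_{prod}$ from $\sigma$ is mapped to an accepting run from $\mathrm{enc}$. Therefore $T_{1symb}(T_{prod}(T_{bin}(\aut)))$ computes $u(\sigma)$. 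The properties ``pushes of arity at most $2$'' and ``only $pop_1$ produces'' are visible directly in the definition.

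The main obstacle is the $pop_2$ case. The information $A_1$ must survive the pop in the finite control, and the boundary case where the exposed symbol is $\#_2$ must be handled consistently with the normal-form requirement that $pop_1$ is guarded by $\mathcal{P}_{\#_2}$. We will fix this first by choosing the convention that $\#_2$ is never paired and that $A_1$ is carried in the state index.
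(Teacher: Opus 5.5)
\textbf{A genuine gap: the boundary case at $\#_2$.} Your overall strategy is sound: an atom-wise encoding $\mathrm{enc}$, a block-by-block simulation, and determinism to transfer the unique accepting run. It is a self-contained replacement for what the paper does here, which is only to defer to the correctness proof of Fratani's encoding in her thesis.

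However, the convention you finally commit to is wrong, and the simulation breaks exactly where you located the difficulty. That convention is that $\#_2$ is never paired and that $A_1$ is carried in the state index.

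First, it does not match $T_{1symb}$. The retagging rules $(q_r,\mathcal{P}_{B_2})\to(p,\varepsilon,push_2((A_1,B_2)))$ range over all $B_2\in\Gamma_2$, including $\#_2$. The translated $pop_1$ rule is guarded by $\mathcal{P}_{(A_1,\#_2)}$. Every translated rule leaving an original state $p$ tests a paired top $\mathcal{P}_{(A_1,A_2)}$. So if a $pop_2$ that exposes $\#_2$ left it unpaired, the run of $T_{1symb}(\aut_{prod})$ would be stuck in state $p$, since no states of the form $(p,A_1)$ exist. Totality would then fail.

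Second, this information cannot be kept in the finite control by any construction. A $push_1(B_1,C_1)$ performed on an atom $A_1[\#_2]$ creates a lower atom $C_1[\#_2]$. Its order-$1$ symbol must survive until a later $pop_1$ exposes it, and unboundedly many such atoms can accumulate below the top. The translation of $push_1$ stores this symbol precisely by the retagging $push_2((C_1,\#_2))$. That retagging pairs $\#_2$, which contradicts your own invariant.

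\textbf{The fix.} Define $\mathrm{enc}$ so that the topmost order-$2$ atom of every order-$1$ atom is paired, whether or not it is $\#_2$. Thus $A_1[\#_2]$ is encoded as $A[(A_1,\#_2)]$. All deeper order-$2$ atoms stay unpaired, in particular a bottom $\#_2$ lying under other symbols, as in $Z_{1symb}$. With this invariant your case analysis goes through literally with the rules of the definition:
\begin{itemize}
\item $push_2((A_1,B_2),C_2)$ leaves $C_2$ unpaired;
\item $pop_1$ exposes an atom that is already encoded;
\item the $pop_2$ retagging re-pairs whatever symbol is exposed, $\#_2$ included;
\item $\mathrm{enc}(\#_1)=\#_1$, so $End$ is preserved.
\end{itemize}

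\textbf{A smaller point.} In the $push_1$ case all three translated rules emit $\varepsilon$. The original $\bar a$ is therefore discarded rather than carried by the ``original rule''. The output is preserved only because $T_{prod}$ has already ensured that every non-$pop_1$ rule of $\aut_{prod}$ has $\bar a=\varepsilon$, and you should invoke this explicitly.
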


\begin{proof}
This construction follows the encoding introduced by Fratani in Chapter 8 of her thesis~\cite{Fratani/thesis}, restricted to order $1$.
Her thesis provides a detailed correctness proof, which applies verbatim in our setting.
\end{proof}
\subsection{\cref{lem:bij_trans_to_trans_pred} proof}

Let $c$ be a symbolic well-formed-run of $\aut$ from $(q, stk.\Omega)$.\\

If the first rule $\delta$ of $c$ is a $pop_1$ rule then that is  the only rules of $c$ and $stk=X_1[\#_2]$.\\
If $\delta$ is non-productive then $c$ produce $a^0$. By definition of $T_2$ and the determinism of $\aut$, there is no rules, then $0$ accepting run, from $q, \#_2$.  \\
If $\delta$ is productive then $c$ produce $a^1$. By defininition of $T_2$ there exists an accepting run $(q, \#)\xrightarrow[T_2(\aut)]{T_2(\delta)} (\top, \#)$, since $T_2$ do not change the guard on $\#$.
By definition of $T_2$ and the determinism of $\aut$, there exists no other rules on $q, \#$. So, $T_2(\aut)$ produce $1$ accepting runs from $(q, \#)$\\

If the first rule $\delta$ of $c$ is a $push_1$ rules  then $c$  is \\
\begin{tabular}{l}
    $(q, X_1[stk]\Omega_1)\xrightarrow[\aut]{\delta}(p, X_1[stk]X_1[stk]\Omega_1)\xrightarrow[\aut]{c_1}(r, X_1[stk]\Omega_1)\xrightarrow[\aut]{c_2}(s, \Omega_1)$
\end{tabular}\\
Let say that $c_1$ produce $a^i$ and $c_2$ produce $a^j$. Then $c$ produce $a^{i+j}$.\\
By defininition of $T_2$, there exists rules $T_2(\delta)$:\\
\begin{tabular}{lll}
    &$\delta_1:$&$(q, \mathrm{g})\to(p, \varepsilon, id)$ \\
    $\forall r'\in \mathcal{Q},$&$\delta_{2_{r'}}:$&$(q, \mathrm{g}\wedge \mathcal{P}_{p, r'})\to(r', \varepsilon, id)$
\end{tabular}\\
By determinism of $\aut$ the predicates $\mathcal{P}{p, r'}(stk)$ is true if and only if $r=r'$. So there exists no accepting runs in $T_2(\aut)$ for $r\neq r'$.\\
By definition of $T_2$ and the determinism of $\aut$, there exists no other rules on $(q, stk)$.\\
By induction there is $i$ accepting runs from $(p, stk)$ and $j$ accepting runs from $(r, stk)$.\\
Since $T_2$ do not change the guard on $X[stk]$, there is $i+j$ runs from $(q, stk)$.\\

If the first rule $\delta$ of $c$ is an other type of rule then $T_2(\delta)=\delta$. Let say that $c$ produce $a^i$. By induction, there is $i$ accepting runs from $(q, stk)$.\\ 

\subsection{Complete \cref{def:transf_transpred_to_pred_gram}}
            We define $T_3(\aut_{\pred})=(N, \emptyset, \struc_{T_3(\aut_{\pred})}, S, P_{T_3(\aut_{\pred})})$ with
            \begin{itemize}
                \item $N=(\mathcal{Q}_{\pred}\times\Gamma\times \mathcal{Q}_{\pred}\times\mathcal{P} (\mathcal{Q}_{\pred}\times\mathcal{Q}_{\pred}))\cup \{S\}$ 
                \item $\struc_{T_3(\aut_{\pred})}=<\mathcal{S}, \mathcal{P}\cup\{\mathcal{P}_{M', M, X}|M', M\subseteq(\mathcal{Q}_{\pred}\times\mathcal{Q}_{\pred})\wedge X\in\Gamma\}, op>$
                \item \begin{tabular}{ll}
                    $P_{T_3(\aut_{\pred})}=$&$\{(S, \top)\to(((q_0, X, p, M),(p, \#, \top, \emptyset)), id)|p\in\mathcal{Q}_{\pred} \wedge M=\{(q, p)$\\
                    &$|((q, \mathcal{P}_{\#_2})\to(p, a, pop_1)\in \Delta)$\\
                    &$\vee(p=\bot \wedge \forall r,((q, \mathcal{P}_{\#_2})\to(r, a, pop_1)\not\in\Delta))\}\}\cup \underset{\delta\in\Delta_{\aut_{\pred}}}{\bigcup}T_3(\delta)$ 
                \end{tabular} 
            \end{itemize} 
            Where $T_3(\delta)$ is:\\ 
            \begin{tabular}{ll}
                If $\delta$ is $op_{2_{T_2}}$ then $T_3(\delta)$ =&$\{((q, Z, s, M ), \mathrm{g})\to((p, Z, s, M), op)$\\
                &$|s \in \mathcal{Q}\wedge M\subseteq(\mathcal{Q}\times\mathcal{Q})\}$\\
                If $\delta$ is $pop_{2_{T_2}}$ then $T_3(\delta)$ = &$\{((q, Z, p, M ), \mathrm{g})\to(\varepsilon)|M\subseteq(\mathcal{Q}\times \mathcal{Q})\}$\\
                If $\delta$ is $renaming_{2_{T_2}}$ then $T_3(\delta)$ = &$\{((q, Z, s, M ), \mathrm{g})\to((p, X, s, M), id)$\\
                &$|s \in \mathcal{Q}\wedge M\subseteq(\mathcal{Q}\times\mathcal{Q})\}$\\
                If $\delta$ is  $push_{2_{T_2}}$ then $T_3(\delta)$ = &$\{((q, Z, s, M), \mathrm{g}\wedge \mathcal{P}_{M', M, C})$\\
                &$\to((p, X, r, M')(r, Y, s, M), id)$\\
                &$|s\in\mathcal{Q} \wedge  p \in\mathcal{Q} \wedge M\subseteq(\mathcal{Q}\times\mathcal{Q})\}$\\
                If $\delta$ is productive $pop_{1_{T_2}}$ then $T_3(\delta)$ = &$((q, \#, \top, \emptyset), \mathrm{g})\to(\varepsilon)$\\
                If $\delta$ is second $push_{1_{T_2}}$ then $T_3(\delta)$ =&$\{((q, X, s, M), \mathrm{g}\wedge\mathcal{P}_{M', M, X})\to((r, X, s, M), id)$\\
                &$|s \in\mathcal{Q}\wedge M\subseteq(\mathcal{Q}\times\mathcal{Q})\wedge M\subseteq(\mathcal{Q}\times\mathcal{Q}) \wedge(p, r)\in M'\}$
            \end{tabular}\\
            With $M$ and $M'$ functional and total sets of pair of state, i.e. $\forall q\in\mathcal{Q}, \exists ! p\in\mathcal{Q}, (q, p)\in M$.\\    
            A predicate $\mathcal{P}_{M', M, Z} (\sigma)$ with $\sigma\in \mathcal{S}_\struc$ is true if and only if \\
            \begin{tabular}{l}
                $\forall stk\in \mathcal{S}_{\stack(\struc)}, (M=\{(q, p)\mid q, X_1[stk]\Omega_1 \underset{\aut}{\xrightarrow{*}}p, \Omega_1 \})\implies$\\
                $M'=\{(q, p)| \exists r, p\in M, (q, X_1[Z_2[\sigma]res]\Omega_1)\underset{\aut}{\xrightarrow{*}} (r, X_1[res]\Omega_1)\}$.
            \end{tabular}

\subsection{\cref{lem:bij_trans_to_alt_trans} proof}
As said in the proof sketch we prove both direction separetly.
\paragraph*{direction 1: $T_4$ is a total function}
Let $c$ be a symbolic well-formed-run of $\aut$ from $(q, X_1[stk].\Omega)$.\\

If the first rule $\delta$ of $c$ is a $pop_1$ rule then \\
\begin{tabular}{ll}
    $c=(q, X_1[\#_2]\Omega_1)\xrightarrow[\aut]{\delta}(p, \Omega_1)$
\end{tabular}\\
By definition of $T_4$, there exists an accepting run tree $((q, p), \#_2)\to(\top, \#_2)$.
By the determinism of $\aut$ and definition of $T_4$, there exists no other rules on $((q, p), \#_2)$ and there is no accepting run tree from $((q, p'), \#_2)$ for $p'\neq p $.\\

If the first rule $\delta$ of $c$ is a $push_1$ rule then \\
\begin{tabular}{ll}
    $c=(q, X_1[stk]\Omega_1)\xrightarrow[\aut]{\delta}(p, X_1[stk]X_1[stk]\Omega_1)$\\
    $\xrightarrow[\aut]{c_1}(r, X_1[stk]\Omega_1)\xrightarrow[\aut]{c_2}(s, \Omega_1)$
\end{tabular}\\
By definition of $T_4$, there exists rules \\
\begin{tabular}{ll}
    $\forall s', r' \in \mathcal{Q}$$((q, s'), \mathrm{g})\to((p, r'), \varepsilon, id)\wedge ((r', s'), \varepsilon, id)$
\end{tabular}\\
By induction, there exists a run tree \\
\begin{tikzpicture}[ level distance=2.5cm, sibling  distance=2.7cm]
    \node {$((q, s), stk)$}
        child {
            node {$((p, r), stk)$ }
            child{
                node [draw, dashed, shape border uses incircle, isosceles triangle,
                                        shape border rotate=90, minimum height=10mm] {$T_4(c_1)$}
            }
        }
        child {
            node {$((r, s), stk)$ }
            child{
                node [draw, dashed, shape border uses incircle, isosceles triangle,
                                        shape border rotate=90, minimum height=10mm] {$T_4(c_2)$}
            }
    };
\end{tikzpicture}\\
By induction, there exists no accepting run trees from $((p, r'), stk)$ for $r'\neq r$ and no accepting run trees from $((r, s'), stk)$ for $s'\neq s$.\\
Then,  there is no accepting run trees starting by rules obtained by $T_4(\delta)$ for $r'\neq r$ or $s'\neq s$.
 By the determinism of $\aut$ and definition of $T_4$, there exists no other rules on $((q, s'), stk)$ for all $s'\in \mathcal{Q}$.\\

 For other type of rules, $T_4$ is the same. So, we just make the proof for $operation$ rules as an example.\\
 
 If the first rule $\delta$ of $c$ is a $operation$ rule then \\
\begin{tabular}{ll}
    $c=(q, X_1[stk]\Omega_1)\xrightarrow[\aut]{\delta}(p, X_1[op(stk)]\Omega_1)$\\
    $\xrightarrow[\aut]{c_1}(s, \Omega_1)$
\end{tabular}\\
By definition of $T_4$, there exists rules \\
\begin{tabular}{ll}
    $\forall s \in \mathcal{Q}$$((q, s'), \mathrm{g})\to((p, s'), \varepsilon, op)$
\end{tabular}\\
By induction, there exists a run tree \\
\begin{tikzpicture}[ level distance=2.5cm, sibling  distance=2.7cm]
    \node {$((q, s), stk)$}
        child {
            node {$((p, s), op(stk))$ }
            child{
                node [draw, dashed, shape border uses incircle, isosceles triangle,
                                        shape border rotate=90, minimum height=10mm] {$T_4(c_1)$}
            }
        }
   ;
\end{tikzpicture}\\
By induction, there exists  no accepting run trees from $((p, s'), op(stk))$ for $s'\neq s$.\\
Then,  there is no accepting run trees starting by rules obtained by with $T_4(\delta)$ for $s'\neq s$.
 By the determinism of $\aut$ and definition of $T_4$, there exists no other rules on $((q, s'), stk)$ for all $s'\in \mathcal{Q}$.\\

\paragraph*{direction 2: ${T_4}^{-1}$ is a total function}

Let $\alpha$ be an accepting run tree of $\aut$ from $((q, s), stk)$.\\

If the first rule $\delta$ of $\alpha$ is $((q, s), \mathcal{P}_{\#})\to(\top, \varepsilon, id)$ then \\
\begin{tabular}{ll}
    $\alpha=((q, s), \#)\xrightarrow[T_4(\aut)]{\delta}(\top, \#)$
\end{tabular}\\
By definition of $T_4$, there exists $\delta\in \Delta_{T_4(\aut)}$ if and only if there exists a rule $(q, \mathcal{P}_\#)\to(s, \varepsilon, pop_1) \in \Delta_\aut$.
So, there exists a symbolic well-formed run  $(q, X_1[\#_2]\Omega_1)\xrightarrow[\aut]{{T_4}^{-1}(\delta)}(s, \Omega_1)$.\\
By determinism of $\aut$ , there exists no other rules on $(q, X_1[\#_2]\Omega_1)$.\\

If the first rule $\delta$ of $\alpha$ is $((q, s), \mathrm{g})\to((p, r), \varepsilon, id)\wedge ((r, s), \varepsilon, id)$ rule then $\alpha$ is \\
\begin{tikzpicture}[ level distance=2.5cm, sibling  distance=2.7cm]
    \node {$((q, s), stk)$}
        child {
            node {$((p, r), stk)$ }
            child{
                node [draw, dashed, shape border uses incircle, isosceles triangle,
                                        shape border rotate=90, minimum height=10mm] {$\alpha_1$}
            }
        }
        child {
            node {$((r, s), stk)$ }
            child{
                node [draw, dashed, shape border uses incircle, isosceles triangle,
                                        shape border rotate=90, minimum height=10mm] {$\alpha_2$}
            }
    };
\end{tikzpicture}\\
By definition of $T_4$, there exists $\delta\in \Delta_{T_4(\aut)}$ if and only if there exists a rule $(q, \mathrm{g}) \to (p, \varepsilon, push_1(X_1 X_1))\in \aut$.\\
By induction, there exists a symbolic well-formed run\\
\begin{tabular}{ll}
    $(q, X_1[stk]\Omega_1)\xrightarrow[\aut]{{T_4}^{-1}(\delta)}(p, X_1[stk]X_1[stk]\Omega_1)$\\
    $\xrightarrow[\aut]{{T_4}^{-1}(\alpha_1)}(r, X_1[stk]\Omega_1)\xrightarrow[\aut]{{T_4}^{-1}(\alpha_2)}(s, \Omega_1)$
\end{tabular}\\
By determinism of $\aut$ , there exists no other rules on $(q, X_1[stk]\Omega_1)$.\\

 For other type of rules, $T_4$ is the same. So, we just make the proof for $operation$ rules as an example.\\
 
 If the first rule $\delta$ of $\alpha$ is $((q, s), \mathrm{g})\to((p, s), \varepsilon, op)$ rule then $\alpha$ is \\
\begin{tikzpicture}[ level distance=2.5cm, sibling  distance=2.7cm]
    \node {$((q, s), stk)$}
        child {
            node {$((p, s), op(stk))$ }
            child{
                node [draw, dashed, shape border uses incircle, isosceles triangle,
                                        shape border rotate=90, minimum height=10mm] {$\alpha_1$}
            }
        }
   ;
\end{tikzpicture}\\
By definition of $T_4$, there exists $\delta\in \Delta_{T_4(\aut)}$ if and only if there exists a rule $(q, \mathrm{g}) \to (p, \varepsilon, op)\in \Delta_\aut$.\\
By induction, there exists a symbolic well-formed run\\
\begin{tabular}{ll}
    $(q, X_1[stk]\Omega_1)\xrightarrow[\aut]{{T_4}^{-1}(\delta)}(p, X_1[op(stk)]\Omega_1)$\\
    $\xrightarrow[\aut]{{T_4}^{-1}(\alpha_1)}(s, \Omega_1)$
\end{tabular}\\
By determinism of $\aut$ , there exists no other rules on $(q, X_1[stk]\Omega_1)$.

\subsection{\cref{lem:alt_trans_unic} proof}
Let $q\in\mathcal{Q}$, $u\in \mathcal{S}_{\mathcal{P}(\aut)}$ and $\psi\subseteq \mathcal{P}(\mathcal{Q}\times \mathcal{Q})$ a functional set.\\
Let $\alpha$ a run tree $(q, p), stk. \Omega\xrightarrow[T_4(\aut)]{\alpha} \underset{(q_i, p_i)\in\varphi}{\bigwedge}(q_i, p_i), \Omega$ with $\varphi\subseteq \psi$.\\
We prove that there is not $s'\neq s$, $\varphi'\neq\varphi$ and $\alpha'$ such that  $(q, p'), stk. \Omega\xrightarrow[T_4(\aut)]{\alpha'} \underset{(q_i, p_i)\in\varphi'}{\bigwedge}(q_i, p_i), \Omega$ with $\varphi'\subseteq \psi$.\\

If the first rule  $\delta$ of $\alpha$ is $((q, s), \mathrm{g})\to((p, r), \varepsilon, id)\wedge ((r, s), \varepsilon, id)$ then $\alpha$ is \\
\begin{tikzpicture}[ level distance=2.5cm, sibling  distance=2.7cm]
    \node {$((q, s), stk)$}
        child {
            node {$((p, r), stk)$ }
            child{
                node [draw, dashed, shape border uses incircle, isosceles triangle,
                                        shape border rotate=90, minimum height=10mm] {$\alpha_1$}
            }
        }
        child {
            node {$((r, s), stk)$ }
            child{
                node [draw, dashed, shape border uses incircle, isosceles triangle,
                                        shape border rotate=90, minimum height=10mm] {$\alpha_2$}
            }
    };
\end{tikzpicture}\\
We note  $(p, r), stk.\Omega\xrightarrow[T_4(\aut)]{\alpha_1} \underset{(q_i, p_i)\in\varphi_1}{\bigwedge}(q_i, p_i), \Omega$ \\and $(r, s), stk.\Omega\xrightarrow[T_4(\aut)]{\alpha_2} \underset{(q_i, p_i)\in\varphi_2}{\bigwedge}(q_i, p_i), \Omega$.

By defininition of $T_4$, there exists, for all $r' \in \mathcal{Q}$, rule:
\begin{tabular}{ll}
    $((q, s), \mathrm{g})\to((p, r'), \varepsilon, id)\wedge ((r', s), \varepsilon, id)$
\end{tabular}\\
By induction, for all $r'\neq r$ there is no $\varphi'_1\subseteq \psi$ such that $(p, r'), stk.\Omega\xrightarrow[T_4(\aut)]{\alpha'_1} \underset{(q_i, p_i)\in\varphi'_1}{\bigwedge}(q_i, p_i), \Omega$.\\
By induction,  there is no $\varphi_1 \neq \varphi'_1$ such that $(p, r), stk.\Omega\xrightarrow[T_4(\aut)]{\alpha'_1} \underset{(q_i, p_i)\in\varphi'_1}{\bigwedge}(q_i, p_i), \Omega$ with $\varphi'_1\subseteq \psi$.\\
and there is no $\varphi_2 \neq \varphi'_2$ such that $(r, s), stk.\Omega\xrightarrow[T_4(\aut)]{\alpha'_2} \underset{(q_i, p_i)\in\varphi'_2}{\bigwedge}(q_i, p_i), \Omega$ with $\varphi'_2\subseteq \psi$.\\
So, there is no $\varphi \neq \varphi'$ such that $(q, s), stk.\Omega\xrightarrow[T_4(\aut)]{\alpha'} \underset{(q_i, p_i)\in\varphi'}{\bigwedge}(q_i, p_i), \Omega$ with $\varphi'\subseteq \psi$.

For $s'\neq s$ there is a rule $(q, s'), stk\xrightarrow[T_4(\aut)]((p, r), \varepsilon, id)\wedge ((r, s'), \varepsilon, id)$.\\
By induction, there is no $\varphi'_2\subseteq \psi$ such that $(r, s'), stk.\Omega\xrightarrow[T_4(\aut)]{\alpha'_2} \underset{(q_i, p_i)\in\varphi'_2}{\bigwedge}(q_i, p_i), \Omega$.\\
So, there is no $\varphi \neq \varphi'$ or $s'\neq s$ such that $(q, s'), stk.\Omega\xrightarrow[T_4(\aut)]{\alpha'} \underset{(q_i, p_i)\in\varphi'}{\bigwedge}(q_i, p_i), \Omega$ with $\varphi'\subseteq \psi$.\\

If the first rule  $\delta$ of $\alpha$ is an other type of rule then $\alpha$ is \\
\begin{tikzpicture}[ level distance=2.5cm, sibling  distance=2.7cm]
    \node {$((q, s), stk)$}
        child {
            node {$((p, s), stk')$ }
            child{
                node [draw, dashed, shape border uses incircle, isosceles triangle,
                                        shape border rotate=90, minimum height=10mm] {$\alpha_1$}
            }
        }
   ;
\end{tikzpicture}\\
By induction, there exists no $\varphi'\neq \varphi$ or $s'\neq s$  $(p, s'), stk'.\Omega\xrightarrow[T_4(\aut)]{\alpha_1} \underset{(q_i, p_i)\in\varphi'}{\bigwedge}(q_i, p_i), \Omega$ with $\varphi'\subseteq \psi$. \\
By definition of $T_4$ and determinism of $\aut$, the only rule on $(q, s'), stk.\Omega$ is $((q, s'), \mathrm{g})\to ((p, s'), \varepsilon, op)$.\\
Consequently, there is no $\varphi \neq \varphi'$ or $s'\neq s$ such that $(q, s'), stk.\Omega\xrightarrow[T_4(\aut)]{\alpha'} \underset{(q_i, p_i)\in\varphi'}{\bigwedge}(q_i, p_i), \Omega$ with $\varphi'\subseteq \psi$.\\
\subsection{\cref{lem:bij_alt_trans_to_proof_gram} proof}
 As said the proof sketch, we consider a promotion $T_5$ as a relation from the symbolic well-formed run trees of $\aut_{\alt}$ to the derivation trees of $T_5(\aut_{\alt})$ rooted in $(X,(q, p)\to\varphi)[\sigma]$.
The relation $T_5$ associates symbolic well-formed run tree $(q, p), X[\sigma]\Omega \xrightarrow{c} \varphi, \Omega$ with derivation  trees from $(X,(q, p)\to\varphi)[\sigma]$.\\

\paragraph*{direction 2: ${T_5}^{-1}$ is a total function}
    As stated in the proof sketch, the functionality of ${T_5}^{-1}$ is direct since there is at most one symbolic well-formed run tree from $q, X[\sigma]$ to $\varphi$ functional as established in \cref{lem:alt_trans_unic}.  \\

    We prove that ${T_5}^{-1}$ is total.\\
    Let $t$ be a derivation tree of $T_5(\aut_\alt)$ from $(Z,(q, s)\to\varphi)[\sigma]$.

    If $t$ is \\
    \begin{tikzpicture}[ level distance=2.5cm, sibling  distance=2.7cm]
        \node {$(Z,(q, s)\to (p, s))[\sigma]$}
            child {
                node {$\varepsilon$ }   
        };
    \end{tikzpicture}\\
    Then, there is a rule $r=(((Z,(q, s)\to(p, s)), g) \to (\varepsilon, id)) \in P_{T_5(\aut_{alt})}$.\\
    By definition of $T_5$, there is such rule if and only if there is a rule $((q, s), g)\to((p, s), \varepsilon, id)\in \Delta_{Alt}$. So, there is a well-formed run tree $(q, p), Z[\sigma]\Omega)\xrightarrow[\aut_{alt}]{{T_5}^{-1}(r)} (\top, \#)$. \\

    If $t$ is \\
    \begin{tikzpicture}[ level distance=2.5cm, sibling  distance=2.7cm]
        \node {$(Z,(q, s)\to\varphi)[\sigma]$}
            child {
            node {$(Z,(p, s)\to\varphi)[op(\sigma)]$ }
            child{
                node [draw, dashed, shape border uses incircle, isosceles triangle,
                                        shape border rotate=90, minimum height=10mm] {$t_1$}
            }
        };
    \end{tikzpicture}\\
    Then, there is a rule $r=((Z,((q, s)\to\varphi)), g)\to((Z,((p, s)\to\varphi)), op)) \in P_{T_5(\aut_{alt})}$.\\
    By definition of $T_5$, there is such rule if and only if there is a rule $((q, s), g)\to((p, s), \varepsilon, op)\in \Delta_{Alt}$.\\
    By induction on $t_1$, there is a symbolic well-formed run $((p, s), Z[op(\sigma)]\Omega)\xrightarrow[\aut_{\alt}]{{T_5}^{-1}(t_1)}((\varphi, \Omega)$.\\
    So, there is a symbolic well-formed run tree $((q, s), Z[\sigma]\Omega)\xrightarrow[\aut_{\alt}]{{T_5}^{-1}(r)}((p, s), Z[op(\sigma)]\Omega)\xrightarrow[\aut_{\alt}]{{T_5}^{-1}(t_1)}(\varphi, \Omega)$.\\

    If $t$ is \\
    \begin{tikzpicture}[ level distance=2.5cm, sibling  distance=2.7cm]
        \node {$(Z,(q, s)\to\varphi)[\sigma]$}
            child {
            node {$(X,(p, s)\to\varphi)[\sigma]$ }
            child{
                node [draw, dashed, shape border uses incircle, isosceles triangle,
                                        shape border rotate=90, minimum height=10mm] {$t_1$}
            }
        };
    \end{tikzpicture}\\
    Then, there is a rule $r=((Z,((q, s)\to\varphi)), g)\to((X,((p, s)\to\varphi)), id)) \in P_{T_5(\aut_{alt})}$.\\
    By definition of $T_5$, there is such rule if and only if there is a rule $((q, s), g)\to((p, s), \varepsilon, push_2(X))\in \Delta_{Alt}$.\\
    By induction on $t_1$, there is a symbolic well-formed run $((p, s), X[\sigma]\Omega)\xrightarrow[\aut_{\alt}]{{T_5}^{-1}(t_1)}((\varphi, \Omega)$.\\
    So, there is a symbolic well-formed run tree $((q, s), Z[\sigma]\Omega)\xrightarrow[\aut_{\alt}]{{T_5}^{-1}(r)}((p, s), X[\sigma]\Omega)\xrightarrow[\aut_{\alt}]{{T_5}^{-1}(t_1)}(\varphi, \Omega)$.\\

    If $t$ is \\
    \begin{tikzpicture}[ level distance=2.5cm, sibling  distance=3cm]
        \node {$(Z,(q, s)\to\varphi_1\cup\varphi_2)[\sigma]$}
            child {
                node {$(Z,(p, r)\to\varphi_1)[\sigma]$ }
                child{
                    node [draw, dashed, shape border uses incircle, isosceles triangle,
                                            shape border rotate=90, minimum height=10mm] {$t_1$}
                }
            }
        child {
            node {$(Z,(r, s)\to\varphi_2)[\sigma]$ }
            child{
                node [draw, dashed, shape border uses incircle, isosceles triangle,
                                        shape border rotate=90, minimum height=10mm] {$t_2$}
            }
        };
    \end{tikzpicture}\\
    Then, there is a rule $r=(Z,(((q, s))\to(\varphi_1\cup\varphi_2)), g)\to$\\$((Z, ((p, r)\to \varphi_1))(Z, ((r, s)\to \varphi_2)), id)\in P_{T_5(\aut_{alt})}$.\\
    By definition of $T_5$, there is such rule if and only if there is a rule $((q, s), \mathrm{g})\to((p, r), \varepsilon, id)\wedge ((r, s), \varepsilon, id)\in \Delta_{Alt}$.\\
    By induction on $t_1$, there is a symbolic well-formed run $((p, r), Z[\sigma]\Omega)\xrightarrow[\aut_{\alt}]{{T_5}^{-1}(t_1)}((\varphi_1, \Omega)$.\\
    By induction on $t_2$, there is a symbolic well-formed run $((r, s), Z[\sigma]\Omega)\xrightarrow[\aut_{\alt}]{{T_5}^{-1}(t_2)}((\varphi_2, \Omega)$.\\
    So, there is a symbolic well-formed run tree $((q, s), Z[\sigma]\Omega)\xrightarrow[\aut_{\alt}]{{T_5}^{-1}(r)}((p, r), X[\sigma]\Omega)\wedge ((r, s), X[\sigma]\Omega)$\\
    $\xrightarrow[\aut_{\alt}]{{T_5}^{-1}(t_1)\wedge {T_5}^{-1}(t_2)}(\varphi_1\cup\varphi_2, \Omega)$.\\

    If $t$ is \\
    \begin{tikzpicture}[ level distance=2.5cm, sibling  distance=4.5cm]
        \node {$(Z,(q, s)\to\varphi)[\sigma]$}
            child {
                node {$(X,(p, s)\to\psi)[\sigma]$ }
                child{
                    node [draw, dashed, shape border uses incircle, isosceles triangle,
                                            shape border rotate=90, minimum height=10mm] {$t_1$}
                }
            }
        child {node {$(Y,(\psi\to\varphi)[\sigma]$ }
            child {
                node {$Y,(q_1, s_1)\to\varphi_1)[\sigma]$ }
                child{
                    node [draw, dashed, shape border uses incircle, isosceles triangle,
                                            shape border rotate=90, minimum height=10mm] {$t'_1$}
                }
            }
            child {
                node {$Y, \psi\backslash\{(q_1, s_1)\}\to\varphi_2)[\sigma]$ }
                child {
                    node {$Y,(q_2, s_2)\to\varphi_3)[\sigma]$ }
                    child{
                        node [draw, dashed, shape border uses incircle, isosceles triangle,
                                                shape border rotate=90, minimum height=10mm] {$t'_2$}
                    }
                }
                child {
                    node {$Y, \psi\backslash\{(q_1, s_1),(q_2, s_2)\}\to\varphi_4)[\sigma]$ }
                    child {
                    node {$Y,(q_3, s_3)\to\varphi_5)[\sigma]$ }
                        child{
                            node [draw, dashed, shape border uses incircle, isosceles triangle,
                                                    shape border rotate=90, minimum height=10mm] {$t'_3$}
                        }
                    }child[dashed]{
                        node {}
                    }
                }
            }
        };
    \end{tikzpicture}\\
    Then, there is a rule $r=(Z,(((q, s))\to\varphi), g)\to$\\
    $((X, ((p, s)\to \psi))(Y, (\psi\to \varphi)), id)\in P_{T_5(\aut_{alt})}$.\\
    By definition of $T_5$, there is such rule if and only if there is a rule $((q, s), \mathrm{g})\to((p, s), \varepsilon, push_2)\in \Delta_{Alt}$.\\
    And, there is a rule $r_i=(Y,(((\psi)\to\varphi), g)\to$\\
    $((Y, ((q_i, s_i)\to \varphi_{2i-1}))(Y, (\psi\backslash\{(q_j, s_j)|j\leq i\}\to \varphi_{2i})), id)\in P_{T_5(\aut_{alt})}$.\\
    By induction on $t_1$, there is a symbolic well-formed run $((p, s), X[\sigma]\Omega)\xrightarrow[\aut_{\alt}]{{T_5}^{-1}(t_1)}((\psi, \Omega)$.\\
    By induction on $t'_1$, there is a symbolic well-formed run $(q_i, s_i), Y[\sigma]\Omega)\xrightarrow[\aut_{\alt}]{{T_5}^{-1}(t_i)}((\varphi_{2i+1}, \Omega)$.\\
    So, there is a symbolic well-formed run tree $((q, s), Z[\sigma]\Omega)\xrightarrow[\aut_{\alt}]{{T_5}^{-1}(r)}((p, r), X[\sigma]Y[\sigma]\Omega)$\\
    $\xrightarrow[\aut_{\alt}]{ {T_5}^{-1}(t_1)}(\psi, Y[\sigma]\Omega)$\\
    $\xrightarrow[\aut_{\alt}]{ \bigwedge{T_5}^{-1}(t'_i)}(\varphi, \Omega)$.\\

    If $t$ is \\
    \begin{tikzpicture}[ level distance=2.5cm, sibling  distance=2.7cm]
        \node {$(\#,(q, s)\to\top)[\sigma]$}
            child {
                node {$\varepsilon$ }   
        };
    \end{tikzpicture}\\
    Then, there is a rule $(((\#,(q, s)\to\top), \top) \to (\varepsilon, id)) \in P_{T_4(\aut_{alt})}$.\\
    By definition of $T_4$, there is such rule if and only if there is a rule $((q, p), \mathcal{P}_{\#})\to(\top, \varepsilon, id)\in \Delta_{Alt}$. So, there is an accepting run $(q, p), \#)\to (\top, \#)$. \\

    \begin{tabular}{ll}
                If $\delta$ is a $pop_1$ rule then $T_4(\delta)$ =&$\{((q, p), \mathcal{P}_{\#})\to(\top, \varepsilon, id)\}$\\
                If $\delta$ is a $push_1$ rule then $T_4(\delta)$ =&$\{((q, s), \mathrm{g})\to((p, r), \varepsilon, id)\wedge ((r, s), \varepsilon, id)$\\
                &$|r\in\mathcal{Q}\wedge s\in\mathcal{Q}\}$\\
                Else $T_4(\delta)$ =&$\{((q, s), \mathrm{g})\to((p, s), \varepsilon, op_r)|s\in\mathcal{Q}\}$
            \end{tabular}

                \begin{tabular}{ll}
                    $Decomp$=&$\{((X,((q, p)\cup\psi)\to(\varphi_1\cup\varphi_2)), \top)$\\
                    &$\to((X, ((q, p)\to \varphi_1))(A, (\psi\to \varphi_2)), id)$\\
                    &$|\psi, \varphi_1, \varphi_2\in \mathcal{P}(\mathcal{Q}_{\alt}), \forall v\in\psi, (q_i, q_j)<v \}$\\
                    &$|\varphi\in \mathcal{P}(\mathcal{Q}_{\alt})\}$\\
                    If $r$ is  $push_{2_{T_4}}$ then $T_5(r)$ =&$\{(Z,(((q, s))\to\varphi), g)\to$\\
                    &$((X, ((p, s)\to \psi))(Y, (\psi\to \varphi)), id)$\\
                    &$|\psi, \varphi\subseteq\mathcal{P} (\mathcal{Q}_{\alt}\text{ functional })\}$\\
                \end{tabular}     
\paragraph*{Technical lemma}
    We prove by a structural induction on derivation tree  that there exists at most one derivation tree rooted in  $(A,(q, s)\to\varphi)$ with $\varphi$ functional.\\ 
    Here is an example for rules obtained by $T_1$ on $push_{1_{\alt}}$ rules.\\
    We take a derivation tree $t$ from $(A,(q, p)\to\varphi)$ starting by a a rule $r$ .\\
    If $t$ is:   \\
    \begin{tikzpicture}[ level distance=2.5cm, sibling  distance=2.7cm]
        \node {$(Z,(q, s)\to\varphi)[\sigma]$}
            child {
            node {$(Z,(p, r)\to\varphi_1)[\sigma]$ }
            child{
                node [draw, dashed, shape border uses incircle, isosceles triangle,
                                        shape border rotate=90, minimum height=10mm] {$t_1$}
                }
            }child {
            node {$(Z,(r, s)\to\varphi_2)[\sigma]$ }
            child{
                node [draw, dashed, shape border uses incircle, isosceles triangle,
                                        shape border rotate=90, minimum height=10mm] {$t_2$}
                }
            }
        
        ;
    \end{tikzpicture}\\
    then $r=((Z,((q,s)\to\varphi)),g)\to ((Z,((p,r)\to\varphi_1))(Z,((r,s)\to\varphi_1)),id)$.\\
    By \cref{lem:alt_trans_unic},  \\
    \begin{tabular}{ll}
        $\exists^{\leq1}r\in\mathcal{Q}_\aut, $&$\varphi_1\subseteq\varphi, \alpha\in\mathcal{T} (\Delta_{\aut_\alt}), (p, r)Z[\sigma]\Omega\underset{\aut_\alt}{\xrightarrow{\alpha}}\underset{(q_i, p_i)\in\varphi_1}{\bigwedge} (q_i, p_i), \Omega$.\\
        &$\varphi_2\subseteq\varphi, \alpha\in\mathcal{T} (\Delta_{\aut_\alt}), (r, s)Z[\sigma]\Omega\underset{\aut_\alt}{\xrightarrow{\alpha}}\underset{(q_i, p_i)\in\varphi_2}{\bigwedge} (q_i, p_i), \Omega$.\\
    \end{tabular}\\
    So,by the the totallity of ${T_5}^{-1}$ ,  there no derivation trees $(Z,(p, r')\to\varphi'_1)[\sigma]$ and $(Z,(r, s)\to\varphi'_2)[\sigma]$ for  $r\neq r'$, $\varphi'_1\neq \varphi_1$ and $\varphi'_2\neq \varphi_2$. \\
    By induction, $t_1$ is the only derivation tree from $(Z,(p, r)\to\varphi)[\sigma]$ and $t_2$ is the only derivation tree from $(Z,(r, s)\to\varphi)[\sigma]$.\\
    Then, $t$ is the only derivation tree from $(Z,(q, s)\to\varphi)[\sigma]$.\\

    For each case of this induction, we proceed similarly.\\
    There is a minor difficulty for  $push_{2_{T_4}}$ due to the add of $\psi$. Which is solved by  the functionalty of $\psi$ allowing the use of \cref{lem:alt_trans_unic} and the order $>$ on state of $\aut_{alt}$, ensuring the uniqueness on the separation made by $Decomp$ ans $\psi$. 
\paragraph*{direction 2: ${T_5}$ is a total function}
    As stated in the proof sketch, the functionality of ${T_5}$ is direct since there is at most one derivation tree rooted in  $(A,(q, s)\to\varphi)$ with $\varphi$ functional.  \\
    
    We prove that ${T_5}$ is total by structural induction on well formed run trees of $\aut_\alt$ similarly to the proof of totality for ${T_5}^{-1}$.
\end{document}